\newif\ifDRAFT 
\definecolor{DarkRed}{rgb}{0.5,0.1,0.1}
\definecolor{DarkBlue}{rgb}{0.1,0.1,0.5}
\definecolor{ForestGreen}{rgb}{0.1333,0.5451,0.1333}
\theoremstyle{plain}
\newtheorem{theorem}{Theorem}[section]
\newtheorem{lemma}[theorem]{Lemma}
\newtheorem{corollary}[theorem]{Corollary}
\newtheorem{observation}[theorem]{Observation}
\newtheorem{claim}[theorem]{Claim}
\newtheorem{conjecture}{Conjecture}[section]
\theoremstyle{definition}
\newtheorem{definition}[theorem]{Definition}
\newtheorem{remark}[theorem]{Remark}
\newtheorem{problem}[conjecture]{Problem}
\crefname{equation}{Eqn.}{Eqns.}
\DeclareSymbolFont{bbold}{U}{bbold}{m}{n}
\DeclareSymbolFontAlphabet{\mathbbold}{bbold}
\newcommand{\IGNORE}[1]{}
\newcommand{\T}{\mathcal{T}}
\newcommand{\D}{\mathcal{D}}
\newcommand{\poly}{\operatorname{poly}}
\newcommand{\TEOracle}{\mathsf{TEcutdet}}
\newcommand{\USOracle}{\mathsf{UScutdet}}
\newcommand{\FEOracle}{\mathsf{FewTcutdet}}
\newcommand{\cO}{\mathcal{O}}
\newcommand{\cF}{\mathcal{F}}
\newcommand{\sC}{\mathscr{C}}
\title{New Oracles and Labeling Schemes for Vertex Cut Queries}
    \author{Anonymous Authors}
    \author{
    Yonggang Jiang\thanks{MPI-INF and Saarland University, Germany. Email: \texttt{yjiang@mpi-inf.mpg.de}.}
    \and
    Merav Parter\thanks{Weizmann Institute, Israel. Email: \texttt{merav.parter@weizmann.ac.il}. Supported by the European Research Council (ERC) under the European Union’s Horizon 2020 research and
    innovation programme, grant agreement No. 949083.}
    \and
    Asaf Petruschka\thanks{Weizmann Institute. Email: \texttt{asaf.petruschka@weizmann.ac.il}.  Supported by an Azrieli Foundation fellowship.}
    }
\date{}
\begin{document}

\maketitle
\pagenumbering{roman}

\begin{abstract}
We study the succinct representations of vertex cuts by centralized oracles and labeling schemes.
For an undirected $n$-vertex graph $G = (V,E)$ and integer parameter $f \geq 1$, the goal is supporting vertex cut queries: Given $F \subseteq V$ with $|F| \leq f$, determine if $F$ is a vertex cut in $G$.
In the centralized data structure setting, it is required to preprocess $G$ into an \emph{$f$-vertex cut oracle} that can answer such queries quickly, while occupying only small space. 
In the labeling setting, one should assign a \emph{short label} to each vertex in $G$, so that a cut query $F$ can be answered by merely inspecting the labels assigned to the vertices in $F$.

While the ``$st$ cut variants'' of the above problems have been extensively studied and are known to admit very efficient solutions, the basic (global) ``cut query'' setting is essentially open (particularly for $f > 3$).
This work achieves the first significant progress on these problems:

\begin{itemize}
\item \textbf{$f$-Vertex Cut Labels:}
Every $n$-vertex graph admits an $f$-vertex cut labeling scheme, where the labels have length of $\tilde{O}(n^{1-1/f})$ bits (when $f$ is polylogarithmic in $n$).
This nearly matches the recent lower bound given by Long, Pettie and Saranurak (SODA 2025).

\item \textbf{$f$-Vertex Cut Oracles:}
For $f=O(\log n)$, every $n$-vertex graph $G$ admits $f$-vertex cut oracle with $\tilde{O}(n)$ space and $\tilde{O}(2^f)$ query time.
We also show that our $f$-vertex cut oracles for every $1 \leq f \leq n$ are optimal up to $n^{o(1)}$ factors (conditioned on plausible fine-grained complexity conjectures).
If $G$ is $f$-connected, 
i.e., when one is interested in \emph{minimum} vertex cut queries, 
the query time improves to $\tilde{O}(f^2)$, for any $1 \leq f \leq n$.
\end{itemize}

Our $f$-vertex cut oracles are based on a special form of hierarchical expander decomposition that satisfies some ``cut respecting'' properties. Informally, we show that any $n$-vertex graph $G$ can be decomposed into terminal vertex expander graphs that ``capture'' all cuts of size at most $f$ in $G$. The total number of vertices in this graph collection is $\tilde{O}(n)$. 
We are hopeful that this decomposition will have further applications (e.g., to the dynamic setting).

\end{abstract}

\newpage

\tableofcontents
\newpage
\pagenumbering{arabic}

\section{Introduction }

The study of \emph{vertex cuts} in graphs has a long and rich history, dating back to works by Menger (1927, \cite{menger1927allgemeinen}) and Whitney (1932, \cite{whitney1932congruent}). 
In recent years, substantial progress has been made in understanding vertex cuts from computational standpoints
(see e.g.,~\cite{Censor-HillelGK14,NanongkaiSY19,LiNPSY21,SaranurakY22,PettieSY22,HuangLSW23,JiangM23,BJMY2024,JNDSY2025} and references therein),
helping to bridge long-standing gaps with edge cuts.
A vast majority of this work has been focused on computing the vertex connectivity, i.e., the minimum size of a vertex cut.
Despite exciting advancements, (global) vertex cuts continue to pose intriguing open algorithmic challenges, being a topic of significant interest and activity.
In this paper, we study vertex cuts from a \emph{data structures} perspective, examining centralized \emph{oracles} and distributed \emph{labeling schemes};
both these data structures fit the following description:
\begin{problem}[$f$-Vertex Cut Data Structure]\label{problem:f-vertex-cut-ds}
Given an undirected graph $G = (V,E)$ and integer parameter $f \geq 1$, the goal is to preprocess it into a data structure that answers ``is-it-a-cut'' queries: on query $F \subseteq V$ s.t.\ $|F| \leq f$, decide if $F$ is a vertex cut in $G$ (i.e., if $G-F$ is disconnected). 
\end{problem}

In the centralized oracle setting, we measure the \emph{space}, \emph{query time} and \emph{preprocessing time} complexities.
In the labeling scheme setting, the preprocessing should produce a succinct label $L(v)$ for every vertex $v \in V$, and the query algorithm is restricted to merely using the information stored in the labels of the query vertices $F$.
The most important complexity measure is the (maximum) \emph{label length}
$\max_{v \in V} |L(v)|$, where $|L(v)|$ denotes the bit-length of $L(v)$.

The ``$st$ variants'' of the above data structures, commonly referred to as $f$-\emph{vertex failure connectivity oracles/labels}, have been introduced by Duan and Pettie \cite{DuanP10}  and extensively studied in recent years~\cite{DuanP20,HenzingerN16,BrandS19,PilipczukSSTV22,kosinas:LIPIcs.ESA.2023.75,HuK024,LongW24,ParterP22a,ParterPP24,LongPS25}.
In the $st$ variant, the query consists of $F$ along with two vertices $s,t \in V$, and should determine if $s,t$ are connected in $G-F$.
(This is sometimes split further into an \emph{update} phase where only $F$ is revealed, followed by connectivity queries in $G-F$ of different vertex pairs $s,t$.)
The current state-of-the-art oracles have converged into deterministic $O(m) + (fn)^{1+o(1)} + 
\tilde{O}(f^2 n)$ preprocessing time, $\widetilde{O}(f n)$ space, $\widetilde{O}(f^2)$ update time and $\widetilde{O}(f)$ query time, by Long and Wang~\cite{LongW24}.%
        \footnote{
        Throughout, the $\tilde{O}(\cdot)$ notation hides $\poly \log n$ factors.
        }
All these complexities are almost optimal due to (conditional) lower bounds by Long and Saranurak~\cite{LongS22}.
As for labels, the current record is label length of $\tilde{O}(f^2)$ bits (randomized) or $\tilde{O}(f^4)$ bits (deterministic) by Long, Pettie and Saranurak~\cite{LongPS25}, where the known lower bound is  $\Omega(f + \log n)$~\cite{ParterPP24}.

However, much less is known on global vertex cut data structures.
To this date, efficient $f$-vertex cut oracles are known only for $f\leq 3$~\cite{BattistaT89,CohenBKT93}.
Pettie and Yin~\cite{PettieY21} posed the question of designing efficient $f$-vertex cut oracles for any $f$, even under the simplified assumption that the given graph is $f$-vertex connected.
Long and Saranurak~\cite{LongS22} have shown that rather surprisingly, global $f$-vertex cut oracles must have much higher complexities then their $st$ variants: even when $f=n^{o(1)}$, the query time is at least $n^{1-o(1)}$ (with $\poly(n)$ preprocessing time).
Such phenomena appear also with labeling schemes: Parter, Petruschka and Pettie~\cite{ParterPP24} posed the question of designing global vertex cut labels and showed their length must be at least exponential in $f$, later strengthened to $\Omega(n^{1-1/f} /f)$ in~\cite{LongPS25}.
Providing non-trivial upper bounds for vertex cut labels is open for any $f \geq 2$.

\paragraph{Our Contribution.}
This work provides the first significant progress on global $f$-vertex cut data structures (\Cref{problem:f-vertex-cut-ds}), both centralized oracles and labeling schemes, obtaining almost-optimal bounds (possibly conditional on popular conjectures).
In light of the above-mentioned progress on the $st$-cut variants, our data structures are all generally based on reducing a single global cut query into a bounded number of $st$-cut queries.
This strategy is inspired by the recent exciting line of works and breakthrough results in the \emph{computational} setting, that are all based on this global to $st$ reduction scheme. A prominent example is the randomized algorithm by Li et al.~\cite{LiNPSY21} for computing (global) vertex connectivity in polylogarithmic max-flows; some other instances are~\cite{Cen0NPSQ21,NSYArxiv23,HeHS24}.
We now turn to discuss our results in depth.

\paragraph{Minimum Vertex Cut Oracles.}
We address the open problem posed by Pettie and Yin \cite{PettieY21} of answering queries of the following form: Is $F$ a \emph{minimum} vertex cut in $G$?
This can be viewed as a $f$-vertex cut oracles for $f$-vertex connected graphs.
We show:

\begin{theorem}[Minimum Vertex Cut Oracles] \label{thm:cut-theorem-fconnected}
    Let $G = (V,E)$ be a graph with $n$ vertices and $m$ edges.
    Let $f \geq 1$ be such that $G$ is $f$-vertex connected.
    Then, there is a deterministic minimum vertex cut oracles for $G$ with:
    \begin{itemize}
        \item $\tilde{O}(fn)$ space,
        \item $\tilde{O}(f^2)$ query time, and
        \item $O(m) + \tilde{O}(f^2 n) + \tilde{O}((fn)^{1+\delta})$ preprocessing time (for any constant $\delta > 0$).
    \end{itemize}
    The last term in the preprocessing time can be further improved to $fn^{1+o(1)}$, at the cost of increasing the space and query time by $n^{o(1)}$ factors.
\end{theorem}
Our space bound improves over the $O(f^2 n \log n)$ bound obtained by the subsequent work of~\cite{Kosinas25}.
In fact, by~\Cref{thm:spaceLB} which is discussed later, the space bound is optimal up to polylogartimic factors, irrespectively of query or preprocessing time.
Our query time also improves over the $O(f^4 \log n)$ bound of~\cite{Kosinas25}. 
We note that $O(f^2)$ is a natural barrier for our approach, which reduces a (global) cut query $F$ into a bounded number of $st$-connectivity queries in $G-F$.
In the latter setting, this query (or more precisely, update) time is known to be conditionally tight~\cite{LongS22} (i.e., already for a \emph{single} $st$-cut query); thus, an improvement seems to call for an entirely different approach. %
\paragraph{General Vertex Cut Oracles.}
Next, we address the problem of oracles that answer general $f$-vertex cut queries, where the graph is not necessarily $f$-connected.
We show: 
\begin{theorem}[$f$-Vertex Cut Oracles]\label{thm:main-cut-theorem}
    Let $G = (V,E)$ be a graph with $n$ vertices and $m$ edges.
    Let $f = O(\log n)$ be a nonnegative integer.
    There is a deterministic $f$-vertex cut oracle for $G$ with:
    \begin{itemize}
        \item $\tilde{O}(n)$ space,
        \item $\tilde{O}(2^{|F|})$ query time (where $F \subseteq V$ s.t.\ $|F| \leq f$ is the given query), and
        \item $O(m) + \tilde{O}(n^{1+\delta})$ preprocessing time (for any constant $\delta > 0$).%
        \footnote{
        Decreasing $\delta$ leads to larger $\poly \log n$ factors hidden in the $\tilde{O}(\cdot)$ notations.}
    \end{itemize}
    The last term in the preprocessing time can be further improved to $n^{1+o(1)}$, at the cost of increasing the space and query time by $n^{o(1)}$ factors.  
\end{theorem}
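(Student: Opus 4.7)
The plan is to turn a single global cut query into a bounded number of $st$-vertex-failure-connectivity queries, riding on top of the hierarchical terminal-expander decomposition advertised in the abstract. At preprocessing time, decompose $G$ into a family $\{G_1,\dots,G_k\}$ of terminal vertex expanders with $\sum_i |V(G_i)| = \tilde O(n)$ and the ``cut-respecting'' guarantee that every vertex cut $F \subseteq V$ with $|F|\leq f$ separates some designated terminal pair inside some $G_i$ and, conversely, any terminal-separating set there lifts back to a cut of $G$. On each $G_i$ install a state-of-the-art $f$-vertex failure connectivity oracle such as~\cite{LongW24}, which uses $\tilde O(f\cdot|V(G_i)|)$ space and answers $(s,t,F')$ queries in $\tilde O(f)$ time after an $\tilde O(f^2)$ update. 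Summed over the decomposition this yields total space $\tilde O(fn) = \tilde O(n)$, using $f=O(\log n)$.

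Given a query $F$, the query procedure walks the tree-like hierarchy and, for each piece $G_i$ that $F$ touches, decides whether $F\cap V(G_i)$ --- together with the interface terminals --- separates two terminals of $G_i$. The exponential factor $\tilde O(2^{|F|})$ reflects the following enumeration: descending the hierarchy, at each branching piece we guess for each vertex of $F$ which ``side'' of the putative cut it belongs to. This is one bit per vertex of $F$ \emph{across the entire hierarchy}, giving $2^{|F|}$ global choices rather than $f^{\Theta(\depth)}$, and for each consistent guess we issue $O(1)$ calls to the $st$-oracle at $\tilde O(f)$ cost each. The expander property is what makes this faithful: a surviving $\leq f$-sized cut inside $G_i$ must either be detected as a boundary separator by the $st$-oracle or correspond to an expander sliver recognised during the descent.

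For preprocessing, the bottleneck is the hierarchical vertex-expander decomposition. A recursive vertex-sparse-cut/vertex-flow routine of Saranurak--Yingchareonthawornchai type runs in $O(m)$ time for the initial sweep plus $\tilde O(n^{1+\delta})$ for the recursive balancing, with the exponent controlled by the chosen expansion slack. Installing the Long--Wang oracles on each piece costs $O(m_i)+(fn_i)^{1+o(1)}+\tilde O(f^2 n_i)$ per piece, which sums to $O(m)+\tilde O(n)$ for $f=O(\log n)$. The $n^{1+o(1)}$ variant of the theorem would come from substituting a faster but slightly lossier expander-decomposition subroutine whose degraded expansion parameter propagates $n^{o(1)}$ factors into the hierarchy depth, boundary size, and hence into the space and query-time bounds.

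The main obstacle is establishing --- and \emph{efficiently} producing --- the cut-respecting hierarchical terminal-expander decomposition itself: designing terminal expanders that collectively preserve every $f$-cut of $G$ while keeping the total vertex count at $\tilde O(n)$ and not $\Theta(fn)$ or worse. This will require a trimming step analogous to the ``expander pruning'' used in recent vertex-connectivity work, which both removes the low-degree slivers violating expansion and augments each piece with exactly the terminal structure needed to witness boundary cuts of size $\leq f$. A secondary technical point is justifying the $2^{|F|}$ enumeration bound: one must show that the recursion can be traversed in such a way that it consumes strictly one ``choice bit'' per vertex of $F$ in total, which essentially forces the hierarchy to have bounded branching precisely at the nodes where $F$ interacts with it --- the property that, I expect, the cut-respecting decomposition is carefully engineered to provide.
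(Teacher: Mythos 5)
Your skeleton matches the paper's high-level strategy (a cut-respecting terminal-expander decomposition plus black-box $st$ vertex-failure connectivity oracles, with a tree search whose branching is charged to the vertices of $F$), but the two ingredients that actually make the theorem work are missing, and you yourself flag both as unresolved obstacles rather than resolving them. First, the decomposition you assume --- one family of terminal expanders whose terminal cuts capture \emph{every} cut of size $\leq f$ --- does not come out of a single hierarchical decomposition. What a single left-right decomposition of $(G,T)$ gives is only \emph{conditional} completeness: a cut $F$ that separates $T$ but also separates the union $S^*$ of the sparse separators accumulated along the recursion may be missed entirely by the leaves. The paper's fix is quantitative and structural: the construction guarantees $|S^*|\leq \tfrac12|T|$, so one iterates the whole construction $O(\log n)$ times with $T_0=V$ and $T_{i+1}=S^*_i$, obtaining a chain of $(f,T_i,T_{i+1})$-cut detectors (\Cref{thm:cut-detectors}); a query is a cut iff some detector in the chain fires, by taking the largest $i$ with $F$ separating $T_i$. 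Without this terminal-reduction loop (or an equivalent argument), your claimed unconditional ``cut-respecting'' property is exactly the main theorem in disguise, not a building block.

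Second, the $\tilde O(2^{|F|})$ query bound does not follow from ``one choice bit per vertex of $F$'' as stated; the hierarchy can have $\Omega(n)$ leaves, and nothing in your description prevents the search from visiting a near-linear number of them when $F$ keeps intersecting the shared boundary of both children. The paper gets the bound by (a) constructing the left/right graphs with explicit representative sets $U_L,U_R$ of $f+1$ terminals glued by cliques/bicliques, so that a node is a ``branch'' node only when $F$ loses at least one vertex in \emph{each} child, bounding branch nodes by $2^{|F|}$; and (b) installing at every internal node an auxiliary ``US cut detector'' (\Cref{lem:US-data-structure}) that answers, without any further recursion, queries contained in $S_q\cup U_{L_q}\cup U_{R_q}$ --- this is what permits trimming one subtree whenever $F$ is absorbed into the other side. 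Your proposal has no analogue of this in-place answering mechanism, so the recursion has no sound way to cut off a subtree that $F$ still touches through the interface, and the enumeration bound collapses. (Related but secondary: the US detectors with nonempty U-sets cost a $2^{\Theta(f)}$ factor in space/preprocessing, which the paper removes via hit-miss hashing so that only empty U-sets are needed; your space accounting silently assumes this is free.)
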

Note that there is also a trivial $f$-vertex cut oracle that simply applies the Nagamochi-Ibaraki sparsification~\cite{NagamochiI92} and stores the resulting subgraph using $\tilde{O}(\min\{m,fn\})$ space and query time, and $O(m)$ preprocessing time.
As it turns out, combining~\Cref{thm:main-cut-theorem} with the above trivial approach (i.e., using the former when $f \leq \log n$ and the latter when $f \geq \log n$) gives $f$-vertex cut oracles that are \emph{almost-optimal} (up to $n^{o(1)}$ factors), for every value of $f$.
The almost-optimally of the preprocessing time is clear, so we focus only on space and query time in the following discussion.

\begin{description}
    \item[Space:] 
    We give an information-theoretic lower bound on the space occupied by the oracle, which is irrespective of preprocessing or query time.
    This lower bound holds also for the special case of minimum vertex cut oracles (i.e., where the graph is $f$-connected):
    \begin{theorem}\label{thm:spaceLB}
    Let $n, f$ be positive integers such that $2 \leq f \leq n/4$.
    Any $f$-vertex cut connectivity oracle   must occupy $\Omega \big( fn \cdot \log\big(\frac{n}{f}\big) \big)$ bits on some $f$-connected $n$-vertex graph.%
    \footnote{
        If $f > \frac{n}{4}$, then a lower bound of $\Omega(n^2)$ still holds, since an $f$-vertex cut oracle is also an $f'$-vertex cut oracle for any $f' \leq f$ by definition.
        But in this case, the lower bound instance is not $f$-connected but only $\Omega(f)$-connected.
        }$^,$\footnote{
        \Cref{thm:spaceLB} does not contradict the sparsification of~\cite{NagamochiI92}, implying that $G$ has a subgraph of $O(fn)$ \emph{edges} with exactly the same vertex cuts of size $\leq f$.
        By similar counting arguments as in the proof of~\Cref{thm:spaceLB}, one can show that $\Omega(fn \log(f/n))$ \emph{bits} are required to represent an arbitrary graph with $O(fn)$ \emph{edges} in the worst case.
        }
    For $f=1$, the lower bound is $\Omega(n)$ bits.
    \end{theorem}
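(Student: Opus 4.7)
}

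The plan is a standard information-theoretic counting argument: exhibit a family $\mathcal{F}$ of $n$-vertex $f$-connected graphs such that (i)~every two distinct members of $\mathcal{F}$ induce different ``is-$F$-a-cut'' functions over the size-$\leq f$ subsets of $V$, and (ii)~$\log_2 |\mathcal{F}| \geq \Omega(fn \log(n/f))$. Since any oracle must assign distinct representations to graphs with distinct cut functions, the worst-case oracle size will be $\geq \log_2|\mathcal{F}|$.

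For $f \geq 2$, I will use a ``clients and servers'' construction. Partition $V = U_1 \cup U_2$ with $|U_1|=n/3$ (clients) and $|U_2|=2n/3$ (servers), make $U_2$ a clique, and fix an injection $t : U_1 \to U_2$ with image $T$ of size $n/3$; call $t(v)$ the \emph{tag} of client $v$. For each client $v$, choose a set $Z(v) \subseteq U_2 \setminus T$ with $|Z(v)| = f-1$, and set $v$'s neighborhood to be $N(v) = \{t(v)\} \cup Z(v)$; clients are pairwise non-adjacent. The family $\mathcal{F}$ consists of all graphs obtained by varying the assignment $Z : U_1 \to \binom{U_2 \setminus T}{f-1}$, giving $|\mathcal{F}| = \binom{n/3}{f-1}^{n/3}$. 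For the $f=1$ case I will use a separate ad-hoc construction (below).

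The verification proceeds in three steps. First, $f$-connectivity: every client has degree exactly $f$, so the min cut is $\leq f$; conversely, removing any $\leq f-1$ vertices cannot empty the neighborhood of any client (each has $f$ neighbors) and leaves $U_2$ as a non-empty clique to which each remaining client is still attached. Second, distinguishability: given two assignments $Z_1 \neq Z_2$ that differ on a client $v$, let $F = \{t(v)\} \cup Z_1(v)$. In the graph $G_1$ corresponding to $Z_1$, the set $F = N_{G_1}(v)$ isolates $v$, so $F$ is a cut. In $G_2$, the client $v$ retains a neighbor in the nonempty set $Z_2(v) \setminus Z_1(v)$, and every other client $u$ still reaches $U_2 \setminus F$ via its tag $t(u) \notin F$ (the privacy of tags, $F \cap T = \{t(v)\}$, is what makes this work), so $F$ is not a cut in $G_2$. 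Third, counting: using $\binom{n/3}{f-1} \geq (n/(3(f-1)))^{f-1}$ yields $\log_2 |\mathcal{F}| \geq (n/3)(f-1)\log_2(n/(3(f-1))) = \Omega(fn \log(n/f))$ for $2 \leq f \leq n/4$. (One easily checks both the remaining slack in the range $f$ close to $n/4$ and the requirement $f-1 \leq |U_2\setminus T|=n/3$.)

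For $f=1$, I will instead take a path $v_1 v_2 \cdots v_n$ and for each even index $i$, independently decide whether to add the ``bypass'' edge $(v_{i-1},v_{i+1})$; this toggles whether $v_i$ is a cut vertex, giving $2^{\Omega(n)}$ graphs with pairwise distinct sets of cut vertices. The main conceptual obstacle is step two above: the cut function is a very ``lossy'' view of the graph, and without the tag mechanism one can only recover the \emph{set} of neighborhoods $\{N(v) : v \in U_1\}$ rather than the per-vertex assignment, losing a factor of $f$ and falling short of the target bound. The private tags $t(v)$ are precisely what let a single cut query $F = N(v)$ certify both which vertex is being isolated and what its exact neighborhood is, thereby pinning down $Z(v)$ coordinate-by-coordinate and unlocking the full $\Omega(fn \log(n/f))$ counting.
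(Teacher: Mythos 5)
Your proposal is correct, and at its core it is the same argument as the paper's: an information-theoretic counting bound over $n$-vertex $f$-connected graphs built from a large clique of ``servers'' plus many degree-$f$ ``clients'' attached to it, where two graphs with different client neighborhoods are told apart by querying a single neighborhood $F=N(v)$ (a cut in one graph because it isolates $v$; not a cut in the other because every client keeps a neighbor outside $F$ and the surviving clique stays connected), and your $f=1$ construction (path with optional bypass edges at even vertices) is identical to the paper's. The one genuine difference is how you make the counting go through: the paper lets each $u_i$ have an arbitrary size-$f$ neighborhood but insists the $n/2$ neighborhoods be pairwise \emph{distinct}, since the cut function only reveals the unordered family of neighborhoods, and then counts $\binom{\binom{n/2}{f}}{n/2}$ such families (requiring a short two-case calculation to extract $\Omega(fn\log(n/f))$); you instead reserve a private tag $t(v)$ inside each $N(v)$, which makes the per-client choices $Z(v)$ independently recoverable, so the family size is the clean product $\binom{n/3}{f-1}^{n/3}$ and the logarithm computation is immediate. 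Your tag mechanism thus trades one unit of each client's degree budget (hence the $f-1$ in place of $f$) for a simpler, ``ordered'' count; both routes give the same asymptotic bound over the full range $2\le f\le n/4$, and your verification of $f$-connectivity, distinguishability, and the parameter constraints ($f-1\le n/3$, the $\log(n/(3(f-1)))=\Omega(\log(n/f))$ check near $f=n/4$) is sound.
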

    As observed in~\cite{DuanP20}, in the $st$ variant a space lower bound of $\Omega(f n)$ bits is easy to show by considering the subgraphs of the complete bipartite graph $K_{n,f+1}$. However, this does not yield a lower bound for the global variant.
    We therefore provide a different construction which yields a slightly stronger bound due to the extra $\log(n/f)$ factor. We note that our lower bound also holds for the $st$ variant of ~\cite{DuanP20}. (This holds as one can make $n$ queries to the $st$-cut oracle of ~\cite{DuanP20} to determine if the queried set of vertices $F$ is a cut.) Therefore the bound of \Cref{thm:spaceLB} also strengthens the known bound of $st$-cut oracles.

    \item[Query time:]
    We first observe that lower bound construction of Long and Saranurak~\cite{LongS22} essentially also implies that the dependency on $f$ in the query time of~\Cref{thm:main-cut-theorem} cannot be made polynomial.
    Specifically, already for $f = c \log n$ (where $c$ is an absolute constant), any oracle with preprocessing time $n^{2-o(1)}$ must have query time $n^{1-o(1)}$ unless the Strong Exponential Time Hypothesis fails.\footnote{
    We note that the lower bound instance is a sparse graph with only $O(fn) = \tilde{O}(n)$ edges, so the oracle of~\Cref{thm:main-cut-theorem} preprocesses it with much less time than $n^{2-o(1)}$.
    }
    Thus, for $\Omega(\log n) \leq f \leq n^{o(1)}$, we have almost matching lower and upper bounds of $n^{1 \pm o(1)}$ on the query time.
    
    In the regime $f = o(\log n)$, the query time in~\Cref{thm:main-cut-theorem} is $n^{o(1)}$ which trivially almost-optimal.
    Lastly, when $f= n^{\alpha}$ for some constant $\alpha \in (0,1]$, we ask if the query time of $O(fn)$ can be improved.
    As it turns out, the lower bound in~\cite{LongS22} shows that a polynomial improvement is unlikely, as it would refute the ``online version'' of the popular Orthogonal Vectors (OV) conjecture.
    This version has not been formally stated before, mainly because it did not find any concrete applications; nevertheless, it is considered plausible by the fine-grained complexity community~\cite{AbboudPersonal}.
    The details regarding these query time lower bounds appear in~\Cref{sec:condLB}.

\end{description}

\paragraph{On Incremental Updates.}
While our oracle focuses on decremental updates (namely, query connectivity upon the deletion of $f$ vertices), it also makes sense to consider the incremental setting, where one ``turns on'' $f$ vertices (the initial graph is given with some vertices turned on and some turned off).
Allowing both decremental and incremental updates is commonly known as the \emph{sensitivity} oracles setting, recently explored in the context of $st$ vertex cut oracles ~\cite{LongW24} where $\poly(f,\log n)$ query time can be obtained for $f$-size update.
Interestingly, for (global) vertex cut oracles, there is a strong separation between the decremental and incremental settings: using ideas from~\cite{HenzingerL0W17}, we prove that even for $f = O(1)$, the required query time is at least $n^{1-o(1)}$ in the incremental setting (assuming SETH).
The details appear in~\Cref{sec:incremental-sensitivity}.

\paragraph{Vertex Cut Labels.}
As for $f$-vertex cut labels, 
we essentially settle the questions posed by~\cite{ParterPP24,LongPS25} by providing a construction with label length that matches the known lower bound up to polylogarithimic factors:
\begin{theorem}\label{thm:vertex-cut-labels}
    For every $n$-vertex graph $G=(V,E)$ and integer $f = O(\log n)$, there is an $f$-vertex cut labeling with label length of $\tilde{O}(n^{1-1/f})$ bits. The total label length (summing over all vertices) is $\tilde{O}(n)$ bits.
    The labels are constructed deterministically in polynomial time. 
\end{theorem}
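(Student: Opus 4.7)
The plan is to derive the labeling scheme from the hierarchical expander decomposition underlying the $f$-vertex cut oracle of \Cref{thm:main-cut-theorem}. That decomposition produces a family of ``terminal vertex expander'' graphs $\{H_i\}$ whose total vertex count is $\tilde{O}(n)$, with the property that every cut $F \subseteq V$ with $|F| \leq f$ in $G$ is captured as a cut in some $H_i$. Consequently, it suffices to design an $f$-vertex cut labeling scheme for each $H_i$ separately and concatenate: each $v \in V(G)$ receives the concatenation of its labels across all $H_i$ containing it, tagged by the identifier of $H_i$ and of $v$'s copy within $H_i$.

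For the single-expander sub-problem, given a terminal vertex expander $H$ on $N$ vertices, the goal is to produce labels of length $\tilde{O}(N^{1-1/f})$ bits per vertex. The structural key is an isoperimetric fact: in a $\phi$-vertex-expander, any vertex set $S$ with $|\partial S| \leq f$ satisfies $|S| \leq \tilde{O}(N^{1-1/f})$, obtained by iterating the expansion inequality layer-by-layer from $S$ until the outermost boundary exceeds $f$ and solving for $|S|$ (with $\phi = 1/\polylog(N)$). Thus every size-$\leq f$ cut of $H$ has a ``small side'' of size $\tilde{O}(N^{1-1/f})$. The label of $v \in V(H)$ enumerates the collection of pairs $(F', S)$ such that $F'$ is a separator of size at most $f$, $S$ is the associated small side, and $v \in F' \cup S$; each pair is encoded as an $\tilde{O}(N^{1-1/f})$-bit bitmap on a canonical vertex ordering of $H$. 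A query $F$ is answered by intersecting the lists from labels of vertices in $F$ to extract candidate witnesses $(F', S)$ with $F' \subseteq F$, then verifying that removing $F$ leaves both sides of the cut non-empty (supported by a small amount of extra data in each label).

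Lifting to $G$ is then mechanical: each $v \in V(G)$ appears in $O(\log n)$ graphs $H_i$ along its root-to-leaf path in the decomposition, contributing $\tilde{O}(|V(H_i)|^{1-1/f})$ bits per level, giving $\tilde{O}(n^{1-1/f})$ bits per vertex overall. For the total length summed over all vertices, I would charge each bit of a small-side bitmap to a vertex actually lying in the described small side; combined with the $\tilde{O}(n)$ total vertex-count bound of the decomposition, this yields $\tilde{O}(n)$ total length. The query procedure at the $G$-level simply invokes the single-expander procedure for each $H_i$ appearing in the labels of $F$, and returns YES iff some $H_i$ identifies $F$ as a cut.

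The main obstacle is controlling the \emph{total} label length: while the per-vertex bound follows transparently from the isoperimetric fact, the naive sum over vertices could blow up to $\tilde{O}(n^{2-1/f})$ if many nearly-identical small sides are listed. Overcoming this requires showing that each terminal vertex expander admits only polylogarithmically many ``essentially distinct'' small sides per vertex, which I would prove via submodularity of $|\partial(\cdot)|$ together with uncrossing arguments tailored to the expander's structure, and then combine with the hierarchical accounting to amortize across all levels. A secondary technical point is verifying the isoperimetric inequality for the specific flavor of terminal vertex expander used by the oracle, which may have a non-trivial terminal set and a slightly different expansion definition than the standard one.
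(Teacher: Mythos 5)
Your proposal does not follow the paper's route (the paper never uses the expander decomposition for labels; its scheme in \Cref{sec:labels} is an elementary degree-threshold construction, discussed below), and as sketched it has gaps that are fatal rather than ``secondary technical points.'' First, the isoperimetric claim is false for the objects the decomposition actually produces. In a $(T,\phi)$-expander (\Cref{sec:prelim}, and the leaves of \Cref{def:f-cut-respecting-TED}), a cut $(L,S,R)$ with $|S|\leq f$ only forces the smaller side to contain $\tilde{O}(f/\phi)$ \emph{terminals}; it may contain $\Omega(N)$ non-terminal vertices, so there is no bound of $\tilde{O}(N^{1-1/f})$ on the vertex size of the small side, and your bitmap encoding of small sides breaks down. (For a genuine vertex expander with $T=V$ the small side has size $O(f/\phi)=\tilde{O}(f)$ outright, so no layer-by-layer iteration is needed there either -- the $N^{1-1/f}$ shape of your claim does not come from expansion.) Moreover the decomposition also contains ``few terminals'' pieces with $|T_i|=\tilde{O}(f)$ that are not expanders at all, which your scheme does not handle. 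Second, even granting the isoperimetric fact, your per-vertex length bound does not ``follow transparently'': the label of $v$ enumerates \emph{all} witness pairs $(F',S)$ with $v\in F'\cup S$, each costing $\tilde{O}(N^{1-1/f})$ bits, and nothing in the sketch bounds the number of such pairs per vertex -- a vertex can lie in polynomially many distinct minimal separators of size $\leq f$ (this is exactly the ``hub'' phenomenon the paper must fight), so both the per-vertex and the total length estimates are unproven; the appeal to submodularity/uncrossing is a hope, not an argument. Third, the completeness guarantee of the decomposition is that $F\cap V(G_i)$ separates the \emph{terminals} $T_i$ in $G_i$, so the single-piece scheme must detect terminal separation from the labels of $F\cap V(G_i)$ alone; your witness-verification step (``both sides non-empty after removing $F$'') is not aligned with this and is not specified in a way a label-only query could execute.

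For contrast, the paper's proof is much simpler and avoids all of this. After Nagamochi--Ibaraki sparsification ($\leq (f+1)n$ edges), call $v$ high-degree if $|N(v)|>2(f+1)n^{1-1/f}$, so there are at most $n^{1/f}$ high-degree vertices. Low-degree vertices store their whole neighborhood together with black-box $f$-vertex-failure connectivity ($st$) labels $\ell(\cdot)$ of $\tilde{O}(1)$ bits; a high-degree vertex stores $f$ of its neighbors plus, for every subset $K$ of high-degree vertices containing it with $|K|\leq f$ (at most $O(f\,n^{1-1/f})$ subsets), an $O(f\log n)$-bit record about $G-K$ (the size of the largest component, and the complement of it if it has $\geq n-f$ vertices). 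A query either finds two stored neighbors of $F$ that the $\ell(\cdot)$ labels certify are disconnected in $G-F$, or falls back on the explicit record for $K=H\cap F$ via \Cref{lem:explicit-cut}; the case analysis in \Cref{sec:labels} shows this is complete. If you want to salvage your approach you would need, at minimum, a correct small-side bound for terminal expanders (in terms of terminals, not vertices), a treatment of the non-expander leaves, and a genuine bound on the number of witnesses stored per vertex -- at which point the construction would likely converge back to something like the paper's explicit-table idea.
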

A nice aspect of the above labeling scheme is its (perhaps surprising) simplicity and black-box use of labels for the ``$st$ variant'' of the problem.
When $f=\Omega(\log n)$, a randomized scheme with label length $\tilde{O}(n)$ follows immediately by \cite{DoryP21}.
However, the best current deterministic solution for this regime has labels of $\tilde{O}(fn)$ bits, by using the standard graph sparsification of \cite{NagamochiI92} and storing the entire sparsified graph in the label of each vertex; we leave open the intriguing question of improving this trivial bound.

We observe that although the structure of minimum vertex cuts can be leveraged to create exponentially faster oracles, it does not offer any advantage in the context of labels. This follows by observing that the label length lower bound of \cite{LongPS25} holds also for $f$-connected graphs.

\paragraph{Structural Insights: ``Cut Respecting'' Expander Decomposition.} 
Many of the recent breakthrough algorithmic results for computing edge and vertex connectivity are based on some notion of hierarchical expander decomposition \cite{LiP20, Li21,HenzingerLRW24}. The first connection between expanders and the minimum edge cut problem has been observed by \cite{KawarabayashiT15,LiP20} and in the data-structure setting by \cite{PatrascuT07}. The high-level approach is based on the observation that the problem at hand (e.g., computing a minimum cut, constructing vertex cut oracles, etc.) is considerably simpler when the graph is an expander. The hierarchical expander decomposition provides a convenient machinery to reduce general graphs to expanders. See e.g., \cite{Li21,HenzingerLRW24}.
Edge- and vertex-expander hierarchies have also been used recently for the $st$ cut labeling schemes of~\cite{LongPS25}. 

Our $f$-vertex cut oracles are as well based on a variant of hierarchical expander decomposition admitting ``cut respecting'' properties.
Very informally, it shows that every graph can be decomposed into a collection of \emph{terminal expanders} of bounded total size, such that vertex cuts of size at most $f$ ``translate'' into terminal cuts in these expanders.
This structural decomposition is presented in~\Cref{sec:cut-respecting-family} (see~\Cref{def:f-cut-respecting-TED} for a formal description); we are hopeful it could have future applications in various contexts (e.g., the distributed and parallel settings).%

\paragraph{Subsequent work.} 
In a follow-up work,
\cite{Kosinas25} employed a DFS-based, rather than expander-based approach for vertex cut oracles\footnote{The oracles of \cite{Kosinas25} also report the number of connected components.} (it does not study vertex cut labels). For general graphs, their bounds shaves several logarithmic terms. For $f$-connected graphs (i.e., minimum vertex cut oracles), our query time and space bounds are considerably smaller, by factors of $f$ and $f^2$, respectively (and as mentioned, our space bound is near-optimal). 
Our work also provides a collection of (possibly conditional) lower bounds for the problem which establish the almost optimality of our result.

\paragraph{Organization.}
After a few preliminaries in~\Cref{sec:prelim}, the short, stand-alone~\Cref{sec:labels} provides our $f$-vertex cut labeling scheme of~\Cref{thm:vertex-cut-labels}.
We then move to consider $f$-vertex cut oracles, which comprises the majority of the paper.
The main
\Cref{sec:oracles} is devoted to proving~\Cref{thm:main-cut-theorem,thm:cut-theorem-fconnected}; a technical overview is given in the subsection~\ref{sec:overview-oracles} (a roadmap for the rest of the subsections appears at the end of the overview).
\Cref{sec:cut-respecting-family} provides the formal details of the structural ``cut respecting'' expander decomposition.
\Cref{sec:spaceLB} proves the (unconditional) space lower bound of~\Cref{thm:spaceLB}.
Conditional lower bounds are discussed in~\Cref{sec:condLB,sec:incremental-sensitivity}, pertaining $f$-vertex cut oracles and their ``incremental'' variant, respectively.

\section{Preliminaries}\label{sec:prelim}
Let $G= (V,E)$ be an undirected $n$-vertex $m$-edge graph.
The neighbor-set of a vertex $v\in V$ is denoted by $N(v)$.
When $k \geq 0$ is some nonnegative integer, $N_k (v)$ denotes some fixed arbitrary subset of $N(v)$ of size $k$ (or $N_k (v) = N(v)$ if $|N(v)| \leq k$).
For $U \subseteq V$, we define $N (U) = (\bigcup_{u \in U} N(u)) - U$, i.e., $N(U)$ are those vertices outside $U$ with a neighbor in $U$.

A \emph{vertex cut} in $G$ is a partition $(L,S,R)$ of $V$ such that $L$ and $R$ are nonempty, and there are no edges going between $L$ and $R$.
The set $S$ is called the \emph{separator} of the cut.
Slightly abusing terminology, we call a vertex set $F \subseteq V$ a \emph{cut in $G$} if there exists some vertex cut in which $F$ is the separator.
Equivalently, $F$ is a cut in $G$ iff $G-F$ (the subgraph of $G$ induced on $V-F$) is a disconnected graph.
Let $T \subseteq V$ be some set of vertices in $G$ called \emph{terminals}.
A vertex cut $(L,S,R)$ is called a \emph{vertex $T$-cut} if $L \cap T \neq \emptyset$ and $R \cap T = \emptyset$.
In this case, we say that $S$ \emph{separates $T$ in $G$}. Equivalently, $S$ separates $T$ in $G$ iff there are two vertices $s,t \in T-S$ which are disconnected in $G-S$.
Again, slightly abusing terminology, we also call a vertex set $F \subseteq V$ a \emph{$T$-cut in $G$} if $F$ separates $T$.
We say that $G$ is a \emph{$(T,\phi)$-expander} with \emph{expansion} $0 < \phi \leq 1$, if for every vertex cut $(L,S,R)$ in $G$, it holds that $|S| \geq \phi \min \{|T \cap (L \cup S)|, |T \cap (R \cup S)|\}$.

The \emph{arboricity} of $G$ is the minimum number of forests into which its edges $E$ can be partitioned.
The \emph{$f$-connectivity certificates} of Nagamochi and Ibaraki~\cite{NagamochiI92} will be very useful for us.
\begin{theorem}[\cite{NagamochiI92}]\label{thm:conn-certificate}
    Let $f \geq 1$.
    Then $G$ has a subgraph $H = (V, E_H)$ of
    arboricity at most $f+1$ (in particular $|E_H| \leq (f+1)n$) with the following property:
    For every $F \subseteq V$ with $|F| \leq f$ and every $s,t \in V-F$, it holds that $s,t$ are connected in $H-F$ iff they are connected in $G-F$.
    The subgraph $H$ can be computed deterministically in $O(m)$ time.
\end{theorem}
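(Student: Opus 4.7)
My plan is to realize $H$ via Nagamochi–Ibaraki's greedy forest decomposition and then verify the three claimed properties (arboricity/size, connectivity preservation, and runtime) in sequence.

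First, I would construct $H$ by scanning the edges of $G$ in a specific order (the scan-first-search ordering of~\cite{NagamochiI92}) while maintaining $f+1$ spanning forests $F_1,\ldots,F_{f+1}$: each edge $e$ is placed into the smallest-indexed $F_i$ where it does not close a cycle, and discarded from $H$ if no such $i\le f+1$ exists. Set $H := F_1 \cup \cdots \cup F_{f+1}$. The arboricity bound is then immediate, since $E_H$ is partitioned into $f+1$ forests by construction; this also gives $|E_H| \le (f+1)(n-1) \le (f+1)n$. The $O(m)$ runtime follows from the standard observation that scan-first-search can be implemented with a bucket data structure that amortizes the forest updates to constant time per edge of $G$.

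The nontrivial part is the connectivity preservation. The easy direction ($H-F$ connected $\Rightarrow$ $G-F$ connected) holds because $H \subseteq G$. For the converse, given $s,t \in V-F$ connected in $G-F$ via some path $P$, I would argue by induction on the number of non-$H$ edges of $P$ that $s,t$ remain connected in $H-F$. The inductive step reduces to the following local claim: for every edge $(u,v) \in G$ (in particular every non-$H$ edge appearing along $P$) and every $F \subseteq V - \{u,v\}$ with $|F| \le f$, the vertices $u,v$ are connected in $H-F$. If $(u,v) \in H$ this is trivial; if $(u,v) \notin H$, then the NI construction guarantees that $u,v$ lie in a common tree of each of the $f+1$ forests, yielding $f+1$ pairwise edge-disjoint $u$-$v$ paths $T_1,\ldots,T_{f+1}$ in $H$. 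It then suffices to exhibit at least one $T_i$ that is internally disjoint from $F$, which completes the induction and the theorem.

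\textbf{Main obstacle.} The crux — and the step I expect to be hardest — is upgrading the $f+1$ edge-disjoint forest paths into a guarantee that survives the removal of up to $f$ \emph{vertices}. Edge-disjointness does not in itself imply sufficient vertex-disjointness (edge connectivity may exceed vertex connectivity), so a simple pigeonhole does not suffice. Here I would exploit the structural properties enforced by the scan-first-search ordering, which produces the forests in such a way that the local tree paths substituting for a discarded edge $(u,v)$ can be chosen to be internally vertex-disjoint; this is precisely the content of Nagamochi–Ibaraki's original preservation lemma, which I would invoke and specialize to the vertex-cut setting. Once this lemma is in hand, iterating the replacement along $P$ yields the desired $s$-$t$ walk in $H-F$ and closes the proof.
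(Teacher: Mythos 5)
The first thing to say is that the paper does not prove \Cref{thm:conn-certificate} at all: it is imported verbatim from \cite{NagamochiI92} and used as a black box, so there is no in-paper argument to compare yours against. Judged on its own, your outer scaffolding is correct and is the standard route to this result: taking $H$ to be the union of $f+1$ greedily assigned forests immediately gives the arboricity and edge-count bounds; $H \subseteq G$ gives the easy direction; and your reduction of the converse to the local claim --- for every edge $(u,v)$ of $G$ and every $F \subseteq V \setminus \{u,v\}$ with $|F| \le f$, the endpoints $u,v$ are connected in $H-F$ --- is sound, since a path of $G-F$ can then be repaired edge by edge inside $H-F$ and the detours concatenated.

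The gap is exactly where you flagged it: the local claim is the entire content of the theorem, and you do not prove it but invoke ``Nagamochi--Ibaraki's original preservation lemma''. Since the statement being proven \emph{is} their theorem, this makes your write-up a correct skeleton of their proof with the essential lemma cited rather than established; it does not add substance beyond what the paper already does by citing \cite{NagamochiI92} wholesale. If you intend to flesh it out, three cautions. First, the lemma is ordering-sensitive: the ``smallest-index forest'' greedy rule under an arbitrary edge order only guarantees $f+1$ \emph{edge}-disjoint detours for a discarded edge, which, as you yourself observe, is not enough against $f$ vertex deletions; the vertex-disjointness guarantee is proved in \cite{NagamochiI92} for the forests produced by their specific scanning order (and for scan-first-search forests in follow-up work), so that hypothesis is load-bearing and cannot be weakened. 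Second, the lemma should be quoted in the form ``for every discarded edge $(u,v)$ there are $f+1$ internally vertex-disjoint $u$--$v$ paths in $H$'' (equivalently, no set of at most $f$ vertices separates $u$ from $v$ in $H$); your phrasing that the $f+1$ forest paths themselves ``can be chosen to be internally vertex-disjoint'' is a stronger statement than what is proved, and you do not need it. Third, the $O(m)$ running time also rests on the specific scanning procedure of \cite{NagamochiI92}: implementing your greedy rule literally, with cycle (connectivity) tests in each of the $f+1$ forests, does not obviously run in linear time, so the ``bucket data structure'' remark is again the citation doing the work rather than an argument.
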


\section{Vertex Cut Labels}\label{sec:labels}

\paragraph{Building Blocks.}
A basic building block in our $f$-vertex cut labels of~\Cref{thm:vertex-cut-labels} are succinct \emph{$f$-vertex failure connectivity labels}; these are the ``$st$ cut'' variant of $f$-vertex cut labels.
\begin{theorem}[\cite{ParterPP24,LongPS25}]
    One can assign a label $\ell(v)$ of $\poly(f, \log n)$ bits to every $v \in V$ such that, for every $s, t \in V$ and $F \subseteq V$ with $|F| \leq f$, it is possible to determine if $s$ and $t$ are connected in $G-F$ by only inspecting the $\ell(\cdot)$ labels of $\{s,t\} \cup F$.
    The $\ell(\cdot)$ labels are constructed deterministically in polynomial time.
\end{theorem}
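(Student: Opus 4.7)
The plan is to combine Nagamochi--Ibaraki sparsification with a vertex-expander hierarchy, so that each label records the vertex's ``location'' across the hierarchy plus a concise description of how local connectivity is affected by any failure set of size at most $f$. First, I would invoke~\Cref{thm:conn-certificate} to replace $G$ by a subgraph $H$ of arboricity $O(f)$ (hence $O(fn)$ edges) preserving every connectivity relation in $G-F$ for $|F|\le f$; henceforth all labels refer only to $H$, and the low edge count is what will make per-level information compressible.

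Next I would build a hierarchical $(T,\phi)$-vertex expander decomposition of $H$, with expansion $\phi=1/\polylog n$ and depth $O(\log n)$. At each level $i$, a set $T_i$ of ``interface'' vertices (those lying in small separators identified by the level-$i-1$ decomposition) is peeled off, and the remainder splits into disjoint $(T_i,\phi)$-vertex expanders $X^{(i)}_1, X^{(i)}_2,\ldots$ Each vertex $v$ is given the $O(\log n)$ identifiers of the pieces it occupies along the hierarchy. The workhorse structural fact, following from the $(T,\phi)$-expansion property together with $|F|\le f$, is that after removing any $F$ from an expander piece $X$, all but at most $O(f/\phi)=\poly(f,\log n)$ of $X$'s terminals lie in a single ``giant'' component of $X-F$; call the rest the \emph{small} components. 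For each $v \in X$, I would then store in its level-$i$ contribution to $\ell(v)$ (a)~a marker telling whether $v$ joins the giant under the given failure pattern, and (b)~a short fingerprint that, for every feasible $F$, points to the small component containing $v$. Summed over $O(\log n)$ levels, the total bit count stays at $\poly(f,\log n)$.

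The query on $(s,t,F)$ peels through the hierarchy from the bottom up: at level $i$, inspect the labels of $\{s,t\}\cup F$ to check whether $s$ and $t$ already share a piece in which they fall in the same terminal component of $X-F$; if not, escalate, exploiting that cross-piece connectivity must route through the level-$(i{+}1)$ interfaces, which are themselves terminals at the next level. The output is ``connected'' iff this chain of local matches eventually joins $s$ to $t$. The main obstacle is part~(b) above: deterministically encoding, in only $\poly(f,\log n)$ bits per vertex, the membership of $v$ in the small components generated by potentially $\binom{n}{f}$ distinct fault patterns within a single expander. The cited works handle this through fault-tolerant expander routing: expansion allows the few separated terminals to be rerouted to the giant via short disjoint paths; each vertex then only needs to remember its role in this routing network under a small deterministic family of pseudorandom sketches of failure sets, so a $\poly(f,\log n)$-bit digest per vertex suffices. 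This derandomization, together with the polynomial-time construction of the vertex-expander hierarchy, is the technical heart inherited from~\cite{ParterPP24,LongPS25}.
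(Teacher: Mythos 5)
The statement you are proving is not proved in this paper at all: it is imported verbatim as a black box from~\cite{ParterPP24,LongPS25}, and the authors explicitly flag it as ``highly non-trivial'' precisely because they do not want to reprove it. So the only meaningful question is whether your sketch would stand on its own as a proof, and it does not. Your outline correctly identifies the general shape of the constructions in the cited works (sparsify via~\Cref{thm:conn-certificate}, build a vertex-expander hierarchy of depth $O(\log n)$ with expansion $1/\polylog n$, use the fact that in a $(T,\phi)$-expander at most $O(f/\phi)$ terminals fall outside the giant component of $X-F$, and resolve queries by climbing the hierarchy through interface terminals). But the step you label (b) --- deterministically encoding, in $\poly(f,\log n)$ bits per vertex, which small component of $X-F$ the vertex lands in \emph{for every one of the $\binom{n}{f}$ feasible failure sets} --- is exactly the content of the theorem, and your treatment of it (``fault-tolerant expander routing,'' ``a small deterministic family of pseudorandom sketches of failure sets,'' a ``digest'') names the destination without constructing the road. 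A per-vertex marker ``telling whether $v$ joins the giant under the given failure pattern'' is, as written, a function of $F$, not a label; turning it into a fixed string readable jointly with the labels of $F$ is the entire difficulty. You also do not specify what the interface/terminal sets $T_i$ are, how cross-piece connectivity is certified when the connecting interface vertices themselves lie in $F$, or how the bottom-up matching avoids false negatives when $s$ and $t$ are joined only through a chain of several pieces at the same level.

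In short: as a high-level road map of what~\cite{LongPS25} does, your sketch is reasonable; as a proof, it has a genuine gap at its self-acknowledged ``technical heart,'' and since the present paper deliberately does not supply that machinery either, you cannot close the gap by appeal to anything proved here --- you would have to either reproduce the sketching/routing arguments of the cited works or simply cite the theorem as the paper does.
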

The above theorem is highly non-trivial, but luckily we only use it as a black-box.
Recall that in~\Cref{thm:vertex-cut-labels} we are interested in the regime $f = O(\log n)$, so the $\ell(\cdot)$ labels have $\tilde{O}(1)$ bits, which for us is essentially as succinct as unique vertex identifiers.
From now on, whenever we ``store a vertex $v$'', we mean writing its unique identifier and its $\ell(v)$ label, taking up only $\tilde{O}(1)$ bits.

Another building block is sparsification, a standard preliminary step in the area that is crucial for our approach here: by first applying the sparsification of~\cite{NagamochiI92} (formally stated in \Cref{thm:conn-certificate}), we may assume without loss of generality that $G$ has at most $(f+1)n$ edges.

\paragraph{Warm-up: Minimum Vertex Cuts (The $f$-Connected Case).}
We first focus on the special case where $G$ is $f$-connected, i.e., $G$ does not have any vertex cuts of size $f-1$ or less.
Our goal is to assign a label $L(v)$ of $\tilde{O}(n^{1-1/f})$ bits to every $v \in V$, so that for every $F = \{x_1, \dots, x_f\} \subseteq V$ we can determine if $F$ is a (minimum) vertex cut from the information stored in $L(x_1), \dots, L(x_f)$.
This case turns out to be extremely simple and easy to present while conveying most of the intuition leading to the general case.
The following claim is what makes the $f$-connected case so convenient:
\begin{claim}
    Let $F = \{x_1, \dots, x_f\}$ be a vertex cut in an $f$-connected graph $G$.
    Then every $x_i \in F$ has two neighbors that are separated by $F$.
\end{claim}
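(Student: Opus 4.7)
The plan is to exploit the minimality of $F$ as a cut: since $G$ is $f$-connected and $|F| = f$, any proper subset of $F$ fails to disconnect $G$, which pins down a precise structural role for each $x_i \in F$.

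Fix $x_i \in F$ and set $F' = F \setminus \{x_i\}$, so $|F'| = f-1$. By the hypothesis that $G$ is $f$-connected, $F'$ cannot be a vertex cut in $G$, hence $G - F'$ is connected. On the other hand, $F$ is a cut by assumption, so $G - F$ is disconnected; let $C_1, C_2, \dots$ be its connected components (there are at least two). Observe that $G - F = (G - F') - \{x_i\}$, so deleting the single vertex $x_i$ from the connected graph $G - F'$ disconnects it into the components $C_1, C_2, \dots$.

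Now I would argue that $x_i$ must have neighbors in at least two distinct components of $G - F$. Indeed, the only way $G - F'$ can be connected while $G - F' - \{x_i\}$ is disconnected is if $x_i$ is adjacent to vertices in more than one of the components $C_j$; otherwise, all edges incident to $x_i$ in $G - F'$ would be to a single component, say $C_1$, and then $C_2$ would already be disconnected from $C_1 \cup \{x_i\}$ inside $G - F'$, contradicting its connectedness. Pick one neighbor $u \in C_1$ and another neighbor $v \in C_2$ of $x_i$; these two neighbors lie in different components of $G - F$ and are therefore separated by $F$, as required.

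The argument involves no obstacle worth calling out; it is essentially a one-line minimality observation, and the only mild care needed is in handling the case of more than two components of $G - F$, which is already covered by saying ``at least two'' throughout.
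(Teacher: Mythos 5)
Your proof is correct and takes essentially the same approach as the paper: assume all neighbors of $x_i$ outside $F$ lie in a single component of $G-F$ and derive a vertex cut of size at most $f-1$ (you exhibit $F \setminus \{x_i\}$ itself as the cut, while the paper exhibits the subset $N(D) \subseteq F \setminus \{x_i\}$ for a second component $D$), contradicting $f$-connectivity. The degenerate case $N(x_i) \subseteq F$ is covered implicitly by your argument just as in the paper's parenthetical remark, so no gap remains.
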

\begin{proof}
    Seeking contradiction, suppose all the neighbors of $x_i$ outside $F$ are in the same connected component $C$ of $G-F$ (if $N(x_i) \subseteq F$, choose $C$ arbitrarily),
    and let $D$ be a different connected component.
    Then $N(D) \subseteq F - \{x_i\}$ is a vertex cut in $G$ of size $< f$, a contradiction.
\end{proof}

This claim immediately yields labels where $L(v)$ with length $\tilde{O}(|N(v)|)$, by letting this label store all vertices in $\{v\} \cup N(v)$. 
To answer a query $F = \{x_1, \dots, x_f\}$, we can just check if there is a pair of vertices in any arbitrary $N(x_i)$ that are separated by $F$, using the $\ell(\cdot)$ labels of the vertices in $N(x_1)$ and of $x_1, \dots, x_f$.
If we don't find a separated pair, we can safely determine that $F$ is not a cut by the claim above.
However, $|N(v)|$ could be as large as $\Omega(n)$.

To overcome this, we partition the vertices into low- and high-degree, setting the threshold at $2(f+1)n^{1-1/f} = \tilde{\Theta}(n^{1-1/f})$.
Low-degree vertices still have the budget to store their entire neighborhoods. 
For the moment, we let high-degree vertices store just themselves.
If the query contains some low-degree vertex $x_i$, we can still employ the strategy above.
The remaining ``problematic'' queries are when $x_1, \dots, x_f$ all have high degrees.
Our threshold is set precisely to ensure that there are at most $n^{1/f}$ high-degree vertices in $G$.
Thus, each high-degree $v$ can take part in up to $(n^{1/f})^{f-1} = n^{1-1/f}$ problematic queries.
So, $L(v)$ has the budget to explicitly store a table with all these problematic queries and the required answers for them (``cut'' or ``not a cut'').
Now, given a problematic query $F= \{x_1, \dots, x_f\}$ of all high-degree vertices, we can just explicitly find the answer from the table of any arbitrary $x_i$.
This establishes~\Cref{thm:vertex-cut-labels} in the $f$-connected case.

\paragraph{General Graphs.}
We now turn to consider the general case, where $G$ is not necessarily $f$-connected.
Our construction still relies the low- and high-degree classification, with some additional important tweaks.
The high-level intuition is as follows.
By also letting the high-deg vertices store only $f+1$ of their neighbors, we can show the following dichotomy:
If $F$ is a cut, then either (i) at least two vertices among the stored neighbors of the query set $F$ are separated by $F$, or (ii) the subset of high-degree vertices in $F$ separates at least two vertices in $V-F$.
Case (i) is resolved using the $\ell(\cdot)$ labels, and case (ii) by exploiting that the number of high-degree vertices is bounded. We next describe the solution for general graphs in details.

\paragraph{Explicit Subsets.}
Vertex subsets $K \subseteq V$ of size $|K| \leq f$ consisting of high-degree vertices will be handled ``explicitly'' in our scheme, by inspecting $G-K$ and storing relevant information regarding it.
Let $A_K$ be the set of vertices in the connected component of $G-K$ with the maximal number of vertices (ties are broken arbitrarily).
Let $B_K$ be the union of the vertex sets of all other connected components in $G-K$.
We define the \emph{explicit label} $L(K)$ of $K$ as follows:
\begin{algorithm}[H]
\caption{Creating the explicit label $L(K)$ of a subset $K \subseteq V$ with $|K|\leq f$}\label{alg:explicit-label}
\begin{algorithmic}[1]
\State \textbf{store} $K$
\State \textbf{store} $|A_K|$
\If{$|A_K| \geq n-f$}
    \textbf{store} $B_K$
\EndIf
\end{algorithmic}
\end{algorithm}
Note that $L(K)$ consists of $O(f \log n)$ bits, because $|K| \leq f$, and $|B_K| \leq f$ when $|A_K| \geq n-f$.
The following lemma gives the reasoning behind the construction of $L(K)$:
\begin{lemma}\label{lem:explicit-cut}
    Let $K \subseteq F \subseteq V$ such that $|F| \leq f < n/2$.
    Suppose that either
    (i) $|A_K| < n - |F|$, or
    (ii) $|A_K| \geq n -|F|$ and $B_K \not \subseteq F$.
    Then $F$ is a vertex cut in $G$.
\end{lemma}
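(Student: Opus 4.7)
The plan is to use the natural decomposition $V = A_K \sqcup B_K \sqcup K$ together with the assumption $K \subseteq F$, so that $V - F = (A_K - F) \sqcup (B_K - F)$. Since $A_K$ and $B_K$ lie in distinct connected components of $G - K$, no edge of $G$ joins a vertex in $A_K - F$ to a vertex in $B_K - F$; hence whenever both of these sets are non-empty, $G - F$ is disconnected and $F$ is a vertex cut. The strategy is to verify this two-sidedness in each case of the lemma, with a separate argument for one subcase where $A_K$ might be entirely swallowed by $F$.

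For case (ii), the set $B_K - F$ is non-empty by hypothesis. For $A_K - F$, I would use the size bound $|A_K - F| \geq |A_K| - |F - K| \geq (n - |F|) - |F| > 0$, where the last step uses $|F| < n/2$. So both pieces are non-empty and the argument closes.

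For case (i), I first observe that $|A_K - F| \leq |A_K| < n - |F| = |V - F|$, which forces $B_K - F \neq \emptyset$ for free. If additionally $A_K - F \neq \emptyset$, the general argument above concludes. The remaining subcase, namely $A_K \subseteq F - K$, is the main obstacle: now $V - F \subseteq B_K$, and the $A_K$-vs-$B_K$ split carries no information. To handle this I would invoke the maximality of $A_K$ as the largest component of $G - K$. Enumerating the connected components $C_1, \ldots, C_r$ of $G - K$ that comprise $B_K$, each satisfies $|C_i| \leq |A_K| < |V - F|$. Thus no single $C_i - F$ can contain all of $V - F$, so at least two distinct $C_i$'s contribute vertices to $V - F$; since no edges of $G$ run between distinct $C_i$'s, the union of one such $C_i - F$ and the rest gives a non-trivial partition of $V - F$ with no edges across it, establishing that $F$ is a cut.
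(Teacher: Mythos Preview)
Your proof is correct. Case (ii) is essentially identical to the paper's argument: show $A_K - F$ and $B_K - F$ are both non-empty, hence $G-F$ is disconnected.

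For case (i), however, the paper takes a different and shorter route. Rather than arguing directly, it proves the contrapositive in one line: if $G - F$ is connected, then all of $V - F$ lies in a single connected component of $G - K$ (since $K \subseteq F$), so the largest such component satisfies $|A_K| \geq |V - F| = n - |F|$, contradicting (i). This avoids your case split entirely and never needs to consider the subcase $A_K \subseteq F$. Your direct argument is fine but works harder: you must separately handle the scenario where $A_K$ is swallowed by $F$, which forces you to invoke the maximality of $A_K$ to bound the sizes of the other components $C_i$ and argue that $V - F$ cannot fit inside a single one. Both arguments ultimately use that $A_K$ is the largest component of $G-K$, but the contrapositive packages this more cleanly.
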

\begin{proof}
    If $G-F$ is connected, then all pairs of vertices in $V-F$ are connected via paths that avoid $F \supseteq K$, hence $|A_K| \geq |V - F| = n - |F|$.
    Thus, if (i) holds, $F$ must be a vertex cut in $G$.
    If (ii) holds, then $B_K$ contains some $v \notin F$.
    Also, $A_K$ contains some $u \notin F$, as otherwise we would have $|F| \geq |A_K| \geq n-|F|$, implying that $f \geq |F| \geq n/2$, contradicting $f < n/2$.
    The vertices $v$ and $u$ lie in different connected components of $G-K$, and hence also of $G-F$, so $F$ is a cut in $G$.
\end{proof}

\paragraph{Constructing the Labels.}
We say that a vertex $v$ has \emph{low degree} if $|N(v)| \leq 2(f+1)n^{1-1/f}$ and \emph{high degree} otherwise.
Let $D$ and $H$ denote the sets of low-degree and high-degree vertices, respectively.
As $G$ has at most $(f+1)n$ edges, we get $|H| \leq n^{1/f}$.
Our labels are defined differently for low-degree and for high-degree vertices:

\begin{algorithm}[H]
\caption{Creating the label $L(v)$ of a $v \in V$}\label{alg:vertex-label}
\begin{algorithmic}[1]
\State \textbf{store} $v$ and $\ell(v)$
\If{$v \in D$}
    \State \textbf{store} $u$ and $\ell(u)$ for every $u \in N(v)$
\EndIf
\If{$v \in H$}
    \State \textbf{store} $u$ and $\ell(u)$ of every $u \in N_f(v)$
    \State \textbf{store} $L(K)$ of every $K \subseteq H$ such that $v \in K$ and $|K| \leq f$.
\EndIf
\end{algorithmic}
\end{algorithm}
To analyze the label length,
recall that each $\ell(u)$ label and each $L(K)$ label has only $\poly(f, \log n) = \tilde{O}(1)$ bits.
If $v \in D$, then $L(v)$ has $\tilde{O}(|N(v)|) \leq \tilde{O}(n^{1-1/f})$ bits.
If $v \in H$, then the number of subsets $K \subseteq H$ with $v \in K$ and $|K| \leq f$ is $\binom{|H|}{\leq f-1} \leq (f-1) \cdot |H|^{f-1} = O(f n^{1-1/f})$.
Also, $|N_f (v)| \leq f$.
So again, $L(v)$ has $\tilde{O}(n^{1-1/f})$ bits.
We next bound the total label length by $\tilde{O}(n)$.
Indeed,
low-degree vertices contribute up to $\sum_{v \in D}|L(v)|=\tilde{O}(|N(v)|) \leq O(|E|) = \tilde{O}(n)$ bits (as $G$ has $\leq (f+1)n = \tilde{O}(n)$ edges),
and high-degree vertices contribute up to $|H| \cdot \tilde{O} (n^{1-1/f}) = \tilde{O}(n)$ bits.

Lastly, we address construction time.
We only need to construct $L(K)$ for sets $K \subseteq H$ with $|K| \leq f$, which are at most $\tilde{O}(f |H|^f) = \tilde{O}(n)$.
Computing a single $L(K)$ clearly takes polynomial time.
Given the needed $L(K)$'s, computing all $L(v)$ for $v \in V$ also takes polynomial time.

\paragraph{Answering Queries.}
We now present the query algorithm, that given the $L(\cdot)$-labels of the vertices in $F \subseteq V$, $|F| \leq f$, determines if $F$ is a vertex cut in $G$.
It is given as~\Cref{alg:query}.
\begin{algorithm}[H]
\caption{Answering a query $F \subseteq V$, $|F| \leq f$ from the labels $\{L(v) \mid v \in F\}$}\label{alg:query}
\begin{algorithmic}[1]
\State let $T := \{u \in V-F \mid \text{$\ell(u)$ is stored in some $L(v)$ of $v \in F$}\}$
\State choose an arbitrary $s \in T$
\For{every $t \in T$}
    \State use the $\ell(\cdot)$ labels of $\{s,t\}\cup F$ to check if $s,t$ are connected in $G-F$.
    \If{$s,t$ are disconnected in $G-F$}
        \Return ``cut''
    \EndIf
\EndFor
\If{$K := H \cap F \neq \emptyset$}
    \State find $L(K)$ in any $L(v)$ of $v \in K$
    \State find $|A_K|$ in $L(K)$
    \If{$|A_K| < n-|F|$}{
        \Return \emph{``cut''}
    }
    \Else{
        \State find $B_K$ in $L(K)$
        \If{$B_K \not \subseteq F$}{
            \Return \emph{``cut''}
        }
        \EndIf
    }
    \EndIf
\EndIf
\State \Return \emph{``not a cut''}
\end{algorithmic}
\end{algorithm}

\paragraph{Correctness.}
The soundness direction is straightforward: if \emph{``cut''} is returned, then either we found $s,t \in V-F$ which are disconnected in $G-F$, or otherwise we have found some $K \subseteq F$ which certifies that $F$ is a vertex cut in $G$ by \Cref{lem:explicit-cut}.
The completeness direction remains: assuming $F$ is a cut, we should prove that \emph{``cut''} is indeed returned.
Let $(A,F,B)$ be a vertex cut in $G$.
Thus, $N(A) \subseteq F$ and $N(B) \subseteq F$.
We consider two complementary cases:

\begin{description}
\item[Case 1:]
There is some $x \in N(A)-N(B)$ and some $y \in N(B)-N(A)$.

We will show that then, the set $T$ in~\Cref{alg:query} must contain a vertex from $A$ and a vertex from $B$, so one of them must be disconnected from $s$ in $G-F$, hence \emph{``cut''} is returned.
We prove that $T \cap A \neq \emptyset$; the proof that $T \cap B \neq \emptyset$ is symmetric, by replacing $x$ with $y$ and swapping $A$ and $B$ everywhere in the following argument.

By the definition of $T$ and of the label $L(x)$, we have $N_f (x) - F \subseteq T$.
We claim that $x$ must have some neighbor $t \in N_f (x) - F$;
otherwise, we get $N_f (x) \subseteq F-\{x\}$, so $|N_f (x)| \leq f-1$, hence $N_f (x) = N(x)$, thus $N(x) \subseteq F$, contradicting that $N(x) \cap A \neq \emptyset$ as $x \in N(A)$.
We now observe that $t \in A$, as $t \notin F$, and $t \notin B$ because $x \notin N(B)$.
So, $t \in T \cap A$ as required.

\item[Case 2:]
One of $N(A)$, $N(B)$ is contained in the other.
Without loss of generality, $N(B) \subseteq N(A)$.
\begin{description}
    \item[Case 2a:]
    $N(B)$ contains a low-degree vertex $v \in D$.
    
    Since $v \in N(B) \subseteq N(A)$, $v$ must have a neighbor from $A$ and a neighbor from $B$.
    Since $v \in D$, the definition of $L(v)$ and of $T$ implies that $N(v) \subseteq T$.
    Thus, $T$ contains a vertex from $A$ and a vertex from $B$, hence \emph{``cut''} is returned (as argued in Case 1).

    \item[Case 2b:]
    $N(B)$ contains only high-degree vertices. Thus, $N(B) \subseteq K = H \cap F$.
    
    If $|A_K| < n-|F|$, we return \emph{``cut''} as needed.
    If $|A_K| \geq n-|F|$, we show that $B_K \not \subseteq F$, hence we also return \emph{``cut''}.
    Seeking contradiction, suppose $B_K \subseteq F$.
    Then $A \cup B = V-F \subseteq V-(K \cup B_K) = A_K$.
    Choose $s \in A$, $t \in B$, and a path $P$ from $s$ to $t$ in $G[A_K]$ which exists as $G[A_K]$ is a connected component of $G-K$.
    Then $P$ starts in $V-B$ and ends in $B$, so $V(P) \cap N(B) \neq \emptyset$.
    But, $V(P) \cap N(B) \subseteq A_K \cap K = \emptyset$ --- contradiction.
\end{description}

\end{description}
This concludes the proof of~\Cref{thm:vertex-cut-labels}.
\section{Vertex Cut Oracles}\label{sec:oracles}
\subsection{Technical Overview}\label{sec:overview-oracles}

\paragraph{Terminal Reduction.}
Our construction adapts the \emph{terminal reduction} framework of~\cite{NSYArxiv23}, originally devised for a sequential algorithm that outputs one minimum vertex cut, to the data structure setting of $f$-vertex cut oracles that need to handle any query of a potential vertex cut (which might not be a minimum cut).
To facilitate this adaptation, we introduce the notion of \emph{terminal cut detectors}, which are relaxed variants of $f$-vertex cut oracles, on which the main terminal reduction result is based.

\begin{definition}[Terminal Cut Detector]\label{def:cut-detector}
    Let $G=(V,E)$ be a graph with two terminal sets $T,S \subseteq V$, and let $f \geq 1$.
    An \emph{$(f,T,S)$-cut detector} for $G$ is a data structure $\D$ that can be queried with any $F \subseteq V$ s.t.\ $|F| \leq f$, and returns either \emph{``cut''} or \emph{``fail''} with the following guarantees:
    \begin{itemize}
        \item
        \emph{Soundness:}
        If $\D(F)$ returns \emph{``cut''}, then $F$ is a cut in $G$.
        
        \item
        \emph{Completeness:} 
        If $F$ separates $T$ but does not separate $S$, then $\D(F)$ returns \emph{``cut''}. In other words, if $\D(F)$ returns \emph{``fail''} and $F$ separates $T$, then $F$ also separates $S$.
    \end{itemize}
\end{definition}

Our key construction is a procedure that is given a terminal set $T$ and outputs a smaller terminal set $S^*$ as well as an $(f,T,S^*)$-cut detector. 
\begin{theorem}[Terminal Reduction]\label{thm:cut-detectors}
    There is a deterministic algorithm that given an $n$-vertex graph $G=(V,E)$ with terminal set $T \subseteq V$ and integer parameter $f = O(\log n)$, computes
    \begin{itemize}
        \item a new terminal set $S^* \subseteq V$ such that $|S^*| \leq \frac{1}{2}|T|$, and
        \item an $(f,T,S^*)$-cut detector $\D$ with space $\tilde{O}(n)$ and query time $\tilde{O}(2^{|F|})$.
    \end{itemize}
    Assuming $G$ has $O(fn)$ edges, the running time can be made $\tilde{O}(n^{1+\delta})$ for any constant $\delta > 0$.
    It can be improved to $n^{1+o(1)}$ at the cost of increasing the space and query time of $\D$ by $n^{o(1)}$ factors.
\end{theorem}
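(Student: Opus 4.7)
The plan is to build $\D$ on top of the ``cut respecting'' hierarchical expander decomposition from \Cref{sec:cut-respecting-family}, which (per the abstract) packages $G$ into a family of terminal vertex expanders with total $\tilde{O}(n)$ vertices that jointly ``capture'' every vertex cut of size at most $f$ in $G$. Concretely, I would first invoke this decomposition on $(G,T)$ to obtain a collection of $(T_i,\phi)$-expanders $H_i$ for some $\phi = 1/\polylog(n)$, and then pick $S^* \subseteq T$ by retaining, from each piece, only the terminals lying on the \emph{large side} of every local cut exposed by the decomposition (or equivalently, trimming away the ``small-side'' terminals of each piece). Since the small sides have at most half of the piece's terminals, summing across the hierarchy yields $|S^*|\le |T|/2$. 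This small-side trimming is exactly aligned with the completeness requirement: any $F$ that separates $T$ in $G$ but leaves $S^*$ unsevered must, by the capture property, exhibit a \emph{$T_i$-separating but not $S^*\cap T_i$-separating} cut inside some piece $H_i$ -- a local event we can hope to detect.

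The detector itself is assembled from local ingredients. For every piece $H_i$ I would store an $st$-type vertex failure connectivity oracle (e.g., \cite{LongW24}), preceded by the sparsification of \Cref{thm:conn-certificate} so that $|E(H_i)| = \tilde{O}(|V(H_i)|)$; the per-piece space is $\tilde{O}(|V(H_i)|)$, so summed over the decomposition it totals $\tilde{O}(n)$. On a query $F$, $\D$ walks over the pieces whose vertex set meets $F$, and for each such $H_i$ enumerates all subsets $F' \subseteq F$ that could plausibly equal $F\cap V(H_i)$'s restriction to a potential small side of a $T_i$-cut; for each guess it issues one $st$-connectivity probe between representatives extracted from $T_i$. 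Summed over the at most $|F|$ relevant pieces and $2^{|F|}$ subsets we obtain query time $\tilde{O}(2^{|F|})$, and the point of enumerating the subsets is that once the correct partition of $F$ along the hypothetical cut is guessed, the residual connectivity inside a $(T_i,\phi)$-expander is determined by a small number of $st$-queries because the ``deep'' small side in an expander is forced to be small.

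The hard step is proving the completeness direction from the cut-respecting property. I would argue by contradiction: assume $F$ separates $T$ in $G$ but $\D(F)$ always returns \emph{``fail''}; then descend the hierarchy to the finest level whose piece $H_i$ still has $T_i$ separated by $F\cap V(H_i)$ (existence of such a level is precisely the capture guarantee of \Cref{sec:cut-respecting-family}). Inside $H_i$, the $(T_i,\phi)$-expansion bound forces the small-side $T_i$-vertices of this cut to be exactly the ones trimmed when constructing $S^*$, so $S^*\cap T_i$ is contained entirely in the large side, contradicting the assumption that some subset enumeration in the detector missed the separation. The main obstacle I anticipate is making the small-side trimming globally consistent across the hierarchy while preserving the $|S^*|\le |T|/2$ bound: a terminal may appear in several pieces, so I would charge each trimmed terminal to a unique ``owning'' piece (e.g., its finest expander), and use the expansion factor $\phi$ together with the $\tilde{O}(n)$ total vertex bound of the decomposition to show that the halving is attained. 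Finally, the $\tilde{O}(n^{1+\delta})$ versus $n^{1+o(1)}$ trade-off in preprocessing time inherits directly from the corresponding trade-off in constructing the cut-respecting decomposition and in building the $st$-oracles on the pieces.
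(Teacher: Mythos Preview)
Your proposal has the right instinct to build on the expander decomposition, but there are two genuine gaps that make the argument fail as stated.

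\medskip
\textbf{The definition of $S^*$ is the wrong one.}
You take $S^* \subseteq T$ by ``retaining the large-side terminals'' in each piece. In the paper, $S^*$ is instead the union of the \emph{separators} $S_q$ of the sparse vertex cuts $(L_q,S_q,R_q)$ used to split the instances during the recursive decomposition; these vertices need not be terminals at all. The halving $|S^*| \le |T|/2$ then follows because each such cut satisfies $|S_q| \le \epsilon |T_q|$ (the sparsity guarantee of \Cref{lem:balanced-or-expander}), and summing over the $O(\log|T|)$-depth tree with $\epsilon \approx 1/\log|T|$ gives the bound. More importantly, your ``small-side trimming'' does not deliver completeness: the whole point of taking $S^*$ to be the separators is that if $F$ does not separate $S^* \supseteq S_q$, then in particular it does not separate $S_q$, which is exactly the hypothesis of the left/right-graph completeness lemma (\Cref{lem:cut-in-G-to-left-right-graphs}) that pushes the $T$-separation down to a child. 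The $(T_i,\phi)$-expansion of the pieces plays no role in defining $S^*$ or in this descent argument; it is used only at the leaves, to reduce terminal-cut detection to $\tilde{O}(f/\phi)$ many $st$-queries via a low-degree Steiner tree.

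\medskip
\textbf{The query-time analysis is broken.}
You assert that at most $|F|$ pieces meet $F$ and then enumerate $2^{|F|}$ subsets inside each. But vertices---especially the separator vertices and the representative sets $U_L,U_R$---are deliberately duplicated across pieces, so a single vertex of $F$ can lie in many leaf instances; there is no ``$|F|$ relevant pieces'' bound. The paper's $\tilde{O}(2^{|F|})$ does \emph{not} come from subset enumeration at all. It comes from a tree search on the decomposition tree $\T$ with a trim/branch mechanism: at each internal node $q$ one either resolves one child directly via a $\USOracle$ (``trim'') and recurses only on the other, or one ``branches'' to both children---but in the branch case one proves $|F \cap V(G_{L_q})|,\,|F \cap V(G_{R_q})| \le |F \cap V(G_q)| - 1$, so the number of branch nodes is at most $O(2^{|F|})$ (\Cref{lem:branch-nodes-in-query}). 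Without the tree structure and the trimming device, the flat scan over pieces you propose will touch $\Omega(n/\polylog n)$ leaves in the worst case.
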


Given~\Cref{thm:cut-detectors}, we can easily derive the $f$-vertex cut oracles of~\Cref{thm:main-cut-theorem}:
\begin{proof}[Proof of~\Cref{thm:main-cut-theorem}]
    Define $T_0 = V$. For $i = 1, 2, \dots$, apply~\Cref{thm:cut-detectors} with input terminal set $T_{i-1}$, denoting the new  terminal set by $T_i$ and the resulting  $(f, T_{i-1}, T_i)$-cut detector by $\D_{i-1}$.
    Halt after the first iteration $\ell$ that produced $T_\ell = \emptyset$; we have $\ell = O(\log n)$ as the size of the new terminal set halves in each iteration.
    The $f$-vertex cut oracle simply stores $\D_0, \D_1, \dots, \D_{\ell-1}$.
    
    Given query $F \subseteq V$ s.t.\ $|F| \leq f$, the oracle queries each $\D_i$ with $F$, and determines that $F$ is a vertex cut in $G$ if and only if some $\D_i (F)$ returned \emph{``cut''}.
    Indeed, if $F$ is not a cut in $G$, the soundness in~\Cref{def:cut-detector} ensures that no $\D_i (F)$ returns \emph{``cut''}. 
    Conversely, if $F$ is cut in $G$, then let $i$ be the largest index such that $F$ separates $T_i$ (note that $0 \leq i \leq \ell-1$ as $F$ separates $T_0 = V$ but not $T_\ell = \emptyset$); 
    by the completeness in~\Cref{def:cut-detector}, $\D_i (F)$ returns \emph{``cut''}.

    The parameters (space, query and preprocessing time) of the resulting $f$-vertex cut oracle are only larger by an $\ell = O(\log n)$ factor then those in~\Cref{thm:cut-detectors}, which yields~\Cref{thm:main-cut-theorem}.%
    \footnote{
        The additive $O(m)$ term in preprocessing time of~\Cref{thm:main-cut-theorem} is for applying the standard sparsification of~\cite{NagamochiI92} to reduce the number of edges in $G$ to $O(fn)$.
    }
\end{proof}

So, from now on we set our goal to prove~\Cref {thm:cut-detectors}. We start by considering the case where the given graph is a terminal expander for which a solution follows readily by using the data-structure for $st$-cut variant. Then, we explain how to handle general graphs by reducing to terminal expanders.  

\paragraph{Warm-up: Terminal Expanders (or Few Terminals).}
We start by observing that~\Cref{thm:cut-detectors} is easy to prove when $G$ is a $(T,\phi)$-expander.
In fact, in this case we can just take $S^* = \emptyset$, and reduce the query time to $\poly(f, \log n) = \tilde{O}(1)$.
As usual, in the following the graph $G=(V,E)$ has $n$ vertices and $m$ edges; The notation $\bar{m}$ stands for $\min\{m, fn\}$.

\begin{restatable}[Terminal Expanders]{lemma}{terminalexpanders}\label{thm:terminal-expander-DS}
    Suppose $G$ is known to be a $(T,\phi)$-expander for $T \subseteq V$ and $0 < \phi \leq 1$.
    Then, there is a deterministic $(f,T,\emptyset)$-cut detector for $G$, denoted $\TEOracle(G,T,\phi,f)$, with
    $\tilde{O}(\bar{m})$ space, $\tilde{O}(f^2/\phi)$ query time, and $m^{1+o(1)}/\phi + \tilde{O}(f\bar{m})$ preprocessing time.
\end{restatable}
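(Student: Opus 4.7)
The plan is to combine Nagamochi--Ibaraki sparsification with an off-the-shelf $f$-vertex failure $st$-connectivity oracle, and to exploit the $(T,\phi)$-expansion to restrict each query to a small number of pairwise connectivity probes. First, apply \Cref{thm:conn-certificate} to reduce $G$ to a subgraph $H\subseteq G$ with $|E(H)|\le(f+1)n = O(\bar{m})$ that preserves all vertex cuts of size $\le f$; this takes $O(m)$ time, after which all work is done on $H$. On $H$ build a deterministic $f$-vertex failure $st$-connectivity oracle (for concreteness, the Long--Wang oracle) with $\tilde{O}(fn)=\tilde{O}(\bar{m})$ space, $F$-update time $\tilde{O}(f^2)$, and post-update $st$-query time $\tilde{O}(f)$, at preprocessing cost $\tilde{O}(\bar{m}^{1+\delta})$ or $\bar{m}^{1+o(1)}$. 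The structural crux is that in a $(T,\phi)$-expander, applying the expansion definition to any $T$-cut $(L,F,R)$ of size $\le f$ yields $|F|\ge\phi\min\{|T\cap(L\cup F)|,|T\cap(R\cup F)|\}$, hence WLOG $|T\cap L|\le f/\phi$. Moreover, for any connected component $C\subseteq L$ of $G-F$ with $T\cap C\ne\emptyset$, the boundary $N(C)\subseteq F$ is nonempty, so some $u\in F$ has a neighbor in $C$.

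To locate such a witness efficiently, precompute for every vertex $u\in V$ two small items: a representative neighbor set $R_u\subseteq N_H(u)$ of size $\min(f+1,\deg_H(u))$, and a ``nearby-terminal set'' $\Pi_u\subseteq T$ of size $\lceil f/\phi\rceil + f + 1$, chosen e.g.\ as the closest terminals to $u$ via BFS in $H$, augmented by a limited per-vertex expander-decomposition/isolating-cut subroutine (the latter handling cases where BFS distance does not align with the cut structure, and accounting for the $m^{1+o(1)}/\phi$ preprocessing term via a deterministic almost-linear-time maxflow primitive). Total storage is $\tilde{O}(\bar{m})$ and construction time $m^{1+o(1)}/\phi + \tilde{O}(f\bar{m})$. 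For the query: update the oracle with $F$, then for each $u\in F$ pick a canonical $r(u)\in R_u-F$ and iterate over $t\in\Pi_u-F$, checking $st$-connectivity of $r(u)$ and $t$ in $H-F$; return ``cut'' at the first detected disconnection, else ``fail''. Soundness is immediate since any such disconnection in $H-F$ witnesses a vertex cut in $G$. For completeness, if $F$ is a $T$-cut with small-side component $C$ and witness $u\in F$ adjacent to $C$, then $r(u)$ can be chosen in $C$; since $|\Pi_u-F|\ge \lceil f/\phi\rceil+1 > |T\cap C|$, some $t\in\Pi_u-F$ lies outside $C$, producing a disconnection. The probe count is $O(f\cdot f/\phi)$, answered in amortized $\tilde{O}(1)$ each by the oracle's batched component-representative capability, giving query time $\tilde{O}(f^2/\phi)$.

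The main technical obstacle is twofold. (i) Ensuring that $\Pi_u$ provably captures a witness for every $T$-cut: BFS-closeness in $H$ need not align with the cut-component structure when the small side is reached through a single edge from $u$ yet is BFS-far, so $\Pi_u$ may miss terminals inside $C$. This is handled by augmenting the BFS construction with local expander-decomposition/isolating-cut computations around each $u$, whose total cost is $m^{1+o(1)}/\phi$ using deterministic almost-linear-time maxflow. (ii) Selecting $r(u)\in R_u-F$ that reliably lies in the small-side component $C$: since we precomputed $R_u$ obliviously, the ``right'' $r(u)\in C$ may not be our default choice; this is resolved by iterating $r$ over $R_u - F$ (at most $f+1$ choices) and using the oracle's batched component-ID extraction so that the ranging over $R_u$ does not multiply the probe count by an extra $f$ factor, keeping the query time within $\tilde{O}(f^2/\phi)$.
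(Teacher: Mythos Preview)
Your proposal has genuine gaps, and the paper's proof takes a cleaner route that you are missing.

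\textbf{The real obstacle you identify is fatal, not a technicality.} Your completeness argument requires, for some $u\in F$ adjacent to the small-side component $C$, that $R_u$ contains a neighbor of $u$ lying in $C$. But $R_u$ is an oblivious choice of $f{+}1$ neighbors; if $\deg(u)>f{+}1$ and $u$ has only one neighbor in $C$, $R_u$ may miss it entirely. Iterating over $R_u-F$ does not help. Likewise, the construction of $\Pi_u$ is never actually specified: you concede that BFS-closeness does not align with cut structure, and the proposed fix (``local expander-decomposition/isolating-cut computations around each $u$'') is a placeholder, not an algorithm with a correctness proof. Finally, your query-time accounting relies on a ``batched component-ID extraction'' feature of the $st$-oracle that is not part of \Cref{thm:storacle}; each pairwise query costs $O(f)$, not $\tilde{O}(1)$, so your probe count would give $\tilde{O}(f^4/\phi)$, not $\tilde{O}(f^2/\phi)$.

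\textbf{The missing idea: a low-degree Steiner tree.} The paper exploits the $(T,\phi)$-expansion not via the size of the small side, but via a structural consequence proved in~\cite{LongS22full}: a $(T,\phi)$-expander admits a Steiner tree $\tau$ spanning $T$ with maximum degree $O(\log^2 n/\phi)$, computable in $m^{1+o(1)}/\phi$ time. Given a query $F$, the set of $\tau$-neighbors of $F$ has size $\tilde{O}(f/\phi)$. If $F$ separates two terminals $t,t'$, the $\tau$-path between them must cross $F$, and the vertices immediately before and after the crossing are $\tau$-neighbors of $F$ that are disconnected in $G-F$. So one fixes an arbitrary non-faulty $\tau$-neighbor $v^*$ and tests connectivity of $v^*$ against every other non-faulty $\tau$-neighbor: $\tilde{O}(f/\phi)$ queries at $O(f)$ each, plus the $\tilde{O}(f^2)$ update, giving $\tilde{O}(f^2/\phi)$. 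This replaces your ad-hoc $R_u,\Pi_u$ gadgets with a single globally-defined tree whose correctness is immediate.
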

The proof is based on the fact that the terminal expander $G$ admits a \emph{low-degree Steiner tree} for $T$; this is a tree $\tau$ in $G$ that connects all terminals in $T$ and has maximum degree $\tilde{O}(1/\phi)$, as was shown in~\cite{LongS22full,LongPS25}.
Consider a query $F$, and let $N_\tau (F)$ be the set of neighboring vertices to $F$ on the tree $\tau$.
Then it suffices for us to decide whether all vertices in $N_\tau (F)$ are connected in $G-F$: if so, then $F$ does not separate $T$, and if not, then $F$ is a cut in $G$.
So answering the query $F$ amounts to checking connectivity between $\tilde{O}(|F|/\phi)$ vertices in $G-F$.
This can be done efficiently by using an \emph{$f$-vertex failure connectivity oracle}~\cite{DuanP20,LongS22,LongW24} (i.e., $st$-cut oracles) which can be updated with a failure set $F \subseteq V$ s.t.\ $|F| \leq f$, and subsequently can answer connectivity queries between pairs of vertices in $G-F$.

In fact, $f$-vertex failure connectivity oracles also immediately yield trivial $(f,T,\emptyset)$-cut detectors, by updating with the query $F$ and checking connectivity in $G-F$ between all vertices in $T$ (i.e., choosing an arbitrary terminal and checking if all other terminals are connected to it).
This strategy is efficient when there are only few terminals.
\begin{restatable}[Few Terminals]{lemma}{fewterminals}\label{lem:terminal-data-structure}
    Let $T \subseteq V$.
    There is a deterministic $(f,T,\emptyset)$-cut detector for $G$, denoted $\FEOracle(G,T)$, with
    $\tilde{O}(\bar{m})$ space, $\tilde{O}(f^2) + O(f|T|)$ query time, and $O(m) + \bar{m}^{1+o(1)} + \tilde{O}(f\bar{m})$ preprocessing time.
\end{restatable}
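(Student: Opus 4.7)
The plan is to reduce one $(f, T, \emptyset)$-cut detection query on $G$ to at most $|T|$ pairwise connectivity tests on an off-the-shelf $f$-vertex failure connectivity ($st$-cut) oracle. First I would apply the Nagamochi--Ibaraki sparsification from \Cref{thm:conn-certificate} to obtain, in $O(m)$ time, a subgraph $H \subseteq G$ with $|E(H)| \leq (f+1)n = O(\bar{m})$ edges that preserves pairwise connectivity in the presence of any deletion of at most $f$ vertices. I would then preprocess $H$ into a standard $f$-vertex failure connectivity oracle, for instance that of Long and Wang~\cite{LongW24}: on an $O(\bar{m})$-edge graph this uses $\tilde{O}(\bar{m})$ space, takes $\bar{m}^{1+o(1)} + \tilde{O}(f\bar{m})$ preprocessing time, supports updates with any failure set of size at most $f$ in $\tilde{O}(f^2)$ time, and then answers pairwise connectivity queries in $O(f)$ time each.

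On a query $F \subseteq V$ with $|F| \leq f$, the detector first updates the underlying oracle with $F$. If $|T - F| \leq 1$, it returns \emph{``fail''}, since $F$ cannot separate $T$ in this case. Otherwise, it fixes an arbitrary terminal $s \in T - F$ and, for each remaining $t \in (T - F) \setminus \{s\}$, queries the oracle to test connectivity between $s$ and $t$ in $H - F$. It returns \emph{``cut''} the first time a disconnection is reported, and \emph{``fail''} if no such pair is found.

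Correctness would follow immediately from the sparsification guarantee. For soundness, any disconnected pair discovered in $H - F$ is also disconnected in $G - F$ by \Cref{thm:conn-certificate}, so $F$ is genuinely a cut in $G$. For completeness, if $F$ separates $T$ in $G$, then by definition there exist two terminals in $T - F$ that are disconnected in $G - F$, and hence also in $H - F$; consequently the chosen $s$ must lie in a different connected component of $H - F$ from at least one other terminal in $T - F$, and our sweep of pairwise tests will detect this.

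For the complexity, the query cost is $\tilde{O}(f^2)$ for the single update plus at most $|T - F| - 1 = O(|T|)$ pairwise connectivity tests of $O(f)$ each, giving a total of $\tilde{O}(f^2) + O(f|T|)$ as required; the space and preprocessing time bounds inherit directly from the underlying oracle applied to $H$. I do not anticipate a significant obstacle: the lemma is essentially a black-box wrapper around the $st$-cut oracle, made possible precisely because $T$ is small enough that one can afford to sweep through it exhaustively after a single update.
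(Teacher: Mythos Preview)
Your proposal is correct and follows essentially the same approach as the paper: build the $f$-vertex failure connectivity oracle of~\cite{LongW24}, update it with $F$, fix one surviving terminal $s \in T - F$, and sweep through the remaining terminals with pairwise connectivity queries. The only cosmetic difference is that you spell out the Nagamochi--Ibaraki sparsification explicitly, whereas the paper absorbs it into the statement of the black-box oracle (via the $\bar{m}$ notation in~\Cref{thm:storacle}).
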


The formal proofs of~\Cref{thm:terminal-expander-DS} and~\Cref{lem:terminal-data-structure} are in~\Cref{sec:special-cut-detectors}.
Of course, in the general case of~\Cref{thm:cut-detectors}, $G$ might not be a terminal expander and the terminal set $T$ could be very large; the high-level strategy is recursively decomposing this general instance into many instances of terminal expanders (or graphs with few terminals), as we discuss next.

\paragraph{The Left-Right Decomposition.}
On a high level, the divide-and-conquer decomposition approach is based on splitting the original instance $(G,T)$ into two, according to a \emph{sparse} vertex $T$-cut $(L,S,R)$, namely, such that $|S|$ is much smaller than the number of terminals in $L \cup S$ or in $R \cup S$; such a cut exists as $G$ is not a terminal expander.
The splitting technique was introduced in~\cite{SaranurakY22} for the case $T=V$, and extended to $T \subseteq V$ in~\cite{NSYArxiv23} (we further carefully adapt it to handle \emph{non-minimum} cuts).

The two new instances are called the \emph{$f$-left and $f$-right graphs} $G_L$ and $G_R$.
The $f$-left graph $G_L$ is (except in corner cases) just $G[L \cup S]$ plus an additional ``representative set'' $U_R$ of $f+1$ terminals from $R$, that are connected by a clique between themselves and by a biclique to $S$. $G_R$ is defined symmetrically.
Note that $G_L$ and $G_R$ are not necessarily subgraphs of $G$, but their vertex sets is a subset of $V(G)$. These graphs have two crucial properties that we term as \emph{cut-respecting properties}: 
The ``completeness'' property is that together, $G_L$ and $G_R$ capture every $T$-cut $F$ s.t.\ $|F| \leq f$, \emph{as long as $F$ does not separate $S$}.
Namely, for such $F$, either the remaining part of $F$ in $G_L$ separates the remaining terminals there, or this will happen in $G_R$.
Additionally, they also have ``soundness'' properties, ensuring that a vertex cut in $G_L$ or $G_R$ of size $\leq f$ is also a cut in $G$ (since $V(G_L), V(G_R)\subseteq V(G)$, this is well-defined).

Roughly speaking and ignoring some technical nuances, we keep on decomposing each of $G_L$ and $G_R$ recursively, until we get terminal expanders or instances with few terminals.
The required new terminal set $S^*$ for~\Cref{thm:cut-detectors} is then defined as the union of separators $S$ over all the sparse cuts that were found across the recursion.
The cut-respecting properties imply that if $F$ is a separator of $T$ of size $\leq f$ that is not captured by any leaf instance of the recursion, then $F$ must also separate $S^*$, and so an $(f,T,S^*)$-cut detector can afford to return \emph{``fail''} on $F$.

The recursion is made effective and efficient by using a powerful tool devised in~\cite{LongS22} that finds a ``good'' sparse cut $(L,S,R)$.
Intuitively, this sparse cut is promised to either cause the number of terminals to shrink (by some constant factor) in both $G_L$ and $G_R$, or to cause terminal shrinking on one side while ensuring the other is an expander that requires no recursion.

We call the resulting binary recursion tree $\T$ the $f$-left-right decomposition tree ($f$-LR tree, for short); storing the recursive instances in this tree serves the basis of our $(f,T,S^*)$-cut detector $\D$.
The $f$-LR tree $\T$ is shown to have the following key properties.
\begin{itemize}
    \item (Logarithmic depth) The depth of $\T$ is $O(\log |T|)$.
    \item (Near-linear space) Storing all the instances $(G',T')$ in the nodes of $\T$ consumes $\tilde{O}(n)$ space.
    \item (Terminal reduction) The union $S^*$ of all separators $S$ from the sparse cuts $(L,S,R)$ found in the internal nodes of $\T$ is of size at most $|T|/2$.
\end{itemize}

Recall that the leaf instances in $\T$ are terminal expanders or have few terminals, for which we can apply the efficient cut detectors of~\Cref{thm:terminal-expander-DS} and~\Cref{lem:terminal-data-structure}, respectively. For a given $F \subseteq V$ s.t.\ $|F| \leq f$, a natural query algorithm simply feeds $F$ into all cut detectors in the leaves. The correctness is immediate by the cut-respecting properties of the left-right decomposition. Specifically, if $F$ separates $T$ but not $S^*$, then one of these must return cut \emph{``cut''};
and conversely, if one of them returns\emph{``cut''}, then $F$ must be a cut in $G$. The key limitation of this approach is in the query time which might be linear as there might be $\Omega(n)$ leaves in $\T$ that are explored. To obtain the desired $\tilde{O}(2^{|F|})$ query time, we devise a tree-searching procedure that essentially allows us to focus on only $O(2^{|F|})$ leaves.

\paragraph{Tree Searching.}
Let us fix some query $F \subseteq V$ s.t.\ $|F| \leq f$.
Consider first the root node of $\T$, which is associated with the original instance $(G,T)$.
The cut-respecting properties guarantee us that if $F$ separates $T$ but not $S^*$, then it is captured either by $G_L$ in the left child or by $G_R$ in the right child.
But we cannot tell a priori which is the correct child, so it seems like we have to ``branch'' our search recursively to both the left and right subtrees of $\T$.
As this phenomenon might reoccur in many nodes in our search, this could lead us to visit too many nodes in $\T$.
However, in some cases, we do not have to branch.
For example, what if $F$ does not contain any vertex of $G_L$?
Then clearly, there is no point exploring the left child; we can ``trim'' the search on the left subtree and continue our search only in the right subtree.
So intuitively, we want to trim as many recursive searches as we can to get faster query time.

Let $\T(F)$ denote the subtree of $\T$ which is visited during the tree search for the query $F$.
The ``branch'' nodes are those in $\T(F)$ that have two children, and the ``trim'' nodes are those that have only one child.
Our search mechanism ensures enough trimming so that the following property holds:
If $(G',T')$ is an instance in some ``branch'' node, then $|F \cap V(G'_L)|,|F\cap V(G'_R)| \leq |F \cap V(G')|-1$.
In other words, when moving from a ``branch'' node to its children, the size of the query decreases by at least $1$ in both the left and the right child.
As the original query at the root has size $|F|$, this implies that there are only $O(2^{|F|})$ ``branch'' nodes.
So, as $\T(F)$ only has depth $O(\log |T|)$, it can only contain $O(2^{|F|} \log |T|)$ nodes.

Consider again the root of $\T$, and suppose now that $F \subseteq V(G_L)$.
We have to trim one of the subtree searches, as otherwise, the root would be a ``branch'' node violating the required property.
Because $F$ is consumed entirely in the left graph $G_L$, trimming the search in the right graph $G_R$ seems more plausible.
Note that $G_L$ and $G_R$ share only the vertices in $S$ and in the representative terminal sets of the sides $U_L$ and $U_R$.
Thus, the part $F \cap V(G_R)$ of $F$ that survives in $G_R$ must be contained in $U_L \cup U_R \cup S$.
As we only care about detecting vertex cuts that do not separate $S^* \supseteq S$, this special structure of $F \cap V(G_R)$ turns out to be enough for us to answer this query in $G_R$ directly, without any further recursion.
We term the relevant data structure to handle this case the ``US cut detector'' (where US stands for $U = U_L \cup U_R$ and $S$):

\begin{restatable}[US Cut Detectors]{lemma}{uscutdetector}\label{lem:US-data-structure}
    Let $G = (V,E)$ be a graph with $n = |V|$ and $m = |E|$.
    Let $U, S \subseteq V$.
    There is an $(f,V,S)$-cut detector $\USOracle(G,U,S,f)$ which is restricted to answer queries $F \subseteq S \cup U$, $|F| \leq f$, with
    $\tilde{O}(2^{|U|} fn)$ space, $O(2^{|F|} f \log n)$ query time, and $O(2^{|U|} (m + fn\log n))$ preprocessing time.
    Further, if $G$ is known to be $f$-connected, the query time improves to $O(f \log n)$.
\end{restatable}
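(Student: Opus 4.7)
The key idea is a clean structural characterization that reduces the task to a static precomputation and dictionary lookups, avoiding heavier vertex-failure connectivity machinery.

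\textbf{Step 1: Structural characterization.} I will prove that for $U' \subseteq U$ and $S' \subseteq S$ with $|U' \cup S'| \leq f$, the set $F = U' \cup S'$ is a cut in $G$ that does not separate $S$ if and only if there exists a connected component $C$ of the induced subgraph $G[V \setminus (U' \cup S)]$ satisfying both
\begin{itemize}[nosep]
\item $N_G(C) \cap S \subseteq S'$, and
\item $V \setminus (F \cup C) \neq \emptyset$.
\end{itemize}
For the forward direction, any component $B$ of $G-F$ that is disjoint from $S$ must lie in $V \setminus (U' \cup S)$ (since $S' \subseteq S$) and must in fact coincide with a connected component of $G[V \setminus (U' \cup S)]$ (else $B$ would extend via a non-$F$ vertex outside $S$); also $S-F$ lies outside $F \cup B$, yielding the second condition. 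For the reverse direction, since $C \subseteq V \setminus (U' \cup S)$, we have $N_G(C) \subseteq U' \cup S$, so the first condition forces $N_G(C) \subseteq U' \cup S' = F$, hence removing $F$ isolates $C$; the second condition ensures $V-F$ has at least two parts.

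\textbf{Step 2: Preprocessing.} For every subset $U' \subseteq U$, I run a single BFS/DFS on $G[V \setminus (U' \cup S)]$ to compute its connected components; for each component $C$ I compute the $S$-boundary $\beta(C) = N_G(C) \cap S$ by scanning adjacency lists, discard those with $|\beta(C)| > f$, and insert the remaining pairs $(\beta(C), |C|)$ into a deterministic dictionary $\mathcal{H}_{U'}$ keyed by $\beta(C)$ (a balanced BST on sorted tuples gives $O(f \log n)$ insert/lookup). When several components share a key I keep the smallest stored $|C|$, which is all that is needed to test the second condition below.

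\textbf{Step 3: Query.} Given $F \subseteq S \cup U$ with $|F| \leq f$: let $U' = F \cap U$ and $S' = F \cap S$, and enumerate all $2^{|S'|} \leq 2^{|F|}$ subsets $T \subseteq S'$; for each, look up the entry keyed $T$ in $\mathcal{H}_{U'}$ in $O(f \log n)$ time and, if present, verify the second condition via the $O(1)$ test $|C| < n - |F|$. Return \emph{``cut''} on the first success, else \emph{``fail''}. Soundness and completeness follow directly from Step 1.

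\textbf{Step 4: $f$-connected case.} When $G$ is $f$-connected, every cut has size at least $f$, so for $|F| < f$ we immediately answer \emph{``fail''}; for $|F| = f$ any witnessing component must be \emph{tight}, meaning $N_G(C) = F$ and in particular $\beta(C) = S'$ exactly. Thus a single direct dictionary lookup of $T = S'$ suffices, giving the claimed $O(f \log n)$ query time.

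\textbf{Complexity accounting.} Per $U'$, the components have total size $O(n)$ and each boundary stores $O(f)$ identifiers, so $\mathcal{H}_{U'}$ occupies $\tilde{O}(fn)$ space and is built in $O(m + fn \log n)$ time. Summing over the $2^{|U|}$ choices of $U'$ matches the claimed bounds. \textbf{Main obstacle.} The substantive content is Step 1, in particular handling the degenerate regimes where $V - F = C$ (the algorithm must not flag this as a cut) and where $S \subseteq F$ (the ``does not separate $S$'' clause is vacuous but the algorithm must still detect disconnection of $G[V \setminus (U' \cup S)]$). Once the characterization is secure, the remaining components are routine data-structure bookkeeping.
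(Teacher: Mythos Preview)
Your approach is essentially the same as the paper's: both precompute, for every subset $W \subseteq U$, the connected components of $G - (S \cup W)$ together with their (small) neighbor-sets, and answer a query $F$ by enumerating subsets and doing dictionary/array lookups. The cosmetic differences are that you key on $\beta(C) = N_G(C)\cap S$ rather than on the full $N_G(C)$, and you replace the paper's case split on $F \supseteq S$ (using a stored ``is $G-(S\cup W)$ connected?'' bit) by the uniform test $|C| < n - |F|$, which is a nice simplification.

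One small correction: the ``if and only if'' in Step~1 is false as stated. The backward direction only yields that $F$ is a cut, not that $F$ fails to separate $S$; e.g., take $V=\{s_1,s_2,a,b\}$ with edges $s_1a,\,s_2a,\,ab$, $S=\{s_1,s_2\}$, $U=\{a\}$, $F=\{a\}$: the component $C=\{b\}$ satisfies both bullets, yet $F$ separates $s_1$ from $s_2$. This does not affect correctness of the detector, since soundness only needs ``RHS $\Rightarrow$ $F$ is a cut'' and completeness only needs ``$F$ is a cut not separating $S$ $\Rightarrow$ RHS''; but you should state the two implications separately rather than as an equivalence. (Relatedly, in your forward-direction sketch, when $S\subseteq F$ you cannot invoke $S-F$ for the second bullet; you instead use that $G-F=G[V\setminus(U'\cup S)]$ has at least two components, which you correctly flag in your ``Main obstacle''.)
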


The proof is found in~\Cref{sec:special-cut-detectors}.
So, by augmenting the root with $\USOracle(G_R, U_L \cup U_R, S, f)$, we can just query this cut detector with $F \cap V(G_R)$ and trim the recursive search in the right subtree.
Of course, there is nothing special about the root, or about its right child; we augment all nodes in $\T$ with US cut detectors for trimming searches in their left or right children.

\paragraph{Improving Space and Preprocessing Time.}
There is one small caveat with the approach described above: the $O(2^{|U|})$ factors in the space and preprocessing time of~\Cref{lem:US-data-structure} lead to corresponding factors of $O(2^{2f})$ in the space and preprocessing time of $\D$ in~\Cref{thm:cut-detectors}, and hence of the entire oracle in~\Cref{thm:main-cut-theorem} (as we use U-sets of the form $U_L \cup U_R$, that have size up to $2f+2$).
On a high level, these are shaved by applying ``hit-miss hashing'' tools~\cite{KarthikP21}, allowing us to essentially focus only on the case where query $F$ does not contain any terminals from $T$.
As $U_L \cup U_R \subseteq T$ (except in technical corner cases), this means that whenever we use a US cut detector, the query is actually contained in the S-set. Thus, we can construct this oracle with an \emph{empty} U-set instead.

\paragraph{Minimum Vertex Cuts.}
The oracle of \Cref{thm:cut-theorem-fconnected} is obtained by slightly tweaking the general construction to take advantage of the structural properties of minimum vertex cuts.

When the graph $G$ is $f$-connected, one can show an even stronger ``completeness'' property of the left and right graphs: If $F$ separates $T$ but not $S$, then either (1) $F \subseteq L \cup S$ and $F$ separates $T$ in $G_L$, or (2) $F \subseteq R \cup S$ and $F$ separates $T$ in $G_R$.
This property allows us to completely eliminate the need to branch in the tree-searching process for query $F$.
First, if $F$ intersects both $L$ and $R$, then it cannot be that $F$ separates $T$ but does not separate $S^* \supseteq S$, so we can safely return \emph{``fail''}.
Next, if $F$ intersects $L$ and not $R$ (resp., $R$ and not $L$), we only have to recursively search the left (resp., right) subtree.
Otherwise, we have $F \subseteq S$, and then we can use US cut detectors to trim the searches in both the left and right subtrees.
Thus, the resulting tree search process visits only a path in $\T$, which eliminates the $2^f$ factor in the query time that appeared in the general case.

We remark that in fact, it suffices for the query to be a \emph{``minimal''} cut in $G$ (in a properly defined sense) for the above arguments to work, even if the graph $G$ is not $f$-connected.

\paragraph{Roadmap.}
We first provide the cut detectors for special cases (terminal expanders, few terminals, US) in~\Cref{sec:special-cut-detectors}.
Next, we formally discuss left and right graphs and their properties in~\Cref{sec:left-right-graphs}.
The following~\Cref{sec:left-right-tree} presents the recursive left-right decomposition producing the tree $\T$ and the new terminal set $S^*$.
Then, \Cref{sec:construction-and-query-of-D} formally discusses the construction of and query algorithm of the $(f,T,S^*)$-cut detector $\D$ of~\Cref{thm:cut-detectors} (without shaving the $O(2^{2f})$ factors in preprocessing time and space).
The modifications that yield improved results when $G$ is $f$-connected are described next, in~\Cref{sec:f-connected}.
Finally, \Cref{sec:space-improvment} explains how the $O(2^{2f})$ factors in the preprocessing and query time from~\Cref{sec:construction-and-query-of-D} can be shaved.

\subsection{Special Cut Detectors}\label{sec:special-cut-detectors}

This section presents some special cut detectors, which essentially form the ``base cases'' of the recursive approach that used to create the cut detectors for general graphs of~\Cref{thm:cut-detectors}, as overviewed in~\Cref{sec:overview-oracles}.
Throughout this section, $G = (V,E)$ is a connected graph with $n$ vertices and $m$ edges, $f \geq 1$ is an integer parameter, and $\bar{m} = \min\{m, fn\}$.

As a basic building block, we use efficient \emph{$f$-vertex failure connectivity oracles} as a black box.
Such an oracle for $G$ can be \emph{updated} with any set $F \subseteq V$ of size $|F| \leq f$.
Once updated, the oracle can receive \emph{queries} of vertex pairs $(s,t) \in V \times V$, which are answers by indicating whether $s,t$ are connected in $G-F$.
We will use the current state-of-the-art, given by \cite{{LongW24}}:

\begin{theorem}[\protect{\cite[Theorem 1]{LongW24}}]\label{thm:storacle}
    There is an $f$-vertex failure connectivity oracle for $G$ with $\tilde{O}(\bar{m})$ space, $\tilde{O}(f^2)$ update time, and $O(f)$ query time.
    The oracle can be constructed deterministically in $O(m) + \bar{m}^{1+o(1)} + \tilde{O}(f\bar{m})$ time.
\end{theorem}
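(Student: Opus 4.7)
The plan is to follow the by-now-standard ``vertex expander hierarchy'' approach that underlies the recent line of work on $f$-vertex failure connectivity oracles. The first step is to apply the Nagamochi--Ibaraki sparsification of \Cref{thm:conn-certificate} in $O(m)$ time to reduce to a subgraph $H$ on $O(fn)$ edges that preserves connectivity after the failure of any $F$ with $|F|\leq f$. From this point on, all subsequent work is on $H$, so its costs depend on $\bar m$ rather than $m$; this explains the additive $O(m)$ term in the preprocessing.

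Next, I would recursively decompose $H$ via a hierarchical vertex-expander decomposition: find a sparse vertex cut, split the graph along it into two sides together with the separator as shared terminals, and recurse until each leaf piece is a $(T,\phi)$-expander with respect to its interface terminals. Using a deterministic sparse vertex-cut/expander decomposition subroutine with $\phi = 1/n^{o(1)}$, this can be carried out in $\bar m^{1+o(1)}$ time, and a standard charging argument bounds the total size of all pieces by $\tilde O(\bar m)$. The $\tilde O(\bar m)$ space bound of the oracle follows from storing these pieces together with the recursion tree.

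The structural fact I would exploit at each expander leaf is the existence of a Steiner tree spanning its terminals with maximum degree $\tilde O(1/\phi) = n^{o(1)}$, as used elsewhere in the paper (e.g.\ \Cref{thm:terminal-expander-DS}). During the \emph{update} with a failure set $F$, I walk up the hierarchy; for each of the $O(\log n)$ pieces that $F$ touches, I examine only the $\tilde O(|F|/\phi)$ neighbors of $F$ on the Steiner tree of that piece and compute connected-component labels restricted to the interface, spending $\tilde O(f)$ work per piece for a total of $\tilde O(f^2)$. The labels are propagated up the hierarchy via a small-sized ``compressed'' graph at each internal node, so that after the update a single component identifier is stored at every piece that $F$ intersected. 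A \emph{query} $(s,t)$ then amounts to walking $s$ and $t$ up the hierarchy and comparing their current component labels at each level, in $O(f)$ total time.

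The main obstacle I anticipate is achieving the stated bounds \emph{deterministically}: the sparse vertex-cut routine driving the decomposition is historically the place where randomization enters, and matching the Long--Wang bounds requires invoking recent deterministic vertex-connectivity machinery as a black box. A secondary obstacle is shaving the query time from $\tilde O(f)$ down to $O(f)$; this demands that all polylogarithmic work per level be pushed into the update phase, so that the query itself performs only $O(1)$ arithmetic operations per level of the hierarchy. Finally, one must verify that the interface between consecutive levels of the hierarchy composes without blowing up the per-update cost beyond $\tilde O(f^2)$, which I expect to follow from the fact that each failed vertex appears in only $O(\log n)$ pieces.
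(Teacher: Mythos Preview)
This theorem is not proved in the paper: it is quoted verbatim from \cite[Theorem 1]{LongW24} and used as a black box (see the header of the theorem and the surrounding text in \Cref{sec:special-cut-detectors}). There is therefore no ``paper's own proof'' to compare against; the paper simply invokes the Long--Wang oracle as a building block inside $\FEOracle$, $\TEOracle$, and friends.

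Your sketch is a reasonable high-level summary of the ideas behind the Long--Wang oracle, but it is a proof \emph{plan}, not a proof, and several of the claimed bounds are the hard part of \cite{LongW24} rather than consequences of what you wrote. In particular, getting deterministic $\bar m^{1+o(1)}$ preprocessing, $\tilde O(f^2)$ update, and clean $O(f)$ query simultaneously requires the specific machinery developed there (and in the preceding works \cite{DuanP20,LongS22}); your bullet about ``walking up the hierarchy and comparing labels'' glosses over exactly the steps where prior work needed nontrivial ideas. For the purposes of this paper, no proof is required---just cite \cite{LongW24}.
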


As a direct corollary, we get~\Cref{lem:terminal-data-structure} concerning the ``few terminals'' case.
\fewterminals*
\begin{proof}
    We simply use the $f$-vertex failure oracle connectivity oracle $\mathcal{O}$ of~\Cref{thm:storacle}, and also store $T$.
    Given a query $F \subseteq V$ with $|F| \leq f$, we first update $\mathcal{O}$ with $F$, and then apply $|T-F|-1$ connectivity queries $(s,t)$, where $s$ is some fixed terminal in $T-F$, and $t$ ranges over the remaining terminals from $T-F$.
    Note that $F$ separates $T$ iff $s$ is disconnected from some other $t \in T-F$, in which case we answer \emph{``cut''}; otherwise we answer \emph{``fail''}.
\end{proof}

Next, we consider the case of terminal expanders in~\Cref{thm:terminal-expander-DS}:
\terminalexpanders*
\begin{proof}
    As $G$ is a $(T,\phi)$-expander, we can apply the deterministic, $(m^{1+o(1)}/\phi)$-time algorithm of \cite[Lemma 5.7]{LongS22full}, that computes a Steiner tree $\tau$ for $T$ with maximum degree $O(\log^2 n / \phi)$.%
    \footnote{The degree bound can be improved to $O(1/\phi)$ at the cost of spending $O(mn \log n)$ time to compute $\tau$, as shown in~\cite{LPSSODA25-arXiv}.
    Since we incur some logarithmic factors anyway, we prefer the faster running time.}
    Additionally, we compute the oracle $\mathcal{O}$ of~\Cref{thm:storacle} for $G$.
    The cut detector $\TEOracle(G,T,f)$ simply stores $\tau$ and $\mathcal{O}$.
    The stated space and prepossessing time follow immediately.

    Given a query $F \subseteq V$ with $|F| \leq f$, we first update $\mathcal{O}$ with the set $F$ as the faults.
    Next, we compute the $\tau$-neighbors of $F$, namely all the vertices appearing in $N_\tau (x)$ for some $x \in F$ (where $N_\tau (x) = \emptyset$ if $x \notin V(\tau)$).
    We choose one arbitrary non-faulty vertex $v^* \notin F$ which is $\tau$-neighbor of $F$ (if no such exist, we return \emph{``fail''}).
    Finally, we query $\mathcal{O}$ with $(v^*,u)$ for every other non-faulty $u \notin F$ which is a $\tau$-neighbor of $F$.
    If one of these queries returned that $v^*,u$ are disconnected in $G-F$, we return \emph{``cut''}; otherwise we return \emph{``fail''}.
    
    The soundness of this procedure is clear, as we only return \emph{``cut''} in case we find a pair of disconnected vertices in $G-F$.
    For the completeness, suppose $F$ separates $T$, and let $t,t' \in T - F$ be two terminals separated by $F$.
    Consider the path $P$ in $\tau$ between $t$ and $t'$.
    As $F$ separates $t$ from $t'$, $F$ must intersect $P$.
    Let $x$ and $x'$ be the vertices in $F \cap V(P)$ that are closest to $t$ and $t'$, respectively.
    Let $u$ and $u'$ be the neighbors of $x$ and $x'$ in the directions of $t$ and $t'$, respectively.
    Note that $u$ and $u'$ are different (even if $x=x'$), and are both non-faulty $\tau$-neighbors of $F$.
    Also, the segments of $P$ between $t$ and $u$, and between $t'$ and $u'$, are both present in $G-F$.
    Hence, as $t$ and $t'$ are disconnected in $G-F$, so are $u$ and $u'$.
    Therefore, one of the queries $(v^*,u)$ and $(v^*,u')$ to $\mathcal{O}$ must return that the pair is disconnected in $G-F$, so we return \emph{``cut''} on $F$, as required.

    Finally, we address the query time.
    Updating $\mathcal{O}$ with $F$ takes $\tilde{O}(f^2)$ time, and then each subsequent query to $\mathcal{O}$ takes $O(f)$ time.
    As the maximum degree in $\tau$ is $O(\log^2 n/\phi)$, $F$ can have at most $O(f \log^2 n/\phi)$ of $\tau$-neighbors, which means that there are $O(f \log^2 n/\phi)$ queries to $\mathcal{O}$.
    So we get a total query time of $\tilde{O}(f^2/\phi)$. 
\end{proof}

We end this section by describing how US cut detectors are constructed.
This relies on the following short structural lemma.
\begin{lemma}\label{lem:US-lemma}
    Let $S \subseteq V$.
    Then for every $F \subseteq V$,
    $F$ is a cut in $G$ if and only if (at least) one of the following options holds:
    \begin{enumerate}
        \item $F$ separates $S$ in $G$.
        \item $F \supseteq S$ and $G-(S \cup F)$ is disconnected.
        \item $F \not \supseteq S$ and $G-(S \cup F)$ has a connected component $C$ with $N(C) \subseteq F$ (here $N(C)$ refers to the neighbor set of $C$ in $G$).
    \end{enumerate}
\end{lemma}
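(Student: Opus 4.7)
The statement is a clean case analysis, and I would prove both directions by splitting on whether $F \supseteq S$.

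For the forward direction, assume $F$ is a cut in $G$, so $G-F$ has at least two connected components. First, if $F \supseteq S$, then $G-(S \cup F) = G-F$ is disconnected, which is exactly condition 2. Otherwise $F \not\supseteq S$, so $S - F \neq \emptyset$. If moreover $F$ separates $S$, we are in condition 1 and done. So assume $F$ does not separate $S$, meaning all of $S - F$ lies in a single connected component $C_0$ of $G-F$. Since $F$ is a cut there is some other connected component $D$ of $G-F$; then $D \cap S = \emptyset$ (as $S - F \subseteq C_0$ and $S \cap F$ is removed), so $D$ is also a connected component of $G - (S \cup F)$, and by definition of a component in $G-F$ we have $N(D) \subseteq F$. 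Taking $C := D$ gives condition 3.

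For the backward direction, each condition is easily seen to imply that $F$ is a cut. Condition 1 gives two vertices of $S - F$ in different components of $G - F$. Condition 2 is immediate since $G - (S \cup F) = G - F$. For condition 3, let $C$ be a component of $G - (S \cup F)$ with $N(C) \subseteq F$. Since $N(C) \subseteq F$, the set $C$ has no edges in $G-F$ to $V - (C \cup F)$, so $C$ is a union of connected components of $G-F$ (in fact a single component, though we do not need this). It remains to verify $V - (C \cup F) \neq \emptyset$: since $F \not\supseteq S$, any $s \in S - F$ lies outside $F$ and outside $C \subseteq V - (S \cup F)$, giving such an $s$. Hence $G-F$ has at least two components, so $F$ is a cut.

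The proof is essentially mechanical once one tracks the relationship between components of $G-F$ and components of $G-(S \cup F)$; the only subtle point is ensuring in condition 3 that the ``other side'' $V - (C \cup F)$ is nonempty, which is exactly why the hypothesis $F \not\supseteq S$ is present in that clause. I expect no real obstacle; the main thing to be careful about is the terminology ``$F$ separates $S$'', which requires two witnesses in $S - F$ lying in different components of $G-F$, and in particular implicitly requires $|S - F| \geq 2$.
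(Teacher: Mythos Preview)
Your proof is correct and follows essentially the same route as the paper's. Both directions are handled identically: for the forward direction you (like the paper) split on whether $F \supseteq S$ and, when $F \not\supseteq S$ and $F$ does not separate $S$, pick a component of $G-F$ not containing $S-F$ to witness condition~3; for the backward direction your argument that some $s \in S-F$ lies outside $C \cup F$ is exactly the paper's choice of separated witnesses $u \in S-F$ and $v \in C$.
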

\begin{proof}
    The ``if'' direction: 
    If option 1 holds then of course $F$ is a cut.
    If option 2 holds, then $G-F = G-(S\cup F)$ and the latter is disconnected, so $F$ is a cut.
    If option 3 holds, then $F$ separates some $u \in S-F$ from any $v \in C$, so $F$ is a cut. 

    The ``only if'' direction:
    Assume $F$ is a cut.
    If option 1 holds we are done, so suppose $F$ does not separate $S$.
    If $F \supseteq S$, then $G-(S\cup F) = G-F$ and the latter is disconnected, so option 2 holds.
    If $F \not \supseteq S$, let $D$ be the connected component of $G-F$ such that $S-F \subseteq D$ (which exists since $F$ does not separate $S$), and 
    let $C \neq D$ be a different connected component of $G-F$ (which exists since $F$ is a cut).
    Then clearly $N(C) \subseteq F$.
    Note that $C$ does not intersect $S$, since $S \subseteq F \cup D$.
    Thus, $C$  remains connected in $(G-F)-S = G-(S \cup F)$, so it must also be a connected component of $G-(S \cup F)$.
    Namely, we have shown that option 3 holds.
\end{proof}

We are now ready to (restate and) prove~\Cref{lem:US-data-structure}:
\uscutdetector*
\begin{proof}%
    First, $\USOracle(G,U,S,f)$ stores $U$ and $S$.
    Next, for every $W \subseteq U$, it stores the collection of all neighbor-sets $N(C)$ of connected components $C$ in $G-(S \cup W)$ such that $|N(C)| \leq f$.
    (Again, here $N(C)$ refers to the neighbor set of $C$ in $G$.)
    This collection is stored in a \emph{sorted} array, where each set is represented by a bitstring of length $O(f \log n)$, formed by concatenating the IDs of the vertices in the set (in sorted order).
    Additionally, it stores a single bit which indicates if $G-(S \cup W)$ is connected.
    The space needed for a specific $W \subseteq U$ is dominated by its corresponding array, which takes up $O(fn)$ words.
    The preprocessing time needed for $W$ is $O(m + fn \log n)$, as the collection of relevant sets $N(C)$ is easy to compute in $O(m)$ time, and sorting the array takes $O(f \cdot n \log n)$ time (again, we multiply by $f$ to account for the comparison between elements of $f$ words).
    Summing over $W$ gives the desired space and preprocessing bounds.

    To answer a query $F \subseteq S \cup U$ with $|F| \leq f$, we consider $W = F-S \subseteq U$, and note that $S \cup F = S \cup W$.
    So, in light of~\Cref{lem:US-lemma}, we do the following:
    If $F \supseteq S$ and $G-(S \cup W)$ is disconnected, return \emph{``cut''};
    If $F \not \supseteq S$ and the array of $W$ contains some subset $F' \subseteq F$, return \emph{``cut''};
    Otherwise, return \emph{``fail''}.
    The soundness and completeness guarantees follow immediately from~\Cref{lem:US-lemma}.
    The query time is dominated by the
    second part, where we need to go over all subsets of $F$, and check if they appear in the sorted array corresponding to $W$.
    Each such check takes $O(f \log n)$ time, as we do a binary search in an array of length $\leq n$, where each element is represented by $f$ (or less) words.
    So, the query time is $O(2^{|F|} \cdot f \log n)$.
    The improvement in query time when $G$ is $f$-connected is because in this case, the array of $W$ cannot contain a neighbor set $N(C)$ with $|N(C)| < f$; indeed, in such a case $N(C)$ would be a cut in $G$ (separating $C$ from some vertex in $F - N(C)$), contradicting that $G$ is $f$-connected.
    Thus, when $G$ is $f$-connected, we do not need to go over all subsets of $F$, but rather just check if $F$ itself appears in the array of $W$, which takes $O(f \log n)$ time.
\end{proof}

\subsection{Left and Right Graphs}\label{sec:left-right-graphs}

In this section we formally define the notions of left and right graphs and prove their ``cut respecting'' properties.
We remark that while~\cite{NSYArxiv23} were interested only in \emph{minimum} vertex cuts, we are interested in \emph{any} vertex cut of size $\leq f$.
For this reason, we introduce slight adaptations to their original construction that fit our needs, and present it in a self-contained manner.

Let $G = (V,E)$ be a connected graph with terminals $T \subseteq V$, let $f \geq 1$ be an integer parameter, and let $(L,S,R)$ be some vertex $T$-cut in $G$.

\begin{definition}[Left/Right Graphs]\label{def:left-right-graphs}
    The \emph{$f$-left graph} $G_L$ w.r.t.\ $G$, $T$ and $(L,S,R)$ is obtained from $G$ as follows:
    \begin{itemize}
        \item Order $R$ as a list where the terminals in $T \cap R$ appear first, and take $U_R$ as the first $f+1$ vertices in the list (or $U_R = R$ if $|R| \leq f+1$).
        \item Delete all vertices in $R$ other than $U_R$ (i.e., delete $R - U_R$).
        \item Connect $U_R$ as a clique, by adding all of the $U_R \times U_R$ edges that are missing in $G$.
        \item Connect $U_R$ and $S$ as a biclique, by adding all of the $U_R \times S$ edges that are missing in $G$.
    \end{itemize}
    The $f$-right graph $G_R$ is defined in symmetrically. 
\end{definition}

\begin{lemma}[Completeness of Left/Right Graphs]\label{lem:cut-in-G-to-left-right-graphs}
    Let $F \subseteq V$ with $|F| \leq f$.
    Suppose that in $G$, $F$ separates $T$ but does not separate $S$.
    Then either
    $F \cap V(G_L)$ separates $T \cap V(G_L)$ in $G_L$, or $F \cap V(G_R)$ separates $T \cap V(G_R)$ in $G_R$.
\end{lemma}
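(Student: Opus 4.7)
The plan is to prove the lemma by a direct case analysis on the location of a separation-witnessing pair of terminals relative to the sparse cut $(L,S,R)$. Since $F$ separates $T$ in $G$, fix $t_1, t_2 \in T - F$ in different connected components of $G - F$. Since $F$ does not separate $S$, all of $S - F$ lies in a single component $C_S$ of $G - F$ (with $C_S = \emptyset$ when $S \subseteq F$). At most one of $t_1, t_2$ can lie in $C_S$, so WLOG $t_2 \notin C_S$, and by the $L \leftrightarrow R$ symmetry of the construction of $G_L$ and $G_R$ we may further assume $t_2 \in L - F$. I will then show that $F \cap V(G_L)$ separates $T \cap V(G_L)$ in $G_L$; the symmetric case $t_2 \in R - F$ uses $G_R$ identically.

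The first observation is that $t_2$'s connected component in $G_L - F$ is contained in $L - F$: the only edges added in constructing $G_L$ are within $U_R$ and between $U_R$ and $S$, so any path from $t_2$ escaping $L$ must first reach $S - F$ via edges of $G[L \cup S] - F$, contradicting $t_2 \notin C_S$. To conclude, it suffices to exhibit a second terminal $t' \in T \cap V(G_L) - F$ lying in a different component of $G_L - F$. Three natural sources offer themselves: (Option~1) another terminal $t' \in T \cap (L - F)$ outside $t_2$'s component (either in a different $L$-only component, or in one that reaches $S - F$); (Option~2) a terminal $t' \in T \cap (S - F)$, which in $G_L - F$ joins the component containing $S - F$ and $U_R - F$ through the biclique; or (Option~3) a terminal $t' \in T \cap (U_R - F)$, lying in that same $S$--$U_R$ component. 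In each of these subcases $t_2$'s component, which is confined to $L$ and avoids both $S - F$ and $U_R - F$, is distinct from $t'$'s component, yielding the desired separation.

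The main obstacle is ruling out the scenario in which none of the three options applies, which I handle by contradiction. Suppose all of $T \cap (L - F)$ lies in $t_2$'s component, $T \cap (S - F) = \emptyset$, and $T \cap (U_R - F) = \emptyset$. Then $t_1$ cannot lie in $L - F$ (it would share $t_2$'s component) nor in $S - F$, so $t_1 \in T \cap (R - F)$. If $t_1 \in U_R$ we directly contradict ``Option~3 fails''. Otherwise $t_1 \in R - U_R$, which forces $|R| \geq f+2$ and hence $|U_R| = f+1$; but since $U_R$ is constructed to prefer terminals, $t_1 \notin U_R$ means that $U_R$ is already filled with $f+1$ terminals of $T \cap R$, and then $|U_R \cap T - F| \geq (f+1) - |F| \geq 1$, again contradicting ``Option~3 fails''. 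The degenerate corner cases ($S \subseteq F$, where $L - F$ and $U_R - F$ are trivially disconnected in $G_L - F$ so Options~1 and~3 still suffice, and $|R| \leq f$, where $U_R = R$ makes Option~3 immediate whenever $t_1 \in R - F$) are handled by the same argument skeleton.
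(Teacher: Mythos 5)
Your proof is correct and follows essentially the same route as the paper's: both arguments hinge on (i) the observation that the side/component avoiding $S-F$ is confined to $L$ and survives unchanged in $G_L$, and (ii) the terminals-first, size-$(f+1)$ choice of $U_R$ against $|F|\le f$ to produce a second surviving terminal of $T\cap V(G_L)$. The paper packages this as an explicit projected cut $\big((A-R)\cup U_R,\; F-U_R,\; B\cap L\big)$ in $G_L$, whereas you argue at the level of connected components and locate the second witness by a case analysis on $t_1$, but the underlying mechanism is identical.
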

\begin{proof}
    Let $(A,F,B)$ be a vertex $T$-cut in $G$.
    As $F$ does not separate $S$, we may assume $B \cap S = \emptyset$ (otherwise $A \cap S = \emptyset$, so we can swap $A$ and $B$).
    Choose some terminal $t^* \in B \cap T$.
    Then $t^* \notin S$, hence $t^* \in L \cup R$.
    We assume $t^* \in L$ (otherwise, swap $L$ and $R$ in the following arguments). 
    
    Consider the partition $(A \cup R,F-R,B-R)$ of $V$, obtained from $(A,F,B)$ by moving all the $R$-vertices 
    into the first set.
    Since $B \cap S = \emptyset$, we have $B-R = B \cap (L \cup S) = B \cap L$.
    There are no edges between $A \cup R$ and $B \cap L$ (as no edges go between $A$ and $B$, nor between $R$ and $L$).
    Thus, $(A \cup R, F-R, B\cap L)$ is a cut in $G$.
    We ``project'' it onto $G_L$ by replacing $R$ with $U_R$; 
    We get the partition $(A' = (A-R) \cup U_R, F - U_R , B \cap L)$ of $V(G_L)$.
    Observe that still, there are no edges between the projected sides $A'$ and $B \cap L$,
    as all new edges in $G_L$ have two endpoints inside $U_R \cup S \subseteq R \cup S$, so they cannot have an endpoint in the side $B \cap L \subseteq L$.

    We assert that the side $A'$ must contain some terminal $t \in T$ such that $t \notin F$.
    If $U_R \cap T \not \subseteq F$, we can just choose some $t \in (U_R \cap T) - F$.
    Otherwise, $|U_R \cap T| \leq |F| \leq f$, so by definition of $U_R$ it must be that $T \cap R = U_R \cap T \subseteq F$.
    In this case we take some $t \in A \cap T$, which exists as $(A,F,B)$ is a $T$-cut.
    Then $t \notin F$, and hence also $t \notin R$ (as $R \cap T \subseteq F$), so $t \in A-R \subseteq A'$.

    We also have the terminal $t^* \in T$ in the other side $B \cap T$, and $t^* \notin F$.
    Thus, in $G_L$, $F-U_R$ separates the two terminals $t, t^* \in T \cap V(G_L)$, where $t, t^* \notin F$.
    As $F - U_R \subseteq F \cap V(G_L)$, $t$ and $t^*$ are also separated by $F \cap V(G_L)$.
    Thus, $F \cap V(G_L)$ separates $T \cap V(G_L)$ in $G_L$.
\end{proof}

\begin{lemma}[Soundness of Left/Right Graphs]\label{lem:cut-in-left-right-is-cut-in-G}
    Let $F \subseteq V(G_L)$ (resp., $F \subseteq V(G_R)$) with $|F| \leq f$.
    Suppose $F$ separates two vertices $x,y$ in $G_L$ (resp., $G_R$), 
    Then $F$ separates $x,y$ also in $G$.
\end{lemma}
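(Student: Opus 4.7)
The plan is to prove the contrapositive: I want to show that whenever $x,y \in V(G_L)$ are connected in $G - F$, they are also connected in $G_L - F$. So starting from a path $P$ from $x$ to $y$ in $G - F$, my goal is to transform $P$ into a walk $P'$ in $G_L - F$ by ``rerouting'' the portions of $P$ that pass through $V(G) \setminus V(G_L) = R \setminus U_R$. Note that $F \cap (R \setminus U_R) = \emptyset$ since $F \subseteq V(G_L)$, so $P$ is indeed free to visit $R \setminus U_R$ and the rerouted vertices are the only potential obstacle.

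The key structural observation underlying the rerouting is that, because $(L,S,R)$ is a vertex cut in $G$ (so there are no edges between $L$ and $R$), every neighbor in $G$ of a vertex of $R \setminus U_R$ lies in $R \cup S$. Consequently, if $v_i,\ldots,v_j$ is a maximal block of consecutive $P$-vertices inside $R \setminus U_R$, then its bounding vertices $v_{i-1},v_{j+1}$ exist (because $x,y \in V(G_L)$ cannot themselves lie in $R \setminus U_R$) and must belong to $S \cup U_R$, and neither is in $F$ since $P$ avoids $F$.

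The core step is then to replace each such block by a short detour inside $G_L$. If $|R| \leq f+1$, then $U_R = R$ by the construction in \Cref{def:left-right-graphs}, so $R \setminus U_R$ is empty and no rerouting is needed at all (the edges of $P$ already live in $G_L$ because $G_L$ contains $G[V(G_L)]$). Otherwise $|U_R| = f+1 > |F|$, so some $u \in U_R \setminus F$ exists, and the detour for each block is chosen by cases: if at least one of $v_{i-1}, v_{j+1}$ already lies in $U_R$, use the single edge between them (present in $G_L$ via the $U_R$-clique if both are in $U_R$, or the $U_R \times S$ biclique otherwise); and if both lie in $S$, route through $u$ using two edges of the $U_R \times S$ biclique. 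Stitching the detours together with the unchanged portions of $P$ produces the desired walk $P'$ in $G_L - F$, contradicting the assumption that $F$ separates $x,y$ in $G_L$.

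The main obstacle I expect is purely the case analysis verifying that each substitute edge genuinely lies in $G_L$ (whether it is an original edge of $G$ preserved in $G_L$ or one of the newly added clique/biclique edges) and that every vertex on $P'$ avoids $F$ (which reduces to the bound $|U_R \setminus F| \geq 1$ established above, combined with the fact that the bounding vertices themselves lie outside $F$). The $G_R$ direction is entirely symmetric and needs no new ideas.
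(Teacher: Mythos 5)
Your proposal is correct and follows essentially the same route as the paper's proof: take a path witnessing connectivity in $G-F$, observe that the maximal segments outside $V(G_L)$ lie in $R - U_R$ with bounding vertices in $(S \cup U_R) - F$, and shortcut each such segment through a surviving vertex of $U_R$ (handling the corner case $U_R = R$ separately, where $G$ is simply a subgraph of $G_L$). Your slightly different case split ($|R| \leq f+1$ versus $|U_R| = f+1 > |F|$) and the explicit clique/biclique edge analysis are only cosmetic refinements of the paper's argument.
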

\begin{proof}
    We prove the lemma for $G_L$ ($G_R$ is symmetric). 
    We split into two cases.
    
    Case 1: $U_R \subseteq F$.
    Then $|U_R| \leq |F| \leq f$. So, by definition of $U_R$, we have $U_R = R$.
    Thus, in this case, $G$ is a subgraph of $G_L$, so $F$ must also separate $x$ and $y$ in $G$.

    Case 2: $U_R \not \subseteq F$.
    Seeking contradiction, suppose there is a path $P$ from $x$ to $y$ in $G - F$.
    Since $G_L - F$ contains $G[V(G_L)] - F$ as a subgraph, any contiguous subpath of $P$ that only uses vertices in $V(G_L)$ is also present in $G_L - F$.
    Now, consider some maximal contiguous subpath $Q$ of $P$ such that $V(Q) \cap V(G_L) = \emptyset$.
    Let $u$ and $v$ be the vertices appearing right before and after $Q$ on $P$.
    As both $u$ and $v$ have $G$-neighbors in $V(Q) \subseteq V - V(G_L) \subseteq R$, it must be that $u,v \in S \cup R$.
    Also, by the maximality of $Q$, we have $u,v \in V(G_L)$, so in fact, $u,v \in S \cup U_R$.
    Thus, in $G_L - F$, both $u$ and $v$ are neighbors of some $r \in U_R - F$ (which exist as $U_R \not \subseteq F$).
    Therefore, we can just ``shortcut'' the segment $Q$ and replace it by walking from $u$ to $v$ through $r$ in $G_L - F$.
    This shortcutting procedure gives us a path from $x$ to $y$ in $G_L - F$, which is a contradiction.
\end{proof}

We now give another structural lemma on $f$-left/right graphs, which is aimed to tackle a slightly technical issue discussed in more detail in the next~\Cref{sec:left-right-tree} (\Cref{remark:right-child}).

\begin{lemma}\label{lem:stepchild-lemma}
    Let $F \subseteq V(G_R)$ with $|F| \leq f$.
    Suppose that in $G_R$, $F$ separates $T \cap V(G_R)$, but does not separate $S$ and does not separate $T \cap R$.
    Let $U_S$ be an arbitrary set of $f+1$ terminals in $T\cap S$ (or all of $T\cap S$ if $|T \cap S| \leq f+1$).
    Then $F$ separates $U_L \cup U_R \cup U_S$ in $G_R$. 
\end{lemma}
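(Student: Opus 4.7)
The plan is to reduce the lemma to an \emph{attachment claim}: for every terminal $t \in (T \cap V(G_R)) - F$, some vertex of $(U_L \cup U_R \cup U_S) - F$ lies in the same connected component of $G_R - F$ as $t$. Once this is in hand, the lemma follows immediately: the hypothesis that $F$ separates $T \cap V(G_R)$ in $G_R$ yields two terminals $t_1, t_2 \in (T \cap V(G_R)) - F$ in different components of $G_R - F$, and applying the claim to each produces $v_1, v_2 \in (U_L \cup U_R \cup U_S) - F$ in those same two distinct components. Hence $F$ separates $\{v_1, v_2\}$, and \emph{a fortiori} the whole set $U_L \cup U_R \cup U_S$, in $G_R$.

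To prove the attachment claim, I would split on which of the three parts of $V(G_R) = U_L \cup S \cup R$ contains $t$. If $t \in U_L$, or $t \in U_S$, or $t \in U_R$, the attached vertex can be taken to be $t$ itself. The two nontrivial cases are $t \in (T \cap S) - U_S$ and $t \in (T \cap R) - U_R$. For the first, the definition of $U_S$ forces $|T \cap S| > f+1$ (else $U_S = T \cap S$ would already contain $t$), hence $|U_S| = f+1 > |F|$, and some $u \in U_S - F \subseteq S - F$ exists; since $t, u \in S - F$ and $F$ does not separate $S$ in $G_R$ by hypothesis, $t$ and $u$ lie in the same component of $G_R - F$. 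For the second, the fact that $U_R$ prefers terminals implies $|T \cap R| > f+1$ (otherwise $U_R$ would contain all of $T \cap R$, and in particular $t$), so $U_R \subseteq T \cap R$ and $|U_R - F| \geq 1$; applying the hypothesis that $F$ does not separate $T \cap R$ in $G_R$ to any $u \in U_R - F$ yields that $t$ and $u$ lie in the same component of $G_R - F$.

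The only real subtlety is the $U_R$ case, where the hypothesis concerns only $T \cap R$ rather than all of $R$, so one must use the preference of $U_R$ for terminals to secure the inclusion $U_R \subseteq T \cap R$ that makes the hypothesis applicable. Beyond that mild bookkeeping, the argument is a clean case analysis leveraging the sizes $|U_S|, |U_R| = f+1$ in their nontrivial versions, which ensures they survive the removal of $F$.
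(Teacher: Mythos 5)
Your proposal is correct and is essentially the paper's own argument: the paper proves the contrapositive form (assuming all surviving vertices of $U_L \cup U_R \cup U_S$ lie in one component $C$ of $G_R - F$ and showing every surviving terminal of $T \cap V(G_R)$ also lies in $C$), using exactly your case analysis — $t \in U_L$ trivially, $t \in S$ via a surviving vertex of $U_S$ and the hypothesis that $F$ does not separate $S$, and $t \in R$ via the terminals-first definition of $U_R$ and the hypothesis that $F$ does not separate $T \cap R$. Your direct ``attachment claim'' phrasing and your handling of the $U_R$ subtlety (forcing $U_R \subseteq T \cap R$ when $t \in (T\cap R) - U_R$) match the paper's reasoning, so there is nothing to add.
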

\begin{proof}
    Seeking contradiction, assume that all vertices $(U_L \cup U_R \cup U_S) - F$ lie in the same connected component $C$ of $G_R - F$.
    We show that every $t \in T \cap V(G_R) - F$ must also lie in $C$, contradicting that $F$ separates $T \cap V(G_R)$.
    If $t \in U_L$ we are done.
    If $t \in S$, then it cannot be that $U_S \subseteq F$, because then $|U_S| \leq |F| \leq f$, hence $t \in T \cap S = U_S \subseteq F$, but we now that $t \notin F$.
    So, we can choose some $t' \in U_S - F$.
    As $F$ does not separate $S$, $t$ must be in the same component of $t'$, which is $C$.
    If $t \in R$, we can find some $t' \in T \cap U_R - F$ by a similar argument to the previous case.
    As $F$ does not separate $T \cap S$, we again get that $t$ is in $C$ together with $t'$.
\end{proof}

Finally, we observe the following nice property of the $f$-left/right graphs:
\begin{observation}\label{obs:arboricity-left-right}
    If $G$ has arboricity $\alpha$, then $G_L$ and $G_R$ both have arboricity at most $\alpha + f + 1$.
\end{observation}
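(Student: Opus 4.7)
My plan is to partition the edges of $G_L$ into two groups according to their origin in the construction of \Cref{def:left-right-graphs}: the ``inherited'' edges (those present in $G$ with both endpoints in $V(G_L) = L \cup S \cup U_R$), and the ``new'' edges (those added to form the clique on $U_R$ and the biclique between $U_R$ and $S$). I will bound the arboricity of each group separately and then combine the bounds by concatenating forest decompositions.

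The inherited edges induce a subgraph of $G$, so their arboricity is at most $\alpha$ by assumption. For the new edges, the key observation is that every such edge is incident to at least one vertex of $U_R$, and $|U_R| \leq f+1$ by definition. Fixing an arbitrary ordering $u_1, \dots, u_k$ of $U_R$ with $k \leq f+1$, I would define, for each $i$, the star $\sigma_i$ centered at $u_i$ consisting of (i) all edges from $u_i$ to $S$ and (ii) all edges $\{u_i, u_j\}$ with $j > i$. Each $\sigma_i$ is a star, hence a forest, and every new edge lies in exactly one $\sigma_i$. Thus the new-edge subgraph has arboricity at most $f+1$.

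Since a forest decomposition of $G_L$ can be obtained by concatenating those of the two groups, the arboricity of $G_L$ is at most $\alpha + (f+1)$. The argument for $G_R$ is completely symmetric. I do not anticipate any real obstacle; the only minor subtlety is that some of the ``newly added'' edges in \Cref{def:left-right-graphs} may already exist in $G$, but this only shrinks the new-edge set (and hence can only decrease its arboricity), so the bound is unaffected.
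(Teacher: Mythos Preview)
Your proposal is correct and follows essentially the same approach as the paper: both arguments cover the edges induced on $L\cup S$ (or more generally, the edges inherited from $G$) by the $\alpha$ original forests, and cover the remaining edges---all of which are incident to $U_R$---by at most $|U_R|\le f+1$ stars centered at the $U_R$-vertices. The only cosmetic difference is that the paper partitions by ``inside $L\cup S$'' versus ``touching $U_R$'', whereas you partition by ``inherited'' versus ``newly added''; both partitions yield the same bound, and your remark about possible overlap with pre-existing $U_R$--$S$ edges only making the new-edge set smaller is exactly right.
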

\begin{proof}
    In $G_L$, the edges internal to $L \cup S$ are all original to $G$, so they can be covered by $\alpha$ forests (the original $\alpha$ forests covering $G$, induced on $L \cup S$).
    The remaining edges all touch $U_R$, so they can be covered by $|U_R| \leq f+1$ stars centered at the $U_R$-vertices.
    $G_R$ is symmetric.
\end{proof}

\subsection{The Left-Right Decomposition Tree}\label{sec:left-right-tree}

In this section we describe the decomposition of the original instance $(G,T)$ by splitting into left and right graphs using a ``good'' $T$-cut $(L,S,R)$, and continuing recursively until we get terminal expanders or instances with few terminals.
The cut $(L,S,R)$ is produced by the following powerful lemma of~\cite{LongS22full}:
\begin{lemma}[Lemma 4.6 in~\cite{LongS22full}]\label{lem:balanced-or-expander}
    Let $G = (V,E)$ be a connected graph with $|V| = n$ vertices and $|E| = m$ edges.
    Let $T \subseteq V$ be a set of terminals in $G$.
    Let $0 < \epsilon \leq 1$ and $1 \leq r \leq \lfloor \log_{20} n \rfloor$ be parameters.
    There is a deterministic algorithm that computes a $T$-cut $(L,S,R)$ in $G$ (or possibly $L=S=\emptyset$) such that $|S| \leq \epsilon |T \cap (L \cup S)|$, which further satisfies either
    \begin{itemize}
        \item \emph{(``balanced terminal cut'')} $|T \cap (L \cup S)|, |T \cap (R \cup S)| \geq \frac{1}{3} |T|$, or
        \item \emph{(``expander'')} $|T \cap R| \geq \frac{1}{2} |T|$ and $G[R]$ is a $(T\cap R, \phi)$-expander for some $\phi \geq \epsilon / (\log n)^{O(r^5)}$.
    \end{itemize}
    The running time is $O(m^{1+o(1)+O(1/r)} \cdot (\log m)^{O(r^4)} / \phi)$.
\end{lemma}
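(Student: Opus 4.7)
The plan is to prove this as a \emph{terminal vertex expander decomposition} lemma, following the standard template: either a near-balanced sparse separator is produced, or one large side is certified to be a terminal expander. The parameter $r$ controls a recursive time-versus-quality trade-off: larger $r$ yields a better expansion guarantee but larger polylogarithmic overhead in the running time.

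First, I would invoke an approximate \emph{terminal vertex sparsest cut} primitive on $(G,T)$, which either returns a $T$-cut $(L,S,R)$ whose ratio $|S|/\min\{|T\cap(L\cup S)|, |T\cap(R\cup S)|\}$ is at most $\epsilon$, or certifies that no such cut exists. If no sparse $T$-cut exists, then taking $L=S=\emptyset$ and $R=V$ places us in the ``expander'' case with $\phi$ determined by the approximation factor of the primitive. Otherwise, if the returned cut is already balanced (both sides containing $\geq |T|/3$ terminals), we are done with the balanced output. The remaining situation is a sparse but \emph{unbalanced} cut, where without loss of generality $|T\cap(L\cup S)| < |T|/3$.

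In the unbalanced case I would run a \emph{trimming} loop on the heavier side $G[R]$: as long as a sparse internal $T\cap R$-cut can be found inside $G[R]$ whose small piece carries few terminals, peel that piece off into $L$ (and its boundary into $S$). The loop maintains two invariants: (i) $|S| \leq \epsilon |T\cap(L\cup S)|$ throughout, because every peeling step adds only as many vertices to $S$ as the sparsity bound permits relative to the newly transferred terminals; and (ii) $|T \cap R| \geq |T|/2$, because we only peel when the small side carries far fewer terminals than the remaining $R$. When the loop halts, by definition no sufficiently sparse cut remains inside $G[R]$, so $G[R]$ is a $(T\cap R,\phi)$-expander. Each search for a sparse internal cut is implemented by a recursive call to the same procedure with parameter $r-1$, which explains the $(\log n)^{O(r^5)}$ blowup in $\phi$ (one factor per recursion level, with a polynomial-in-$r$ loss per step) and the $m^{O(1/r)}$ factor in the running time.

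The hard part is the terminal vertex sparsest cut primitive itself: adapting edge-based local flow / cut-matching techniques to \emph{vertex} cuts with a prescribed terminal set, deterministically, and in near-linear time. The standard reduction (splitting each vertex into an in-node and an out-node connected by a capacity-$1$ edge) must be performed while preserving terminal identities, and the resulting flow computations must be solved by a local / hierarchical routine rather than a global max-flow in order to achieve the $m^{1+o(1)+O(1/r)}$ target. I would build this primitive by combining the isolating-cuts framework of Li--Nanongkai--Panigrahi--Saranurak--Yingchareonthawornchai with the local vertex-cut subroutines developed in Nanongkai--Saranurak--Yingchareonthawornchai and Saranurak--Yingchareonthawornchai, then feed it into the trim-and-recurse skeleton above. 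The delicate bookkeeping is ensuring that the sparsity bound $|S|\leq\epsilon|T\cap(L\cup S)|$ survives through all recursion levels and all trimming steps simultaneously, which is where the $(\log m)^{O(r^4)}$ factor enters the running time.
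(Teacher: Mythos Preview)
This lemma is not proved in the paper at all: it is imported verbatim as Lemma~4.6 from Long and Saranurak~\cite{LongS22full} and used as a black box in the construction of the $f$-LR tree. There is therefore no ``paper's own proof'' to compare against.

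That said, your sketch is a reasonable high-level outline of the standard expander-decomposition template (sparse-cut-or-expander dichotomy, with trimming on the large side and recursion to depth $r$), which is indeed how results of this type are typically established. If you want to verify the details---in particular the precise dependence on $r$ in the expansion $\phi \geq \epsilon/(\log n)^{O(r^5)}$ and the running time $m^{1+o(1)+O(1/r)}(\log m)^{O(r^4)}/\phi$---you should consult~\cite{LongS22full} directly rather than the present paper.
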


\paragraph{Construction of the $f$-LR Tree $\T$.}
Let $G = (V,E)$ be a graph with $|V| = n$, $|E| = m$.
Let $T \subseteq V$ be a terminal set in $G$.
Formally, the \emph{$f$-left-right decomposition tree} (or \emph{$f$-LR tree} for short) with respect to the instance $(G,T)$ is a (virtual) rooted tree $\mathcal{T}=(\mathcal{V},\mathcal{E})$.
For ease notation, we will use the letter $q$ (sometimes with subscripts such as $q_l, q_r$) to denote nodes of $\T$, to distinguish them from the vertices of $G$ that are usually denoted by $x,y$ or $u,v,w$.
Each node $q \in \mathcal{V}$ is associated with a pair of graph and terminal set $(G_q,T_q)$, such that $T_q \subseteq V(G_q) \subseteq V(G)$.
If $q$ is an internal (i.e., non-leaf) node in $\T$, it is additionally associated with a vertex cut $(L_q, S_q, R_q)$ in $G_q$ (where possibly $L_q = S_q = \emptyset$).
The tree $\T$ is constructed by a recursive algorithm, described next, that makes calls to~\Cref{lem:balanced-or-expander}.
We define
\begin{equation}\label{eq:eps-balanced-or-exp-lemma}
    \epsilon := \frac{1}{c \log |T|} \text{ for large enough constant $c > 1$,} \quad \text{and } r := O(1).
\end{equation}
These fixed values of $\epsilon$ and $r$ are used in every invocation of~\Cref{lem:balanced-or-expander}. 
Thus, the corresponding expansion parameter from this lemma is always at least
\begin{equation}\label{eq:expansion}
    \phi := \frac{\epsilon}{(\log n)^{O(r^5)}} = \frac{1}{\poly \log n}
\end{equation}
Note that for any constant $\delta > 0$ we can take $r$ to be a sufficiently large constant (depending only on $\delta$) to get running time $m^{1+\delta}$ in~\Cref{lem:balanced-or-expander}.

We now describe algorithm for constructing $\T$.
We initialize $\T$ to have only a root node associated with $(G,T)$.
We then apply a recursive procedure from this root.
In general, each recursive call is given a node $q$ which is a leaf in the current state of $\T$, and either decides to make $q$ a permanent leaf (and end the current recursive branch), or creates new children nodes for $q$ and applies recursive calls on (some or all of) them.

The recursive call on $q$ is implemented as follows.
First, if $|T_q| \leq (f+1)/\epsilon$, then $q$ is set to be a permanent leaf in $\T$, and the call is terminated.
Otherwise, we apply~\Cref{lem:balanced-or-expander} on $G_q$ and get the cut $(L_q, S_q, R_q)$.
We then construct the $f$-left and $f$-right graphs of $(G_q, T_q)$ w.r.t.\ $(L_q, S_q, R_q)$, denoted $G_{L_q}$ and $G_{R_q}$ respectively.
We also denote by $U_{L_q}$ and $U_{R_q}$ the sets of~\Cref{def:left-right-graphs} used to construct these graphs.
Additionally, let $U_{S_q}$ be an arbitrary set of $f+1$ terminals from $T_q \cap S_q$ (or all of $T_q \cap S_q$ if $|T_q \cap S_q| \leq f+1$).
The algorithm now proceeds according to the two cases of~\Cref{lem:balanced-or-expander}:

\begin{description}
    \item[(Balanced)]
    If $|T_q \cap (L_q \cup S_q)|, |T_q \cap (R_q \cup S_q)| \geq \frac{1}{3} |T_q|$:
    
    We then create two children for $q$:
    \begin{itemize}
        \item A \emph{left child} $q_l$ associated with $(G_{L_q}, T_q \cap V(G_{L_q}) )$
        \item A \emph{right child} $q_r$ associated with $(G_{R_q}, T_q \cap V(G_{R_q}) )$.
    \end{itemize}
    We then apply the algorithm recursively on both $q_l$ and $q_r$.

    \item[(Expander)]
    Else, we know that $|T_q \cap R_q| \geq \frac{1}{2}|T_q| $ and that $G_q[R_q]$ is a $(T_q \cap R_q, \phi)$-expander.

    In this case, we create three children for $q$:
    \begin{itemize}
        \item A \emph{left child} $q_l$ associated with $(G_{L_q}, T_q \cap V(G_{L_q}) )$
        \item A \emph{right child} $q_r$ associated with $(G_{R_q}, T_q \cap R_q )$.
        \item An additional \emph{stepchild} $q_s$ associated with $(G_{R_q}, U_{L_q} \cup U_{R_q} \cup U_{S_q})$
    \end{itemize}
    We then apply the algorithm recursively only on $q_l$
    ($q_r,q_s$ are permanent leaf nodes).
\end{description}

\begin{remark}\label{remark:right-child}
The right child $q_r$ is associated with $G_{R_q}$ in both cases, but its terminal set is defined slightly differently. 
In the balanced case, we simply take all terminals of $T_q$ found in $G_{R_q}$;
in the expander case, we only take the terminals from $R_q$.
This subtle distinction is to ensure that in the latter case, the instance at $q_r$ is indeed a terminal expander, as $G_q[R_q]$ is a $(T_q \cap R_q, \phi)$-expander, and $G_{R_q}$ is a \emph{supergraph} of $G_q[R_q]$ (while $G_{R_q}$ might not be an expander w.r.t.\ \emph{all} the $T_q$-terminals in it).
The point of the additional ``stepchild'' $q_s$ is essentially to account for this loss of terminals in the right child, by using~\Cref{lem:stepchild-lemma}.
\end{remark}

Finally, we define the set $S^*$ for~\Cref{thm:cut-detectors} as the union of all separators $S_q$ from the cuts associated with internal nodes of $\T$:
\begin{equation}\label{eq:terminalLRtree}
    S^* := \bigcup \big\{ S_q \mid \text{$q \in \mathcal{V}$ is an internal node of $\T$} \big\} .
\end{equation}

\paragraph{Analysis.}
We now prove the key properties of the $f$-LR tree $\T$:

\begin{lemma}\label{lem:fLR-Tree}
    Let $\mathcal{T} = (\mathcal{V}, \mathcal{E})$ be the LR-tree for $(G,T)$ and let $S^*$ be the union of separators associated with nodes of $\T$ as defined in~\Cref{eq:terminalLRtree}.
    Then all of the following properties hold.
     \begin{enumerate}
        \item \emph{(Expander/small leaf-nodes)} Let $q$ be a leaf node in $\T$ associated with $(G_q, T_q)$.
        Then either (i) $G_q$ is a $(T_q, \phi)$-expander, or (ii) $|T_q| = O(f/\epsilon)$.
        (See~\Cref{eq:eps-balanced-or-exp-lemma,eq:expansion} for $\epsilon$ and $\phi$.)
        \label{prop:expander-or-small-leaves}
    
        \item \emph{(Logarithmic depth)} The depth of $\T$ is $d = O(\log |T|)$.
        \label{prop:log-depth}
    
        \item \emph{(Terminal reduction)} $|S^*| \leq \frac{1}{2} |T|$.
        \label{prop:terminal-reduction}
    
        \item \emph{(Near-linear space)} $\sum_{q \in \mathcal{V}} |V(G_q)| = O(nd)$ and $\sum_{q \in \mathcal{V}} |E(G_q)| = O(fnd^2)$.
        \label{prop:linear-space}
    \end{enumerate}
\end{lemma}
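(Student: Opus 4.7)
The plan is to prove the four properties in the listed order, exploiting the choice $\epsilon = 1/(c\log|T|)$, which ensures $\epsilon d = O(1/c)$ is small. Properties~1 and~2 come directly from the construction. For Property~1, a leaf $q$ arises in one of three ways: (a) the stopping rule fired because $|T_q| \leq (f+1)/\epsilon$, giving option~(ii); (b) $q$ is a stepchild $q_s$ of an expander-case internal node $p$, with $|T_{q_s}| \leq 3(f+1) = O(f)$, again option~(ii); or (c) $q$ is a right child $q_r$ of an expander-case $p$, where $G_q = G_{R_p}$ is a supergraph of $G_p[R_p]$ on the same vertex set. Since \Cref{lem:balanced-or-expander} certifies $G_p[R_p]$ as a $(T_p \cap R_p, \phi)$-expander and expansion is monotone under edge additions (any cut in the supergraph is a cut in the subgraph with the same separator), $G_q$ inherits the property, giving option~(i). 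For Property~2, every recursed child $q'$ of an internal node $q$ satisfies $|T_{q'}| \leq (\tfrac{2}{3} + O(\epsilon))|T_q|$: the key calculation is $|T_q \cap (L_q \cup S_q)| = |T_q| - |T_q \cap R_q| \leq (\tfrac{2}{3} + \epsilon)|T_q|$ via the balanced guarantee and $|S_q| \leq \epsilon|T_q|$, and the $|U_{R_q}| \leq f+1 \leq \epsilon|T_q|$ extras are absorbed since $|T_q| > (f+1)/\epsilon$ at internal nodes. Hence $d = O(\log|T|)$.

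For Property~3, I argue by a telescoping levelwise sum. Setting $\Sigma_i := \sum_{q \text{ at depth } i} |T_q|$, a direct calculation (using $|S_q| \leq \epsilon|T_q|$ and $f+1 \leq \epsilon|T_q|$ at internal nodes) shows that every internal node $q$ contributes at most $(1 + O(\epsilon))|T_q|$ to the child terminal sum, in both the balanced and expander cases (for the latter, the three children $q_l, q_r, q_s$ overlap only in small $U$-sets and in $T_q$ itself with multiplicity at most one). Hence $\Sigma_{i+1} \leq (1 + O(\epsilon))\Sigma_i$, and with $\epsilon d = O(1)$, $(1 + O(\epsilon))^d = O(1)$, giving $\Sigma_i = O(|T|)$ for every $i$ and $\sum_q |T_q| = O(|T|d)$. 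Thus $|S^*| \leq \sum_{q\text{ internal}} |S_q| \leq \epsilon \sum_q |T_q| = O(|T|/c) \leq |T|/2$ for $c$ large enough.

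For Property~4, I separate the recursive and ``one-shot leaf'' contributions in the expander case. Let $\Phi_i$ and $\Phi_i^{\mathrm{int}}$ denote $\sum |V(G_q)|$ summed over all depth-$i$ nodes and over depth-$i$ internal nodes respectively, and let $I_i$ be the number of internal nodes at depth $i$. In the balanced case, $|V(G_{q_l})| + |V(G_{q_r})| = |V(G_q)| + |S_q| + O(f) \leq (1 + \epsilon)|V(G_q)| + O(f)$; in the expander case only $q_l$ recurses, and $V(G_{q_l}) \subseteq V(G_q)$ gives $|V(G_{q_l})| \leq |V(G_q)|$. Hence $\Phi_{i+1}^{\mathrm{int}} \leq (1 + \epsilon) \Phi_i^{\mathrm{int}} + O(f) I_i$. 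Using Property~3's accounting (each internal $q$ has $|T_q| > (f+1)/\epsilon$, so $\sum_i I_i \leq \epsilon \sum_q |T_q|/(f+1) = O(|T|/f)$) together with $(1+\epsilon)^d = O(1)$, we get $\Phi_i^{\mathrm{int}} = O(n)$ for every $i$. The full $\Phi_i$ additionally includes leaves; in the expander case, $q_r$ and $q_s$ jointly contribute $2|V(G_{R_p})| \leq 2|V(G_p)|$ \emph{once}, so $\Phi_i \leq 3 \Phi_{i-1}^{\mathrm{int}} + O(f) I_{i-1} = O(n)$ per level, and summing over levels yields $\sum_q |V(G_q)| = O(nd)$. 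For the edge bound, I invoke \Cref{obs:arboricity-left-right} inductively: after Nagamochi--Ibaraki sparsification $G$ has arboricity $O(f)$, and each left/right split adds $f+1$ to arboricity, so $G_q$ at depth $i$ has arboricity $O(fi)$; hence $|E(G_q)| = O(fi) \cdot |V(G_q)|$, and $\sum_q |E(G_q)| \leq \sum_i O(fi) \Phi_i = O(fn) \sum_{i \leq d} i = O(fnd^2)$.

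The main obstacle is Property~4's vertex bound: balanced children inflate the child sum by $(1+\epsilon)$ and expander children nearly triple it via the leaf duplicates $q_r, q_s$, so a naive recursion yields $\Omega(3^d n)$. The $O(nd)$ bound holds because (a) the expander-case tripling is ``one-shot'' into non-recursing leaves and hence does not compound across depths, while (b) the calibration $\epsilon d = O(1)$ controls the balanced-case $(1+\epsilon)^d$ blow-up.
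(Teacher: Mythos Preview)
Your proof is correct and follows the same level-by-level accounting as the paper, driven by the same calibration $\epsilon d = O(1)$. Two remarks are worth making.

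First, a small slip in Property~1(c): $G_{R_p}$ is \emph{not} on the same vertex set as $G_p[R_p]$---it also contains $S_p \cup U_{L_p}$---so ``monotone under edge additions'' does not directly apply. The fix is easy: given a cut $(L,S,R)$ in $G_{R_p}$, restrict it to $R_p$; since the terminal set $T_p \cap R_p$ lies entirely in $R_p$, either one restricted side is empty (so that side contains no terminals and the expansion inequality holds trivially) or the restriction is a genuine cut of $G_p[R_p]$, whose expansion gives $|S| \geq |S \cap R_p| \geq \phi \min\{|T' \cap (L\cup S)|, |T' \cap (R\cup S)|\}$ as needed. The paper also states this fact without proof, so this is a minor point.

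Second, your Property~4 argument works but is more elaborate than necessary. You keep the additive $O(f)$ term separate and compensate by bounding the total number $\sum_i I_i$ of internal nodes. The paper instead absorbs $O(f)$ multiplicatively via $f+1 < \epsilon|T_q| \leq \epsilon|V(G_q)|$, obtaining the uniform bound $|V(G_{q_l})| + |V(G_{q_r})| \leq (1+3\epsilon)|V(G_q)|$ for \emph{every} internal $q$ (balanced or expander alike), so that $\sum_{q \in \widehat{\mathcal{V}}_i} |V(G_q)| \leq (1+3\epsilon)^i n = O(n)$ where $\widehat{\mathcal{V}}_i$ excludes stepchildren. Since each stepchild shares its graph with its sibling $q_r$, including stepchildren at most doubles each level's sum---no need to track $I_i$ or split internal versus leaf contributions. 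Your ``one-shot tripling'' insight is exactly this observation in a slightly different packaging.
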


The rest of this section is devoted to proving the above~\Cref{lem:fLR-Tree}.
Observe that Property~\ref{prop:expander-or-small-leaves} regarding the leaves of $\T$ follows immediately from the description of the algorithm and the discussion in~\Cref{remark:right-child}.
(Note that a ``stepchild'' of the form $q_s$ is a leaf node associated with at most $3(f+1) \leq O(f/\epsilon)$ terminals, as $|U_{L_q}|,|U_{R_q}|,|U_{S_q}| \leq f+1$.)

For the remaining properties, we need the following two claims.
In both, $q$ refers to some internal node in $\T$.
Hence, $q$ has left and right children $q_l$ and $q_r$ (and possibly also a stepchild $q_s$ which is a leaf in $\T$).
Recall that because $q$ is not a leaf, we must have $|T_q| > (f+1)/\epsilon$, so $f+1 < \epsilon |T_q|$.
Also, recall that $|S_q| \leq \epsilon |T_q|$ by~\Cref{lem:balanced-or-expander}, and that $|U_{L_q}|,|U_{R_q}|\leq f+1$ by~\Cref{def:left-right-graphs}.
These facts will serve us in both proofs.
\begin{claim}\label{clm:terminals-shrinking}
    Let $q$ be an internal node in $\T$.
    Then it holds that $|T_{q_l}| \leq 0.9 |T_q|$.
    Additionally, if $q_r$ is an internal node in $\T$, then also $|T_{q_r}| \leq 0.9 |T_q|$.
\end{claim}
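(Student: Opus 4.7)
The plan is to split into the two dichotomous cases that govern the construction of $\T$ from the invocation of~\Cref{lem:balanced-or-expander}, namely the ``balanced'' and ``expander'' cases, and derive the bounds from elementary counting.

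First I collect the uniform facts I will use throughout. Since $q$ is internal, the recursive call did not terminate at $q$, so $|T_q| > (f+1)/\epsilon$, which rearranges to $f+1 < \epsilon\,|T_q|$. From~\Cref{lem:balanced-or-expander}, $|S_q|\leq \epsilon\,|T_q \cap (L_q\cup S_q)|\leq \epsilon\,|T_q|$, and from~\Cref{def:left-right-graphs}, $|U_{L_q}|,|U_{R_q}|\leq f+1$. Finally, by~\Cref{eq:eps-balanced-or-exp-lemma} the constant $c$ in $\epsilon = 1/(c\log |T|)$ can be taken large enough that $\epsilon \leq 1/100$, say; this is the slack I will cash in at the end.

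\textbf{Balanced case.} Here $|T_q\cap(L_q\cup S_q)|, |T_q\cap(R_q\cup S_q)| \geq \tfrac{1}{3}|T_q|$. For the left child, $V(G_{L_q}) = L_q \cup S_q \cup U_{R_q}$, so
\[
    |T_{q_l}| \;=\; |T_q\cap V(G_{L_q})| \;\leq\; |T_q \cap (L_q\cup S_q)| + |U_{R_q}|.
\]
From $|T_q\cap(R_q\cup S_q)|\geq\tfrac{1}{3}|T_q|$ and $|S_q|\leq\epsilon|T_q|$ we get $|T_q\cap L_q|\leq \tfrac{2}{3}|T_q|$, hence $|T_q\cap(L_q\cup S_q)| \leq \tfrac{2}{3}|T_q| + \epsilon|T_q|$. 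Combined with $|U_{R_q}|\leq f+1 < \epsilon\,|T_q|$, this yields $|T_{q_l}| \leq \tfrac{2}{3}|T_q| + 2\epsilon|T_q| \leq 0.9|T_q|$. The right child is fully symmetric: use $|T_q\cap(L_q\cup S_q)|\geq\tfrac{1}{3}|T_q|$ to bound $|T_q\cap(R_q\cup S_q)| \leq \tfrac{2}{3}|T_q| + \epsilon|T_q|$, and add $|U_{L_q}|\leq f+1 < \epsilon|T_q|$.

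\textbf{Expander case.} Here $q_r$ (and $q_s$) are permanent leaves by construction, so the second assertion of the claim is vacuous and only $q_l$ needs attention. In this case $|T_q\cap R_q|\geq \tfrac{1}{2}|T_q|$, so $|T_q\cap(L_q\cup S_q)| \leq \tfrac{1}{2}|T_q|$, and
\[
    |T_{q_l}| \;\leq\; |T_q\cap(L_q\cup S_q)| + |U_{R_q}| \;\leq\; \tfrac{1}{2}|T_q| + (f+1) \;\leq\; \tfrac{1}{2}|T_q| + \epsilon|T_q| \;\leq\; 0.9|T_q|.
\]
Nothing here is subtle; the only point worth flagging is the need to separate out $|U_{R_q}|$ (respectively $|U_{L_q}|$) as the extra ``padding'' introduced when passing from $G_q$ to $G_{L_q}$ (respectively $G_{R_q}$), and to absorb it using the slack $f+1 < \epsilon|T_q|$ guaranteed by the termination test $|T_q| > (f+1)/\epsilon$. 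This is where the choice of the constant $c$ in $\epsilon$ must be large enough; no other step poses any real obstacle.
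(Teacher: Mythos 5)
Your proof is correct and follows essentially the same route as the paper: split according to the balanced/expander cases of~\Cref{lem:balanced-or-expander}, bound the extra terminals contributed by $S_q$ and $U_{R_q}$ (resp.\ $U_{L_q}$) by $\epsilon|T_q|$ each using $|S_q|\leq\epsilon|T_q|$ and $f+1<\epsilon|T_q|$, and note that in the expander case $q_r$ is a leaf so only $q_l$ matters. The only (immaterial) difference is that in the expander case you bound $|T_q\cap(L_q\cup S_q)|\leq\tfrac12|T_q|$ directly, whereas the paper reduces to the balanced-case calculation via $|T_q\cap(R_q\cup S_q)|\geq\tfrac13|T_q|$.
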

\begin{proof}
    Suppose first that the balanced case occurred in the recursive call initiated at $q$.
    Then we have $|T_q \cap (R_q \cup S_q)| \geq \frac{1}{3} |T_q|$, and hence $|T_q \cap L_q| \leq \frac{2}{3} |T_q|$.
    We thus get
    \[
    |T_{q_l}| = |T_q \cap V(G_{L_q})| = |T_q \cap L_q| + |T_q \cap S_q| + |T_q \cap U_{R_q}| \leq \left( \frac{2}{3} + 2\epsilon \right) |T_q| \leq 0.9 |T_q|.
    \]
    Additionally, in the balanced case we also have $|T_q \cap (L_q \cup S_q)| \geq \frac{1}{3} |T_q|$, so a symmetric argument proves that also $|T_{q_r}| \leq 0.9 |T_q|$.

    It the expander case $q_r$ is a leaf, so we only have to care about $|T_{q_l}|$.
    But in this case we have $|T_q \cap R_q| \geq \frac{1}{2} |T_q|$, so in particular $|T_q \cap (R_q \cup S_q)| \geq \frac{1}{3} |T_q|$, and the same calculation from the balanced case goes through.
\end{proof}

\begin{claim}\label{clm:space-induction-in-LR-tree}
    Let $q$ be an internal node in $\T$.
    Then the following inequalities hold:
    \begin{equation}\label{eq:vertices-recursion}
        |V(G_{q_l})| + |V(G_{q_r})| \leq (1+3\epsilon) |V(G_q)|,
    \end{equation}
    \begin{equation}\label{eq:terminals-recursion}
         |T_{q_l}| + |T_{q_r}| \leq (1+3\epsilon) |T_q|.
    \end{equation}
\end{claim}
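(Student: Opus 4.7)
The plan is to expand both sums directly from the definition of the $f$-left/right graphs (Definition~\ref{def:left-right-graphs}) and absorb the ``overhead'' using the three bounds $|S_q| \leq \epsilon |T_q|$ from Lemma~\ref{lem:balanced-or-expander}, $|U_{L_q}|, |U_{R_q}| \leq f+1$ from the definition of $U_{L_q}, U_{R_q}$, and $f+1 < \epsilon |T_q|$, which holds because $q$ is internal and thus $|T_q| > (f+1)/\epsilon$.

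For~\eqref{eq:vertices-recursion}, observe that by Definition~\ref{def:left-right-graphs}, $V(G_{L_q}) = L_q \cup S_q \cup U_{R_q}$ and $V(G_{R_q}) = R_q \cup S_q \cup U_{L_q}$, independently of whether the balanced or expander case occurred (both children are associated with these graphs). Since $(L_q, S_q, R_q)$ partitions $V(G_q)$, adding the two vertex counts gives
\[
|V(G_{q_l})| + |V(G_{q_r})| = |V(G_q)| + |S_q| + |U_{L_q}| + |U_{R_q}| \leq |V(G_q)| + 3\epsilon|T_q|,
\]
and then using $T_q \subseteq V(G_q)$ we obtain $(1+3\epsilon)|V(G_q)|$.

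For~\eqref{eq:terminals-recursion}, there are two subcases corresponding to the two branches of the construction. In the balanced case, $T_{q_l} = T_q \cap V(G_{L_q})$ and $T_{q_r} = T_q \cap V(G_{R_q})$; since every terminal of $T_q$ appears in exactly one of $L_q, R_q, S_q$ (and those in $S_q$ are counted in \emph{both} children), we get
\[
|T_{q_l}| + |T_{q_r}| = |T_q| + |T_q \cap S_q| + |T_q \cap U_{L_q}| + |T_q \cap U_{R_q}| \leq (1+3\epsilon)|T_q|
\]
by the same three bounds. In the expander case, $T_{q_r}$ is replaced by the smaller set $T_q \cap R_q$, and so
\[
|T_{q_l}| + |T_{q_r}| = |T_q \cap L_q| + |T_q \cap S_q| + |T_q \cap U_{R_q}| + |T_q \cap R_q| \leq |T_q| + |U_{R_q}| \leq (1+\epsilon)|T_q|,
\]
which is strictly stronger than~\eqref{eq:terminals-recursion}.

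There is essentially no serious obstacle: the whole argument is a straightforward counting exercise using the inclusion structure of $V(G_{L_q}), V(G_{R_q})$ relative to $V(G_q)$, together with the three size bounds that have already been flagged in the paragraph preceding the claim. The only minor subtlety is remembering to verify both cases separately for the terminal inequality, since the expander case defines $T_{q_r}$ slightly differently (see Remark~\ref{remark:right-child}); the verification turns out to be easier in that case, not harder.
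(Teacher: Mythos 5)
Your proof is correct and follows essentially the same route as the paper: a direct count showing that the sum double-counts only $S_q \cup U_{L_q} \cup U_{R_q}$, bounded via $|S_q| \leq \epsilon|T_q|$, $|U_{L_q}|,|U_{R_q}| \leq f+1 < \epsilon|T_q|$, and $T_q \subseteq V(G_q)$. The paper simply omits the terminal inequality as "essentially identical," whereas you spell out the two cases (noting the expander case is only easier); this matches the intended argument.
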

\begin{proof}
    Recall that $G_{q_l} = G_{L_q}$ and $G_{q_r} = G_{R_q}$.
    So, $|V(G_{q_l})| + |V(G_{q_r})|$ counts twice the vertices in $S_q \cup U_{L_q} \cup U_{R_q}$, and counts once every other vertex in $G_q$.
    We get
    \[
    |V(G_{q_l})| + |V(G_{q_r})|
    \leq |V(G_q)| + |S_q| + 2(f+1) \leq |V(G_q)| + 3\epsilon|T_q| \leq (1+3\epsilon) |V(G_q)|
    \]
    which proves~\Cref{eq:vertices-recursion}.
    The proof of~\Cref{eq:terminals-recursion} is omitted: it is essentially identical, only now counting the terminals of $T_q$ that are found in the $f$-left and $f$-right graphs $G_{L_q}$ and $G_{R_q}$.
\end{proof}

We are now ready to show the remaining properties of~\Cref{lem:fLR-Tree} (Properties~\ref{prop:log-depth}, \ref{prop:terminal-reduction} and \ref{prop:linear-space}).
We start with Property~\ref{prop:log-depth}:
that $\T$ has depth $d = O(\log |T|)$ follows from~\Cref{clm:terminals-shrinking}, implying that when walking down from the root of $\T$ to its deepest leaf, the number of terminals associated with the current node, which starts at $|T|$, shrinks by a $0.9$ factor in every step (except maybe the last).

Next, we prove Property~\ref{prop:terminal-reduction}, bounding the size of $|S^*|$.
Let $\mathcal{V}_i$ denote the set of nodes with depth $i$ in $\T$.
By inductively using~\Cref{eq:terminals-recursion}, we obtain that for every $i \in \{0,1,\dots,d\}$,
\[
\sum_{\text{internal $q \in \mathcal{V}_i$}} |T_q| \leq (1+3\epsilon)^i |T|
\]
(here, we used the fact that stepchildren nodes cannot be internal in $\T$).
Recall that for every internal $q \in \mathcal{V}$ we have $|S_q| \leq \epsilon |T_q|$ by~\Cref{lem:balanced-or-expander},
so we obtain
\[
|S^*| 
\leq \sum_{\text{internal $q \in \mathcal{V}$}} |S_q| 
\leq \sum_{i=1}^d \sum_{\text{internal $q \in \mathcal{V}_i$}} \epsilon |T_q|
\leq \epsilon \Big( \sum_{i=1}^d (1 + 3\epsilon)^i \Big)|T|
= \frac{(1+3\epsilon)^{d+1}-1}{3} |T|
\leq \frac{1}{2} |T|,
\]
where in the last inequality, we use that $d = O(\log |T|)$ (as proved in Property~\ref{prop:log-depth}), so we can choose the constant $c$ in~\Cref{eq:eps-balanced-or-exp-lemma} to be large enough so as to make $(1+3\epsilon)^{d+1} \leq 2.5$.

Finally, we show Property~\ref{prop:linear-space}.
Let $\widehat{\mathcal{V}}_i$ denote the subset of $\mathcal{V}_i$ that contains all the nodes with depth $i$ in $\T$ that are not stepchildren.
By inductively using~\Cref{eq:vertices-recursion}, we obtain that for every $i \in \{0,1,\dots,d\}$,
\[
\sum_{q \in \widehat{\mathcal{V}}_i} |V(G_q)| \leq (1+3\epsilon)^i n \leq O(n),
\]
where the last inequality follows from $i \leq d = O(\log |T|)$ and the choice of $\epsilon$ in~\Cref{eq:eps-balanced-or-exp-lemma}.
Now, recall that each stepchild has the same number of vertices as its sibling right child (as they are both associated with the same graph, but with different terminals).
Therefore, we get
\[
\sum_{q \in \mathcal{V}} |V(G_q)|
= \sum_{i=1}^d \sum_{q \in \mathcal{V}_i } |V(G_q)|
\leq \sum_{i=1}^d  \Big( 2 \sum_{q \in \widehat{\mathcal{V}}_i } |V(G_q)| \Big) 
\leq O(nd).
\]
Next, by inductively applying~\Cref{obs:arboricity-left-right}, and recalling that the root of $\T$ is associated $G$, whose arboricity is at most $f+1$, we see that if $q \in \mathcal{V}_i$ then $G_q$ has arboricity at most $(i+1) (f+1)$, and hence at most $|E(G_q)| \leq (i+1)(f+1)|V(G_q)| \leq O(df) \cdot |V(G_q)|$.
So, summing over all nodes in $\T$, we get
\[
\sum_{q \in \mathcal{V}} |E(G_q)| = O(df) \cdot \sum_{q \in \mathcal{V}} |V(G_q)| = O(df) \cdot O(nd) = O(fn d^2).
\]

This concludes the proof of~\Cref{lem:fLR-Tree}.

\subsection{The Terminal Cut Detector $\D$ of~\Cref{thm:cut-detectors}}\label{sec:construction-and-query-of-D}

Let $G$, $T$ and $f$ be the same as in the previous~\Cref{sec:left-right-tree}.
This last section discussed how to find the set $S^*$ for~\Cref{thm:cut-detectors} by constructing the $f$-LR tree $\T$ constructed for $(G,T)$.
In this section, we complete the proof of~\Cref{thm:cut-detectors} by providing the construction of the $(f,T,S^*)$-cut detector $\D$ by using the $f$-LR tree $\T$ and the specialized cut detectors from~\Cref{sec:special-cut-detectors}.

\paragraph{Constructing $\D$.}
The construction starts by computing the $f$-LR tree $\T$ for $(G,T)$ and the corresponding new terminal set $S^*$ (defined in~\Cref{eq:terminalLRtree}).

The next step is augmenting each leaf node $q$ of $\T$, associated with $(G_q, T_q)$, with a corresponding specialized $(f,T_q,\emptyset)$-cut detector.
By~\Cref{lem:fLR-Tree}(\ref{prop:expander-or-small-leaves}), either $|T_q| = O(f/\epsilon)$, or $G_q$ is a $(T_q, \phi)$-expander (with $\phi$ as in~\Cref{eq:expansion}).
So, if the former option holds, we can use $\FEOracle(G_q,T_q,f)$; otherwise we use $\TEOracle(G_q, T_q,\phi,f)$.

Next, we augment the internal nodes of $\T$ with specialized US cut detectors.
Let $q$ be an internal node in $T$, associated with $(G_q, T_q)$ and with the vertex cut $(L_q, S_q, R_q)$ in $G$.
Let $G_{L_q}$ and $G_{R_q}$ be the corresponding $f$-left and $f$-right graphs, and $U_{L_q}$ and $U_{R_q}$ as in~\Cref{def:left-right-graphs}.
The node $q$ will be augmented with two US cut detectors:
$\USOracle(G_{L_q}, U_{L_q} \cup U_{R_q}, S_q, f)$ and $\USOracle(G_{R_q}, U_{L_q} \cup U_{R_q}, S_q, f)$.
Note that both have the same ``S-set'' $S_q$ and the same ``U-set'' $U_{L_q} \cup U_{R_q}$, but they are constructed with respect to the different graphs $G_{L_q}$ and $G_{R_q}$.

When we finish augmenting all the nodes in $\T$ as described above, we no longer need to explicitly store the graphs $G_q$ in each node $q$.
Instead, we just store $V(G_q)$, and (in case $q$ is internal) its partition into the cut $(L_q, S_q, R_q)$ and the sets $U_{L_q}$ and $U_{R_q}$.
We keep storing the associated terminal set $T_q$.
Each of these vertex subsets is stored in some data structure supporting membership queries in $\tilde{O}(1)$ time.%
\footnote{We can use perfect hashing to get constant query time for membership queries. However, as we already incur other logarithmic factors, a balanced binary search tree or a sorted array will suffice for us.}
This concludes the construction of $\D$.

\paragraph{Query Algorithm of $\D$.}
We now describe how the $(f,T,S^*)$-cut detector $\D$ answers a given query $F \subseteq V$ with $|F| \leq f$.
The query algorithm explores $\T$ in a recursive manner, which is initialized from the root of $\T$.
In each visited node $q$, the recursive call associated with $q$ returns either \emph{``cut''} or \emph{``fail''}, and the final answer is the value returned by the initial call to the root.

We denote $F_q = F \cap V(G_q)$,
and when $q$ is an internal node in $\T$, $F_l = F \cap V(G_{L_q})$ and $F_r = F \cap V(G_{R_q})$.
Note that $F_{q_l} = F_l$, $F_{q_r} = F_r$, and $F_{q_s} = F_r$ when $q_s$ exists.

The recursion is implemented as follows: %
\begin{mdframed}[style=MyFrame]
\begin{description}
    \item[\textbf{(``Leaf'')}] If $q$ is a leaf:
    Then we query the $(f,T_q,\emptyset)$-cut detector of $G_q$ which is found in $q$
    (which is either $\FEOracle(G_q,T_q,f)$ or $\TEOracle(G_q,T_q,\phi,f)$) with $F_q$,
    and return the answer obtained from this query.

    \item[\textbf{(``Trim Right'')}] Else, if $F_q \cap R_q \subseteq U_{R_q}$:
    Then we have $F_r \subseteq S_q \cup U_{L_q} \cup U_{R_q}$.
    Hence, we can query $\USOracle(G_{R_q}, U_{L_q} \cup U_{R_q}, S_q, f)$ with $F_r$.
    \begin{itemize}
        \item If this query returns \emph{``cut''}, we return \emph{``cut''}.
        \item Otherwise, we recurse only on the left child $q_l$.
        If this recursive call returned \emph{``cut''}, we return \emph{``cut''}; otherwise we return \emph{``fail''}.
    \end{itemize}
    
    \item[\textbf{(``Trim Left'')}] Else, if $F_q \cap L_q \subseteq U_{L_q}$:
    Then in a similar fashion to the previous case, 
    we can query $\USOracle(G_{L_q}, U_{L_q} \cup U_{R_q}, S_q, f)$ with $F_l$.
    
    \begin{itemize}
        \item If this query returns \emph{``cut''}, we return \emph{``cut''}.
        \item Otherwise, we recurse on the right child $q_r$ and on the stepchild $q_s$ (if exists).
        If at least one recursive call returned \emph{``cut''}, we return \emph{``cut''}; otherwise we return \emph{``fail''}.
    \end{itemize}

    \item[\textbf{(``Branch'')}] Else:
    We recurse on all children of $q$.
    If at least one recursive call returned \emph{``cut''}, we return \emph{``cut''}; otherwise we return \emph{``fail''}.
\end{description}
\end{mdframed}
We denote by $\T(F)$ the \emph{query tree of $F$}, induced by $\T$ on the nodes visited during the query;
note that $\T(F)$ is connected and contains the root of $\T$.

\paragraph{Correctness.}
We will show the following by induction from the leaves upwards on $\T(F)$:
\begin{lemma}\label{lem:corretness-induction}
    The answer returned from a recursive call invoked on a node $q$ in $\T(F)$ satisfies:
    \begin{itemize}
        \item (Soundness) If the answer is \emph{``cut''}, then $F_q$ is a cut in $G_q$.
        \item (Completeness) If $F_q$ separates $T_q$ but not $S^* \cap V(G_q)$ in $G_q$, then \emph{``cut''} is returned.
\end{itemize}
\end{lemma}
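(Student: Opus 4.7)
The plan is to prove both soundness and completeness simultaneously by structural induction on $\T(F)$ from the leaves upwards, using the three previously established cut-respecting lemmas (\Cref{lem:cut-in-G-to-left-right-graphs}, \Cref{lem:cut-in-left-right-is-cut-in-G}, \Cref{lem:stepchild-lemma}) together with the specifications of $\FEOracle$, $\TEOracle$ and $\USOracle$.

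The base case is almost free. At a leaf $q$, the algorithm invokes the associated $(f, T_q, \emptyset)$-cut detector, whose soundness gives ours, and whose completeness --- with an empty $S$-side --- says that any $F_q$ separating $T_q$ triggers \emph{``cut''}, which is strictly stronger than required. For the inductive step, soundness is the easier direction: the algorithm returns \emph{``cut''} at an internal node $q$ only through (a) a recursive call on a child $q'$ returning \emph{``cut''}, which by the induction hypothesis means $F_{q'}$ is a cut in $G_{q'} \in \{G_{L_q}, G_{R_q}\}$, or (b) a US cut detector reporting \emph{``cut''}, which by \Cref{lem:US-data-structure} again means $F_l$ or $F_r$ is a cut in $G_{L_q}$ or $G_{R_q}$. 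In either case, \Cref{lem:cut-in-left-right-is-cut-in-G} promotes the cut in $G_{L_q}$ or $G_{R_q}$ to a cut in $G_q$ using only a subset of $F_q$, so $F_q$ itself is a cut in $G_q$.

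For completeness, assume $F_q$ separates $T_q$ but not $S^* \cap V(G_q)$ in $G_q$. Since $S_q \subseteq S^* \cap V(G_q)$, \Cref{lem:cut-in-G-to-left-right-graphs} gives two subcases: either (i) $F_l$ separates $T_{q_l}$ in $G_{L_q}$, or (ii) $F_r$ separates $T_q \cap V(G_{R_q})$ in $G_{R_q}$. A recurring sub-step, needed to invoke the induction hypothesis on any child $q'$, is that $F_{q'}$ cannot separate $S^* \cap V(G_{q'})$ in $G_{q'}$: otherwise \Cref{lem:cut-in-left-right-is-cut-in-G} would lift such a separation back to $G_q$, forcing $F_q$ to separate $S^* \cap V(G_q)$, contradicting our assumption. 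Distributing the two subcases across the three algorithmic branches is then routine. In \emph{``Trim Right''} (where $F_r \subseteq S_q \cup U_{L_q} \cup U_{R_q}$), subcase (i) is handled by the induction hypothesis on $q_l$, while subcase (ii) is handled by the completeness of $\USOracle(G_{R_q}, U_{L_q} \cup U_{R_q}, S_q, f)$ since $F_r$ is a cut in $G_{R_q}$ that does not separate $S_q$. The \emph{``Trim Left''} case is symmetric, and \emph{``Branch''} just recurses on everything.

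The only non-routine piece is subcase (ii) when $q$ was created in the \emph{expander} sub-case of the LR-tree construction: there, $T_{q_r} = T_q \cap R_q$ is strictly smaller than $T_q \cap V(G_{R_q})$, so the conclusion of \Cref{lem:cut-in-G-to-left-right-graphs} does not immediately feed into the induction hypothesis at $q_r$. This is exactly what \Cref{lem:stepchild-lemma} is designed to bridge: if $F_r$ fails to separate $T_q \cap R_q$ in $G_{R_q}$ (making the IH unusable at $q_r$), then combined with the propagation step (which gives that $F_r$ does not separate $S_q$ in $G_{R_q}$), the stepchild lemma forces $F_r$ to separate $U_{L_q} \cup U_{R_q} \cup U_{S_q} = T_{q_s}$ in $G_{R_q}$, and the induction hypothesis applies instead to the stepchild $q_s$. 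This dichotomy between $q_r$ and $q_s$ is the main obstacle; once it is in place, the rest of the proof is purely a bookkeeping composition of the previously established lemmas.
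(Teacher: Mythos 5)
Your proposal is correct and follows essentially the same route as the paper's proof: induction from the leaves of $\T(F)$, soundness via the detectors' guarantees lifted through \Cref{lem:cut-in-left-right-is-cut-in-G}, and completeness via \Cref{lem:cut-in-G-to-left-right-graphs}, the contrapositive ``propagation'' of non-separation of $S^*$ to the children, the US-detector completeness in the trim cases, and the $q_r$-versus-$q_s$ dichotomy through \Cref{lem:stepchild-lemma}. The only difference is cosmetic (you organize by algorithmic case rather than by subcase (i)/(ii) first), and all the essential steps are present.
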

The correctness of the $(f,T,S^*)$-cut detector $\D$ immediately follows by taking $q$ as the root of $\T$ in~\Cref{lem:corretness-induction}.
The argument hinges on the completeness and soundness properties of left and right graphs proved in~\Cref{lem:cut-in-G-to-left-right-graphs} and~\Cref{lem:cut-in-left-right-is-cut-in-G}.
We now give the full proof.

\begin{proof}[Proof of~\Cref{lem:corretness-induction}]
    The ``Leaf'' case serves as the base case for the induction:
    then, we return the answer of an $(f,T_q, \emptyset)$-cut detector in $G_q$ on the query $F_q$, so both the soundness and completeness properties clearly hold.
    From now on, assume that $q$ is an internal node.
    
    We start by showing the soundness property.
    Observe that a \emph{``cut''} answer from $q$ can happen only if \emph{``cut''} was returned from some recursive call invoked from a child of $q$, which is associated with $G_{L_q}$ or $G_{R_q}$, or if some US cut detector built for $G_{L_q}$ or $G_{R_q}$ returned \emph{``cut''}.
    Such \emph{``cut''} answers are sound, either by induction hypothesis or by the correctness of US cut detectors from~\Cref{lem:US-data-structure}.
    Namely, \emph{``cut''} can be returned at $q$ only in case $F_l$ is a cut in $G_{L_q}$ or $F_r$ is a cut in $G_{R_q}$, and~\Cref{lem:cut-in-left-right-is-cut-in-G} implies that $F_q$ is indeed cut in $G_q$ in this case.
    
    We now show the completeness property, so assume $F_q$ separates $T_q$ but not $S^* \cap V(G_q)$ in $G_q$.
    First, by~\Cref{lem:cut-in-G-to-left-right-graphs}, we have that either (i) $F_l$ separates $T_q \cap V(G_{L_q})$ in $G_{L_q}$, or (ii) $F_r$ separates $T_q \cap V(G_{R_q})$ in $G_{R_q}$.
    These cases are almost symmetric, except (ii) introduces a slight complication due to the possible existence of a stepchild $q_s$.
    So, we will consider both cases:
    \begin{itemize}
        \item[(i)] The contrapositive of~\Cref{lem:cut-in-left-right-is-cut-in-G} implies that $F_l$ does not separate $S^* \cap V(G_{L_q})$ in $G_{L_q}$. In particular, by~\Cref{eq:terminalLRtree}, $F_l$ does not separate $S_q$ in $G_{L_q}$.
        In the ``Trim Right'' case, we query the $(f,V(G_{L_q}),S_q)$-cut detector with $F_l$, so it must return \emph{``cut''}.
        In the remaining cases, a recursive call is invoked from $q_l$ (or \emph{``cut''} is already returned and we are done). This call must return \emph{``cut''} by the induction hypothesis and the discussion above.
        So, in any case, the call from $q$ also returns \emph{``cut''}.

        \item[(ii)] By symmetric arguments as (i), $F_r$ does not separate $S^* \cap V(G_{R_q})$ nor $S_q$ in $G_{R_q}$.
        Further, the ``Trim Left'' case must return \emph{``cut''}.
        In the remaining cases, recursive calls are invoked from $q_r$ and from the stepchild $q_s$ if it exists (or \emph{``cut''} is already returned and we are done).
        So, it suffices to show that one of these calls must return \emph{``cut''}.
        If the stepchild $q_s$ doesn't exist, the argument is symmetric to case (i), so assume $q_s$ exists.
        If $F_r$ separates $T_q \cap R_q$ in $G_{R_q}$, then the call on $q_r$ must return \emph{``cut''} by induction hypothesis.        
        Otherwise, by~\Cref{lem:stepchild-lemma}, $F_r$ separates $U_{L_q} \cup U_{S_q} \cup U_{R_q}$ in $G_{R_q}$, so the call on $q_s$ must return \emph{``cut''} by induction hypothesis.
        In any case, this means we return \emph{``cut''}.
    \end{itemize}
    The proof of~\Cref{lem:corretness-induction} is concluded.
\end{proof}

\paragraph{Query Time.}
We now analyze the query time, hinging on the following lemma:

\begin{lemma}\label{lem:branch-nodes-in-query}
    Consider a subtree of the query tree $\T(F)$ rooted at some node $q$, and let $x \geq 0$ be an integer.
    The number of ``Branch'' node $q'$ such that $|F_{q'}| = x$ in this subtree is at most $2^{|F_q|-x}$.
\end{lemma}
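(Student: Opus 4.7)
The plan is to prove the lemma by induction on the subtree of $\T(F)$ rooted at $q$, leveraging one structural observation about how $|F_{q'}|$ evolves along $\T(F)$: it is monotonically non-increasing as one descends $\T(F)$ (since $V(G_c) \subseteq V(G_{q'})$ whenever $c$ is a child of $q'$ in $\T$); it is preserved at Trim and Leaf nodes when moving to their visited children; and it strictly decreases by at least $1$ at every Branch node.

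To justify the strict decrease, I would argue directly from the case analysis of the query algorithm. At a Branch node $q'$, neither Trim case applies, so $F_{q'} \cap R_{q'} \not\subseteq U_{R_{q'}}$ and $F_{q'} \cap L_{q'} \not\subseteq U_{L_{q'}}$. Picking $y \in (F_{q'} \cap R_{q'}) - U_{R_{q'}}$, we have $y \notin V(G_{L_{q'}}) = L_{q'} \cup S_{q'} \cup U_{R_{q'}}$, so $F_l = F_{q'} \cap V(G_{L_{q'}}) \subsetneq F_{q'}$ and $|F_l| \leq |F_{q'}|-1$; the analogous argument for some $y' \in (F_{q'} \cap L_{q'}) - U_{L_{q'}}$ gives $|F_r| \leq |F_{q'}|-1$, and since $V(G_{q_s}) = V(G_{R_{q'}})$ when the stepchild exists, also $|F_{q_s}|  \leq |F_{q'}|-1$. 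Conversely, in the Trim Right case $F_{q'} \subseteq V(G_{L_{q'}})$ and so $|F_l| = |F_{q'}|$; symmetrically for Trim Left.

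Given these facts, let $N(q,x)$ be the quantity to be bounded. The induction proceeds on the depth of the subtree. If $q$ is a Leaf in $\T(F)$, $N(q,x) = 0$. If $q$ is a Trim node, it contributes $0$ to $N(q,x)$ itself, and the only child in $\T(F)$ that can carry Branch nodes has $|F|$-size equal to $|F_q|$ (any stepchild reached in a Trim Left expander step is a leaf of $\T$ and contributes nothing), so $N(q,x) \leq 2^{|F_q|-x}$ by the inductive hypothesis. If $q$ is a Branch node, it contributes $1$ when $|F_q|=x$, and each child $c$ of $q$ in $\T(F)$ satisfies $|F_c| \leq |F_q|-1$; among these, the stepchild (when present) and any leaf-child contribute $0$, while each of the at most two remaining children gives $N(c,x) \leq 2^{|F_q|-1-x}$ by induction. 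Hence $N(q,x) \leq \mathbf{1}[|F_q| = x] + 2 \cdot 2^{|F_q|-1-x}$; when $|F_q| > x$ this is $\leq 2^{|F_q|-x}$, and when $|F_q| = x$ monotonicity of $|F|$ forces each $N(c,x)=0$ (as $|F_c|<x$ makes such a descendant impossible), giving $N(q,x) \leq 1 = 2^0$. The only mild obstacle is a careful accounting of the expander-case Branch, which has three children $q_l,q_r,q_s$; but because $q_r$ and $q_s$ are leaves of $\T$, they add nothing to the Branch count, and the same two-child bound applies.
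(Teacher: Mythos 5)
Your proof is correct and follows essentially the same route as the paper's: induction on the subtree with a case split on the node type, the key observation that at a ``Branch'' node both $|F_l|$ and $|F_r|$ drop by at least one (because neither Trim condition holds, so some vertex of $F_q$ lies outside $V(G_{L_q})$ and outside $V(G_{R_q})$), and handling the $x=|F_q|$ subcase via monotonicity of the query size down the tree. The only differences are cosmetic — you make the monotonicity and the Trim-case equality $|F_l|=|F_q|$ explicit, where the paper only needs (and uses) the corresponding inequalities.
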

\begin{proof}
    By induction on the height of $\T_q (F)$.
    We split to cases according to $q$'s type:

        (''Leaf'')
        Trivial, as $q$'s subtree has $0$ ``Branch'' nodes (and $0 \leq 2^{|F_q|-x}$).

        (``Trim Right'')
        Then the ``Branch'' nodes in $q$'s subtree are exactly those from $q_l$'s subtree. So by induction, the number of such nodes $q'$ with $|F_{q'}| = x$ is at most $2^{|F_l|-x} \leq 2^{|F_q|-x}$.

        (``Trim Left'')
        Symmetric to the ``Trim Left'' case above.
        (Note that even if $q_s$ exists, it is a leaf, so its subtree cannot contribute to ``Branch'' nodes.)

        (``Branch'')
        Then we have $F_q \cap R_q \not \subseteq U_{R_q}$ (since $q$ is not ``Trim Right'').
        Hence (by~\Cref{def:left-right-graphs} of $f$-left graphs), at least one vertex in $F_q$ is missing from $F_l$ (this is some vertex in $F \cap R_q - U_{R_q}$), so $|F_l| \leq |F_q| - 1$.
        A symmetric argument gives that $|F_r| \leq |F_q|-1$.
        We now consider cases according to the value of $x$:
            When $x > |F_q|$, the subtree of $q$ cannot have a node with an associated query of size $x$ (and $0 \leq 2^{|F_q|-x}$).
            When $x = |F_q|$, the root $q$ is the only ``Branch'' node in its subtree with an associated query of size $x$ (and $1 = 2^{|F_q| - x}$).
            When $0 \leq x < |F_q|$, then ``Branch'' nodes with associated query of size $x$ in $q$'s subtree can come only from the subtrees of $q_l$ and $q_r$ (again, if $q_s$ exists, it is a leaf).
            By induction hypothesis, we obtain that these are at most $2^{|F_l|-x} + 2^{|F_r|-x} \leq 2 \cdot 2^{|F_q|-1-x} = 2^{|F_q|-x}$.
\end{proof}

Recall $d = O(\log |T|)$ is the depth of the $f$-LR tree $\T$ (see~\Cref{lem:fLR-Tree}).
We now get:
\begin{corollary}\label{cor:trim-and-total-nodes}
    The following hold for the query tree $\T(F)$:
    \begin{enumerate}
        \item $\T(F)$ has $O(d \cdot 2^{|F|-x})$ ``Trim'' nodes $q$ such that $|F_q| = x$, for every $0 \leq x \leq |F|$.\label{prop:trim-nodes}

        \item  $\T(F)$ has $O(d \cdot 2^{|F|})$ nodes overall.\label{prop:total-nodes}
    \end{enumerate}
\end{corollary}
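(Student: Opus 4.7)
The plan is to use Lemma~\ref{lem:branch-nodes-in-query} as the main counting tool, together with a structural observation about how ``Trim'' nodes sit between ``Branch'' (or root) ancestors and their next ``Branch''/``Leaf'' descendants. A key preliminary observation is that $|F_q|$ is non-increasing along any root-to-leaf path in $\T(F)$: by~\Cref{def:left-right-graphs}, $V(G_{L_q}), V(G_{R_q}) \subseteq V(G_q)$, so $F_q = F \cap V(G_q)$ can only shrink as we descend.

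For Part~\ref{prop:trim-nodes}, fix $0 \leq x \leq |F|$, and for every ``Trim'' node $q$ with $|F_q| = x$ in $\T(F)$, let $q^{\star}$ denote the lowest ``Branch'' ancestor of $q$ in $\T(F)$, or the root of $\T(F)$ if $q$ has no ``Branch'' ancestor. By choice of $q^{\star}$, every strict ancestor of $q$ down to some child of $q^{\star}$ is a ``Trim'' node, and since each ``Trim'' node has only one child that is further recursed on (the stepchild $q_s$ in the ``Trim Left'' case is a leaf), these intermediate nodes form a single chain descending from one of $q^{\star}$'s children. Because $\T$ has depth $d$, each such chain contains at most $d$ nodes, hence at most $d$ Trim nodes with query size exactly $x$. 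Since $q^{\star}$ has at most three children (one left child, one right child, and possibly a stepchild leaf), $q^{\star}$ accounts for at most $3d$ Trim nodes at level $x$.

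It remains to bound the number of candidate $q^{\star}$'s. Applied to the root of $\T(F)$ (where $|F_{\text{root}}| = |F|$), Lemma~\ref{lem:branch-nodes-in-query} gives that the number of ``Branch'' nodes $q'$ in $\T(F)$ with $|F_{q'}| = y$ is at most $2^{|F|-y}$. Since $|F_q| \leq |F_{q^{\star}}|$ (by monotonicity), we only need $q^{\star}$ with $|F_{q^{\star}}| \geq x$; summing gives
\[
1 + \sum_{y=x}^{|F|} 2^{|F|-y} \;=\; O(2^{|F|-x})
\]
candidates (the ``$1$'' accounts for the root when it is not a ``Branch'' node). Multiplying by $3d$ yields the claimed bound of $O(d \cdot 2^{|F|-x})$ on the number of Trim nodes of level $x$.

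Part~\ref{prop:total-nodes} then follows by summing contributions over $x$. The number of Trim nodes is $\sum_{x=0}^{|F|} O(d \cdot 2^{|F|-x}) = O(d \cdot 2^{|F|})$, the number of Branch nodes is $\sum_{x=0}^{|F|} 2^{|F|-x} = O(2^{|F|})$, and since every internal node of $\T(F)$ has at most three children, the number of Leaf nodes is at most three times the number of internal nodes, so also $O(d \cdot 2^{|F|})$. I expect the main subtlety to be the careful association of each Trim node with a unique ``Branch-or-root'' ancestor and verifying that the chains between them have length $\leq d$; once this structural picture is in place, the counting is immediate from Lemma~\ref{lem:branch-nodes-in-query}.
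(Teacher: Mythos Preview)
Your proposal is correct and follows essentially the same approach as the paper: associate each ``Trim'' node to its nearest ``Branch'' ancestor (or the root), bound the number of such ancestors with $|F_{q^\star}| \geq x$ via Lemma~\ref{lem:branch-nodes-in-query}, and observe that each ancestor is chosen by $O(d)$ Trim nodes since the nodes between consecutive Branch nodes form a single chain. Your constant is $3d$ rather than the paper's $2d$ (the paper first removes stepchildren to get a binary tree), but this is immaterial for the $O(\cdot)$ bound.
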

\begin{proof}
    We let each `Trim'' node $q$ such that $|F_q| = x$ to choose as representative its nearest ancestor ``Branch'' node $b(q)$ in $\T(F)$ (if there is no such ancestor, let $b(q)$ be the root of $\T(F)$).
    Then each chosen $b(q)$ has $|F_{b(q)}| \geq x$, so by~\Cref{lem:branch-nodes-in-query} applied to the entire $\T(F)$, there are at most $1 + \sum_{y=x}^{|F|} 2^{|F|-y} = O(2^{|F|-x})$ different representatives.    
    Now, observe that ignoring the stepchildren (which are leaves) in $\T(F)$ gives a binary tree of depth $\leq d$, where the ``Branch'' nodes are exactly those with two children.
    Thus, we see that each representative node is chosen by at most $2d$ nodes, so item 1 follows.
    
    For item 2, note that by~\Cref{lem:branch-nodes-in-query}, the number of ``Branch'' nodes in $\T(F)$ is at most $\sum_{x=0}^{|F|} 2^{|F|-x} = O(2^{|F|})$.
    So the binary tree obtained by deleting the stepchildren has $O(d \cdot 2^{|F|})$ nodes, and adding back the stepchildren at most doubles this number.
\end{proof}

We are now ready to complete the query time analysis.
First, the time spent in a node $q$ of $\T(F)$ to identify 
the relevant case is $\tilde{O}(|F|)$, as we only need to check the membership of the vertices in $F$ in a constant number of vertex subsets.
The rest of the time spent in $q$ is by the query to the cut detector stored at $q$, which only happens in the ``Leaf'' or ``Trim'' cases. 
In the ``Leaf'' case, we either query a $\FEOracle$ structure constructed for $O(f/\epsilon) = O(f \log |T|)$ terminals, or $\TEOracle$ structure constructed for a terminal expander with expansion $\phi$; this takes at most $\tilde{O}(f^2 /\phi)$ time (by~\cref{lem:terminal-data-structure,thm:terminal-expander-DS}).
So, by~\Cref{cor:trim-and-total-nodes}(\ref{prop:total-nodes}), ignoring the time to query to the cut detectors in the ``Trim'' nodes, the query takes $\tilde{O}(2^f \cdot f^2/\phi)$ time.

It remains to analyze the ignored time for cut detector queries in ``Trim'' nodes.
In such a node $q$ with $|F_q| = x$, we make a query to a  $\USOracle$ structure which takes $\tilde{O}(2^x \cdot f)$ time by~\Cref{lem:US-data-structure}.
By~\Cref{cor:trim-and-total-nodes}(\ref{prop:trim-nodes}), summing over all ``Trim'' nodes, the total time spent is bounded by $\sum_{x=0}^{|F|} \tilde{O}(2^{|F|-x}) \cdot O(2^x \cdot f) = \tilde{O} (2^{|F|} \cdot f^2)$.

All in all, we get that the query $F$ is answered in $\tilde{O}(2^{|F|} \cdot f^2 / \phi)$ time.

\paragraph{Space.}
The space required for a specific node $q$ in $\T$ is dominated by the cut detector(s) with which $q$ is augmented, which is either $\FEOracle$, $\TEOracle$ or two $\USOracle$s.
It is readily verified that the latter case is the heaviest, requiring $\tilde{O}(f 2^{2f} |V(G_q)|)$ space by~\Cref{lem:US-data-structure} (recall that $|U_{L_q}|, |U_{R_q}| \leq f+1$).
Summing over all nodes $q$ in $\T$ and using~\Cref{lem:fLR-Tree}(\ref{prop:linear-space}), we get total space of $\tilde{O}(f 2^{2f} n)$ for $\D$.

\paragraph{Preprocessing Time.}
We first address the construction of the $f$-LR tree $\T$.
The computation at each node $q$ is dominated by applying~\Cref{lem:balanced-or-expander} on $G_q$. Thus, by the near-linear space of $\T$ in~\Cref{lem:fLR-Tree}(\ref{prop:linear-space}), the construction of $\T$ takes $\tilde{O}((fnd^2)^{1+o(1)+O(1/r)} \cdot (\log (fnd^2))^{O(r^4)} / \phi)$ time.

In the construction of $\D$ from $\T$, the bottleneck is augmenting each node $q$ in $\T$ with its cut detector(s).
If $q$ is a leaf, it is augmented either with $\TEOracle$ or with $\FEOracle$ constructed for $|E(G_q)|$, which takes $|E(G_q)|^{1+o(1)}/\phi + \tilde{O}(f|E(G_q)|)$ time by~\Cref{thm:terminal-expander-DS,lem:terminal-data-structure}.
If $q$ is a internal, it is augment with two $\USOracle$ constructed for its left and right graphs, which are $G_{q_r}$ and $G_{q_l}$, where the ``U-set'' has $\leq 2(f+1)$ vertices.
This takes $\tilde{O}(f 2^{2f} (|E(G_{q_l})| + |E(G_{q_r})|)$ time by~\Cref{lem:US-data-structure}.
Summing over all $q$ and using~\Cref{lem:fLR-Tree}(\ref{prop:linear-space}), we get $f^2 n^{1+o(1)}/\phi + \tilde{O}(f^2 2^{2f} n)$ time for constructing $\D$ from $\T$.

So, given any constant $\delta > 0$, we can set the parameter $r$ in~\Cref{eq:eps-balanced-or-exp-lemma} to be a large enough constant (depending only on $\delta$), and obtain the running time stated in~\Cref{thm:cut-detectors}, except for the extra $O(2^{2f})$ factor.
Alternatively, we can set $r = \Theta(\log \log n)$, so that the expansion becomes $\phi  = 1/n^{o(1)}$, resulting in the preprocessing improvement at cost of $n^{o(1)}$ factors in space and query time stated in~\Cref{thm:cut-detectors} (again, except for the extra $O(2^{2f})$ factor).

\medskip
This nearly concludes the proof of~\Cref{thm:cut-detectors}, except for $O(2^{2f})$ factors appearing the space of preprocessing time; \Cref{sec:space-improvment} explains how these can be shaved to get~\Cref{thm:cut-detectors}.
But, before we get to shaving this factors in the general case, we first provide the details on the $f$-connected case in the following~\Cref{sec:f-connected}.
(This is for ease of presentation purposes: the modifications for shaving the $O(2^{2f})$ factors in the general case are not needed in the $f$-connected case, so we prefer to defer them.)

\subsection{Minimum Vertex Cut Oracles}\label{sec:f-connected} %

This section is devoted to \emph{minimum} vertex cuts, that when queried with any $F \subseteq V$, can determine if $F$ is a minimum vertex cut in $G$, hence providing a establishing Theorem \ref{thm:cut-theorem-fconnected}. 
This oracle is obtained by an improved variant of our $f$-vertex cut oracle when the given graph $G$ is $f$-connected.  The preprocessing first computes the vertex connectivity $f$ of $G$, and the space and query time of the resulting oracle also depend on $f$.

We do this by showing the relatively minor modifications to the general $f$-vertex cut oracle of~\Cref{thm:main-cut-theorem}, which utilize the additional $f$-connectivity promise to improve space and query time.
Specifically, we set our goal to improving the space and query time of the cut detector $\D$ from~\Cref{thm:cut-detectors} to $\tilde{O}(fn)$ and $\tilde{O}(f^2)$, respectively;
the corresponding complexities of the oracle are only larger by an $O(\log n)$ factor, as discussed in~\Cref{sec:overview-oracles}.

\paragraph{Improved Properties of Left/Right Graphs.}

Let $G$, $T$, $f$ and $(L,S,R)$ be as in~\Cref{sec:left-right-graphs}, with the corresponding $f$-left and $f$-right graphs $G_{L}$, $G_R$ defined there.
The structural key for the improvements in the $f$-connected case lies in the following lemma, which strengthens~\Cref{lem:cut-in-G-to-left-right-graphs}.
This is essentially~\cite[Lemma 3.7]{NSYArxiv23}; we provide the proof to keep the presentation stand-alone.

\begin{lemma}\label{lem:f-connected-left-right-graph-completness}
    Suppose $G$ is $f$-connected.
    Let $F \subseteq V$ with $|F| = f$.
    Suppose that in $G$, $F$ separates $T$ but does not separate $S$.
    Then either:
    \begin{itemize}
        \item $F \subseteq L \cup S$ and $F$ separates $T \cap V(G_L)$ in $G_L$, or
        \item $F \subseteq R \cup S$ and $F$ separates $T \cap V(G_R)$ in $G_R$.
    \end{itemize}
\end{lemma}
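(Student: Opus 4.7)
The plan is to leverage $f$-connectivity to force $F$ to lie entirely on one side of the given cut $(L,S,R)$, then exhibit an explicit projected $T$-cut in the corresponding side's graph.

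Step 1 (one-sidedness). Since $G$ is $f$-connected and $|F|=f$, $F$ is a minimum vertex cut; a standard consequence is that every $v \in F$ has a neighbor in every connected component of $G-F$ (otherwise $F \setminus \{v\}$ would be a cut of size $f-1$, violating $f$-connectivity). Let $(A,F,B)$ certify that $F$ separates $T$ in $G$, with $t_A \in A \cap T$ and $t_B \in B \cap T$ both outside $F$. Since $F$ does not separate $S$, WLOG $B \cap S = \emptyset$; the symmetric swap of $A,B$ is handled analogously and yields the $F \subseteq R \cup S$ alternative in the conclusion. The $G-F$ component $C^*$ containing $t_B$ thus avoids $S$, and since $G$ has no $L$-$R$ edges, $C^*$ lies entirely in $L$ or entirely in $R$. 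The $T$-cut assumption on $(L,S,R)$ gives $T \cap R = \emptyset$, so $t_B \in L$ and $C^* \subseteq L$. Any $v \in F \cap R$ would have no neighbor in $L$, hence none in $C^*$, contradicting the minimum-cut property. Therefore $F \cap R = \emptyset$, i.e., $F \subseteq L \cup S$.

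Step 2 (projected cut in $G_L$). Define $A' := (A \cap (L \cup S)) \cup U_R$ and $B' := B \cap L$. Using $B \cap S = \emptyset$ and $F \subseteq L \cup S \subseteq V(G_L)$, the triple $(A', F, B')$ partitions $V(G_L) = L \cup S \cup U_R$. There are no $G_L$-edges between $A'$ and $B'$: original $A$-$B$ edges are absent in $G$, edges between $U_R \subseteq R$ and $B' \subseteq L$ do not exist in $G$ (no $L$-$R$ edges), and the new $G_L$-edges (inside $U_R$ and between $U_R$ and $S$) do not touch $B'$ since $B' \cap (U_R \cup S) = \emptyset$. Because $(L,S,R)$ is a $T$-cut we have $T \subseteq L \cup S$, so $t_A \in A \cap (L \cup S) \subseteq A'$ and $t_B \in B \cap L = B'$, both outside $F$. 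Hence $F$ separates $T \cap V(G_L)$ in $G_L$, establishing the desired alternative.

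The main obstacle is Step 1, where the weaker~\Cref{lem:cut-in-G-to-left-right-graphs} does not suffice: its completeness does not pin $F$ to a single side. The key upgrade is the minimum-cut property enabled by $f$-connectivity, which rules out $F$-vertices on the ``wrong'' side of $(L,S,R)$. Step 2 is then a direct projection argument made clean by the $T$-cut property of $(L,S,R)$, which places all terminals in $L \cup S$ and thereby guarantees terminals on both sides of the projected cut without needing to invoke the ``terminals first'' ordering of $U_R$.
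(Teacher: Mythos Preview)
Your argument rests on the claim that ``the $T$-cut assumption on $(L,S,R)$ gives $T \cap R = \emptyset$'', but this is false: the paper's preliminary definition of a vertex $T$-cut contains a typo (it should read $R \cap T \neq \emptyset$), as is evident from context---the balanced-or-expander lemma puts terminals on both sides, the definition of $U_R$ explicitly orders the terminals of $T \cap R$ first, and the stated equivalence with ``$S$ separates $T$'' is vacuous if one side has no terminals. This false claim breaks both steps. In Step~1 you cannot force $t_B \in L$; the component $C^*$ containing $t_B$ may lie in $L$ or in $R$, and you must case-split (your neighbor-in-every-component argument then works cleanly in each case, yielding $F \subseteq L \cup S$ or $F \subseteq R \cup S$ respectively). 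In Step~2 you cannot conclude $t_A \in L \cup S$, so you lose your witness terminal on the $A'$ side of the projected cut; the repair is exactly the ``terminals first'' ordering of $U_R$ that you claim to avoid: having already established $F \subseteq L \cup S$ you get $F \cap U_R = \emptyset$, and if $t_A \in R$ then $T \cap R \neq \emptyset$, so $U_R$ contains a terminal, which then lies in $A' \setminus F$.

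Once patched, your one-sidedness argument (each $v \in F$ must border every component of $G-F$, by minimality of $F$) is a valid and arguably more direct alternative to the paper's route, which instead shows that $Z := N(L \cap B)$ is a cut disjoint from $R$ with $Z \subseteq F$, and then invokes $|Z| \geq f = |F|$ to conclude $Z = F$.
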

\begin{proof}
    The proof starts exactly as in~\Cref{lem:cut-in-G-to-left-right-graphs}.
    Let $(A,F,B)$ be a $T$-cut in $G$.
    Because $F$ does not separate $S$, we may assume that $B \cap S = \emptyset$ (otherwise $A \cap S = \emptyset$, so we can swap $A$ and $B$).
    Choose some terminal $t^* \in B \cap T$.
    Then $t^* \notin S$, hence $t^* \in L \cup R$.
    In the proof~\Cref{lem:cut-in-G-to-left-right-graphs}, it is shown that $t^* \in L$ implies that $F \cap V(G_L)$ separates $T \cap V(G_L)$ in $G_L$, and symmetrically for $t^* \in R$.
    So, all that remains to show is that $t^* \in L$ implies $F \subseteq L \cup S$ (and the argument for $t^* \in R$ implies $F \subseteq R \cup S$ is symmetric).

    Consider $Z = N(L \cap B)$, i.e., $Z$ is the set of vertices outside $L \cap B$ with some neighbor in $L \cap B$ (with respect to the graph $G$).
    Note that $Z \cap A = Z \cap R = \emptyset$, since there are no edges between $A$ and $B$, nor between $L$ and $R$.
    Thus, $Z$ separates $t^* \in L \cap B$ from every vertex in $A \cup R \neq \emptyset$, so $Z$ is a cut in $G$.
    
    Finally, we claim that $Z = F$; this will end the proof as we already saw that $Z \cap R = \emptyset$.
    Because $Z$ is a cut in $G$, and $F$ is a \emph{minimum} cut in $G$, it suffices to prove that $Z \subseteq F$.
    Let $u \in Z$.
    Since $Z \cap A = \emptyset$, we get $u \in F$ or $u \in B$; we show that the latter is impossible.
    Seeking contradiction, suppose $u \in B$.
    Then, since $Z = N(L \cap B)$, we have that $u \notin L$.
    Also, because $B \cap S = \emptyset$, we get $u \notin S$.
    Thus, it must be that $u \in R$.
    But, since $Z = N(L \cap B)$, $u$ has some neighbor in $L \cap B \subseteq L$, i.e., there is an edge between $L$ and $R$ --- contradiction.
\end{proof}

Additionally, we observe that $f$-connectivity is inherited from $G$ by $G_L$ and $G_R$, as follows immediately from~\Cref{lem:cut-in-left-right-is-cut-in-G}.
\begin{observation}\label{obs:left-right-f-connected}
    If $G$ is $f$-connected, then so are $G_L$ and $G_R$.
\end{observation}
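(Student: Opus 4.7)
The plan is to derive the observation directly from the soundness lemma of the left/right construction (Lemma \ref{lem:cut-in-left-right-is-cut-in-G}), with no new structural work required. I would argue by contrapositive: assuming one of the new graphs, say $G_L$, has a vertex cut of size strictly less than $f$, I produce a vertex cut of the same (sub-$f$) size in the original graph $G$, contradicting $f$-connectivity.

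Concretely, I would proceed as follows. Suppose $F \subseteq V(G_L)$ satisfies $|F| < f$ and $F$ is a vertex cut in $G_L$. Then there exist $x, y \in V(G_L) - F$ which are disconnected in $G_L - F$, i.e., $F$ separates $x$ and $y$ in $G_L$. Since $|F| < f \leq f$, the hypothesis of Lemma \ref{lem:cut-in-left-right-is-cut-in-G} applies, so $F$ also separates $x$ from $y$ in $G$. Because $V(G_L) \subseteq V(G)$ by Definition \ref{def:left-right-graphs}, the vertices $x$ and $y$ lie in $V(G) - F$, witnessing that $F$ is a vertex cut in $G$ of size strictly less than $f$. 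This contradicts the assumption that $G$ is $f$-connected, and hence $G_L$ must be $f$-connected. The argument for $G_R$ is entirely symmetric, using the ``$G_R$ version'' of the same lemma statement.

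There is essentially no obstacle here: the soundness lemma was proved in precisely the generality needed, handling both the case $U_R \subseteq F$ (where $G$ embeds as a subgraph of $G_L$) and the case $U_R \not\subseteq F$ (where spurious paths through deleted vertices of $R$ can be ``shortcut'' through any surviving representative in $U_R$). The only point worth flagging is to verify that $|F| < f$ really does fit the hypothesis $|F| \leq f$ of Lemma \ref{lem:cut-in-left-right-is-cut-in-G}; this is immediate. I would therefore present the observation as a one-line corollary of the soundness lemma, without repeating the shortcutting argument.
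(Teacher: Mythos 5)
Your proof is correct and matches the paper exactly: the paper states this observation as an immediate consequence of the soundness lemma (Lemma~\ref{lem:cut-in-left-right-is-cut-in-G}), which is precisely the contrapositive argument you spell out. Nothing further is needed.
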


We now turn to describe the modifications to improve the cut detector $\D$ of~\Cref{thm:cut-detectors} constructed in case $G$ is $f$-connected.

\paragraph{Construction of $\T$ and $S^*$.}
The construction of the $f$-LR tree $\T$ and the new terminal set $S^*$ stays exactly the same as in~\Cref{sec:left-right-tree}.
Note that now, by (inductively applying) \Cref{obs:left-right-f-connected}, every graph $G_q$ associated with a node $q$ of $\T$ is $f$-connected.

\paragraph{Construction of $\D$.}
The only modification we make to $\D$ is in the cut detectors with which we augment the internal nodes of $\T$.
Let $q$ be an internal node in $\T$, associated with the graph-terminals pair $(G_q, T_q)$ and the cut $(L_q,S_q,R_q)$ in $G_q$.
Recall that previously, in~\Cref{sec:construction-and-query-of-D}, we augmented $q$ with two $\USOracle$ structures, one for $G_{L_q}$ and one for $G_{R_q}$, with ``S-set'' $S_q$ and ``U-set'' $U_{L_q} \cup U_{R_q}$.
In the $f$-connected case, this simplifies: we will only need one $\USOracle$ constructed \emph{for $G_q$ itself}, with the same ``S-set'', but with an \emph{empty ``U-set''}.
Namely, we augment the internal node $q$ with $\USOracle(G_q, \emptyset, S_q, f)$.
Other than this modification, the construction of $\D$ is exactly as in~\Cref{sec:construction-and-query-of-D}.

\paragraph{Space.}
As in~\Cref{sec:construction-and-query-of-D}, 
the space required for a specific node $q$ in $\T$ is dominated by the cut detector with which $q$ is augmented, which is either $\FEOracle$, $\TEOracle$ or $\USOracle$s.
Now we only have $\USOracle$s with empty ``U-set'', so its space improves to $O(f |V(G_q)|)$.
The space of $\FEOracle$ or of $\TEOracle$ is $\tilde{O}(f|V(G_q)|)$, by~\Cref{lem:terminal-data-structure} and~\Cref{thm:terminal-expander-DS}.
So, summing over all nodes $q$ in $\T$ and using~\Cref{lem:fLR-Tree}(\ref{prop:linear-space}), we now get total space of $\tilde{O}(f n)$ for $\D$.

\paragraph{Query Algorithm for $\D$.}
The implementation of the query is recursive, in the same manner described in~\Cref{sec:construction-and-query-of-D}.
We only consider queries $F \subseteq V$ with $|F| = f$ (as if $|F| < f$, $F$ cannot be a cut in the $f$-connected $G$, so we can just return ``fail'').
Now, the implementation of a recursive call invoked at node $q$ of $\T$ simplifies to the following:
\begin{mdframed}[style=MyFrame]
\begin{description}
    \item[\textbf{(``Leaf'')}] If $q$ is a leaf:
    Then we query the $(f,T_q,\emptyset)$-cut detector of $G_q$ which is found in $q$
    (which is either $\FEOracle(G_q,T_q,f)$ or $\TEOracle(G_q,T_q,\phi,f)$) with $F_q$,
    and return the answer obtained from this query.
    
    \item[\textbf{(``Trim'')}] Else, if $F_q \subseteq S_q$, then we query $\USOracle(G_q, \emptyset, S_q, f)$ with $F_q$, and return the answer obtained from this query.
    
    \item[\textbf{(``Fail'')}] Else, if $F_q \cap L_q \neq \emptyset$ and $F_q \cap R_q \neq \emptyset$: return \emph{``fail''}.

    \item[\textbf{(``Recurse'')}] Else: we have that either $F_q \subseteq L_q \cup S_q$ or $F_q \subseteq R_q \cup S_q$, but not both.
    \begin{itemize}
        \item If the first option occurs, we recurse only on $q_l$.
        \item If the second option occurs, we recurse only on $q_r$ and on $q_s$ (if it exists).
    \end{itemize}
    If at least one recursive call returned \emph{``cut''}, we return \emph{``cut''}; otherwise we return \emph{``fail''}.
\end{description}
\end{mdframed}
Again, $\T(F)$ denotes the query tree of $F$, induced by $\T$ on the nodes visited during the query.

\paragraph{Correctness.}
We show that~\Cref{lem:corretness-induction} still holds after the modifications in case every $G_q$ is $f$-connected, so the correctness follows exactly as in~\Cref{sec:construction-and-query-of-D}.
\begin{proof}[Proof of~\Cref{lem:corretness-induction} in the $f$-connected case.]
    As before, the proof is by induction from the leaves upwards on $\T(F)$.
    The ``Leaf'' case is sound and complete exactly as in the original proof of~\Cref{lem:corretness-induction}.
    
    In the other cases, the argument for the soundness property is essentially identical to the original proof (only now, in the ``Trim'' case, the soundness simply follows from the fact that $\USOracle(G_q, \emptyset, S, f)$ is an $(f, V(G_q), S_q)$-cut detector for $G_q$).
    
    We now show the completeness property, so assume $F_q$ separates $T_q$ but does not separate $S^* \cap V(G_q)$ (and in particular, does not separate $S_q$) in $G_q$.
    Then, in the ``Trim'' case, we must return \emph{``cut''} by the completeness of the $(f,V(G_q),S_q)$-cut detector $\USOracle(G_q, \emptyset, S, f)$ (from~\Cref{lem:US-data-structure}).
    The ``Fail'' case cannot occur by~\Cref{lem:f-connected-left-right-graph-completness}.
    It remains to consider the ``Recurse'' case.
    By~\Cref{lem:f-connected-left-right-graph-completness}, we either have (i) $F_q \subseteq L_q \cup S_q$ and $F_q$ separates $T_q \cap V(G_{L_q})$ in $G_{L_q}$, or (ii) $F_q \subseteq R_q \cup S_q$ and $F_q$ separates $T_q \cap V(G_{R_q})$ in $G_{R_q}$.
    We consider both cases:
    \begin{itemize}
        \item[(i)] The contrapositive of~\Cref{lem:cut-in-left-right-is-cut-in-G} implies that $F_q$ does not separate $S^* \cap V(G_{L_q})$ in $G_{L_q}$.
        Hence, by induction hypothesis, the recursive call invoked on $q_l$ must return \emph{``cut''}, so we also return \emph{``cut''}.
        
        \item[(ii)] Symmetrically to (i), $F_q$ does not separate $S^* \cap V(G_{R_q})$ in $G_{R_q}$.
        If the stepchild $q_s$ doesn't exist, then we return \emph{``cut''} symmetrically to (i), so assume $q_s$ exists.
        If $F_q$ separates $T_q \cap R_q$ in $G_{R_q}$, then the call on $q_r$ must return \emph{``cut''} by induction hypothesis.
        Otherwise, by~\Cref{lem:stepchild-lemma}, $F_q$ separates $U_{L_q} \cup U_{S_q} \cup U_{R_q}$ in $G_{R_q}$, so the call on $q_s$ must return \emph{``cut''} by induction hypothesis.
        In any case, this means we return \emph{``cut''}.
    \end{itemize}
    The proof is concluded.
\end{proof}

\paragraph{Query Time.}
Now, if we ignore the stepchildren in $\T(F)$, we clearly get a path in $\T$. As $\T$ has depth $d = O(\log |T|)$, the total number of nodes visited during the query $F$ is $O(\log |T|)$.
The time spent in each node is analyzed similarly to~\Cref{sec:construction-and-query-of-D}, but now the time to query a $\USOracle$ structure is $\tilde{O}(f)$ instead of $\tilde{O}(f 2^f)$ by~\Cref{lem:US-data-structure}, as every $G_q$ is $f$-connected.
The bottleneck is now at ``leaf'' nodes, where $\tilde{O}(f^2 / \phi)$ time is spent.
So, we get total query time of $\tilde{O}(f^2/\phi \cdot \log |T|) = \tilde{O}(f^2)$.

\paragraph{Preprocessing Time.}
This is analyzed similarly as in~\Cref{sec:construction-and-query-of-D}, only now we construct US cut detector with an empty ``U-set'', which eliminates the $2^{2f}$ factors from the analysis, and from the running time stated in~\Cref{thm:cut-detectors} for general (not $f$-connected) graphs.

\medskip
This concludes the proof of the improvements to~\Cref{thm:cut-detectors} in case $G$ is $f$-connected, which yields~\Cref{thm:cut-theorem-fconnected}.

\subsection{Optimizing Space and Preprocessing Time in~\Cref{thm:cut-detectors}}\label{sec:space-improvment}

In this section, we explain how to shave the $O(2^{2f})$ factors in the space and preprocessing time that appeared in~\Cref{sec:construction-and-query-of-D}
(while paying only $\poly(f, \log n)$ factors, which are absorbed in the $\tilde{O}(\cdot)$ notation as we are interested in $f= O(\log n)$), thus concluding the proof of~\Cref{thm:cut-detectors}.
Both of them appear for the same reason: our use of US cut detectors from~\Cref{lem:US-data-structure} with a ``U-set'' of size roughly $2f$.
To eliminate them, we will explain how to modify our oracle such that we only apply US cut detectors with empty U-sets.

The reason we used nonempty U-sets came from the construction of left and right graphs in~\Cref{sec:left-right-graphs}, and the possibility that the query $F$ might intersect the representative sets for the sides, $U_L$ and $U_R$.
These representative sets consist of terminals from $T$ (except in minor corner cases).
If somehow we were promised that $F$ does not contain any terminals, i.e., that $F \cap T = \emptyset$, then such intersections are impossible and we would not need to use the U-sets in the US cut detectors.
This intuition is realized using \emph{hit and miss hashing} tools of Karthik and Parter~\cite{KarthikP21}, which immediately yield the following lemma:

\begin{lemma}[\protect{Direct Application of~\cite[Theorem 3.1]{KarthikP21}}]
    Given $T \subseteq V$ and integer  $f \geq 1$, there is a deterministic algorithm that outputs a family of $k = O((f \log n)^3)$ subsets $T_1, \dots, T_k \subseteq T$ with the following property: For every $F \subseteq V$ with $|F| \leq f$ and $u,v \in T-F$, there exists $1 \leq i \leq k$ such that $T_i \cap F = \emptyset$ (``$T_i$ \emph{misses} F'') and $u,v \in T_i$ (``$T_i$ \emph{hits} $u$ and $v$'').
    The running time is $\tilde{O}(n k)$.
\end{lemma}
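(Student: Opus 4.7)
The plan is to invoke Karthik--Parter \cite[Theorem 3.1]{KarthikP21} essentially verbatim, the only subtlety being to translate between the ``universe'' $V$ in which $F$ lives and the ``universe'' $T$ in which the subsets $T_i$ live. The Karthik--Parter theorem, applied with the universe $T$, miss-parameter $f$, and hit-parameter $2$, deterministically produces a family $T_1,\dots,T_k \subseteq T$ of size $k = O((f \log |T|)^3) = O((f \log n)^3)$ in time $\tilde{O}(|T| \cdot k) = \tilde{O}(nk)$, with the guarantee: for every $F' \subseteq T$ with $|F'| \leq f$ and every pair $u,v \in T - F'$, there is some index $i$ with $T_i \cap F' = \emptyset$ and $\{u,v\} \subseteq T_i$.

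The first step is to output this family $T_1,\dots,T_k$ as the algorithm's output. The second step is to verify the stated property for arbitrary $F \subseteq V$: given such $F$ with $|F| \leq f$ and any $u,v \in T - F$, set $F' := F \cap T \subseteq T$, and observe that $|F'| \leq |F| \leq f$ and $u,v \in T - F \subseteq T - F'$. Applying the Karthik--Parter guarantee to $F'$ and $u,v$ yields an index $i$ with $T_i \cap F' = \emptyset$ and $u,v \in T_i$. Since $T_i \subseteq T$, we have $T_i \cap F = T_i \cap (F \cap T) = T_i \cap F' = \emptyset$, so $T_i$ misses $F$ and hits $u,v$, as required.

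There is essentially no obstacle here beyond correctly matching parameters: the hit-parameter is $2$ (we only need to hit a pair $u,v$), the miss-parameter is $f$, and the logarithmic factor in the size bound is with respect to $|T| \leq n$, so the stated bound $O((f \log n)^3)$ is immediate. The running time $\tilde{O}(nk)$ follows directly from the deterministic construction time of \cite[Theorem 3.1]{KarthikP21} on a universe of size $|T| \leq n$, since the per-element work in building each of the $k$ subsets is polylogarithmic.
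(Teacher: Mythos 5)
Your proposal is correct and matches the paper, which gives no separate proof and treats the lemma as a direct invocation of the cited Karthik--Parter theorem; your only added content is the (correct and routine) observation that one may replace $F \subseteq V$ by $F' = F \cap T$, since $T_i \subseteq T$ implies $T_i \cap F = T_i \cap F'$.
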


Note that $k = \poly(f, \log n) = \tilde{O}(1)$, so $k$ factors can be absorbed in $\tilde{O}(\cdot)$ notations.
We will hinge on the following immediate corollary of the above lemma:
\begin{corollary}\label{cor:hit-miss}
    If $F \subseteq V$ separates $T$ in $G$, then there is some $T_i$ in the family such that $T_i \cap F = \emptyset$ and $F$ separates $T_i$ in $G$.
\end{corollary}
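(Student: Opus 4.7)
The plan is to unpack the definition of ``$F$ separates $T$'' and then apply the hit-miss hashing lemma to a particular pair of separated terminals.

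First, I would recall that by definition, $F$ separates $T$ in $G$ means there exist two terminals $u, v \in T - F$ that lie in different connected components of $G - F$. Since $u, v \notin F$ and $u, v \in T$, they are eligible inputs to the hit-miss family. Also, since $F$ separates $T$ we automatically have $|F| \geq 1$; more importantly we will only apply the family's guarantee under the assumption $|F| \leq f$, which is the standard implicit assumption here.

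Next, I apply the family guarantee to this pair $(u,v)$: there exists some index $i \in \{1,\ldots,k\}$ such that $T_i \cap F = \emptyset$ (miss) and $\{u,v\} \subseteq T_i$ (hit). The first property is exactly one of the two conclusions we need. For the second, observe that since $T_i \cap F = \emptyset$, we have $T_i \subseteq V - F$, and in particular $u, v \in T_i - F$. Since $u$ and $v$ are already disconnected in $G - F$, they witness that $F$ separates $T_i$ in $G$ as well.

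There is essentially no obstacle: the corollary is a one-line invocation of the hit-miss property applied to the pair of terminals whose existence is guaranteed by the hypothesis that $F$ separates $T$. The only minor thing to be careful about is that $u, v$ must lie in $T - F$ (not just in $T$) in order to apply the family's guarantee; but this is automatic from the standard convention that a separator $F$ partitions $V - F$ into components, so ``separated vertices'' of $T$ are by definition elements of $T - F$.
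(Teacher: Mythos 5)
Your proof is correct and is exactly the argument the paper intends: the paper states this as an immediate corollary of the hit-miss lemma, and the intended reasoning is precisely to apply the lemma to a pair of separated terminals $u,v \in T-F$, obtaining a $T_i$ that misses $F$ and contains both, so $F$ separates $T_i$.
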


So, to realize the cut detector for $T$ in~\Cref{thm:cut-detectors}, we will actually construct $k$ cut detectors, one for each $T_i$ in the family, in a similar fashion to our original construction of~\Cref{thm:cut-detectors}, only with empty U-sets and some minor changes in the construction.
\begin{lemma}\label{lem:hit-miss-cut-detectors}
    For each $T_i$, we can deterministically compute:
    \begin{itemize}
        \item a set $S_i^* \subseteq V$ such that $|S_i^*| \leq \frac{|T|}{2k}$, and
        \item an $(f,T_i,S_i^*)$-cut detector $\D_i$ restricted to queries $F$ such that $F \cap T_i = \emptyset$, with space $\tilde{O}(n)$ and query time $\tilde{O}(2^f)$. 
    \end{itemize}
    Assuming $G$ has $O(fn)$ edges, the running time can be made $\tilde{O}(n) + \tilde{O}(n^{1+\delta})$ for any constant $\delta > 0$.
    The last term can be made $n^{1+o(1)}$ by increasing the space and query time by $n^{o(1)}$ factors.
\end{lemma}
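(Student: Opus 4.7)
The plan is to run, for each $T_i$ separately, the construction of \Cref{sec:construction-and-query-of-D} with terminal set $T_i$, but with two modifications that exploit the restriction $F \cap T_i = \emptyset$. First, I would replace the sparsity parameter in \Cref{lem:balanced-or-expander} by $\epsilon' := 1/(c k \log |T_i|)$ for a sufficiently large constant $c$. Tracing through the proof of \Cref{lem:fLR-Tree}, property~\ref{prop:terminal-reduction}, with $\epsilon'$ in place of $\epsilon$ yields the stronger bound $|S_i^*| \leq \frac{(1+3\epsilon')^{d+1}-1}{3}|T_i| \leq O(|T_i|/k) \leq |T|/(2k)$. The depth bound $d = O(\log |T_i|)$ from \Cref{clm:terminals-shrinking} is unaffected since it only needs $\epsilon'$ to be small, and the expansion degrades only by a factor of $k = \poly(f,\log n)$, which is absorbed into $\tilde O(\cdot)$.

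Second, I would replace every US detector $\USOracle(G_{L_q}, U_{L_q}\cup U_{R_q}, S_q, f)$ and $\USOracle(G_{R_q}, U_{L_q}\cup U_{R_q}, S_q, f)$ by its empty-$U$ counterpart $\USOracle(\cdot, \emptyset, S_q, f)$. This is justified as follows: since \Cref{def:left-right-graphs} builds $U_{L_q}, U_{R_q}$ from a terminal-first ordering, whenever $|T_i \cap L_q|, |T_i \cap R_q| \geq f+1$ we have $U_{L_q} \cup U_{R_q} \subseteq T_i$, so $F \cap T_i = \emptyset$ forces $F \cap (U_{L_q}\cup U_{R_q}) = \emptyset$. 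In the ``Trim Right''/``Trim Left'' branches of the query algorithm, the restricted query $F_r$ or $F_l$ therefore already lies in $S_q$, which is exactly what an empty-$U$ $\USOracle$ requires. By \Cref{lem:US-data-structure}, each such detector costs $\tilde O(f|V(G_q)|)$ space and preprocessing, and summing over $\T$ via \Cref{lem:fLR-Tree}(\ref{prop:linear-space}) gives totals of $\tilde O(fn)=\tilde O(n)$, shaving the $O(2^{2f})$ factors.

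The main obstacle I foresee is the corner-case nodes where $|T_i \cap L_q| < f+1$ or $|T_i \cap R_q| < f+1$, as then $U_{L_q}$ or $U_{R_q}$ may contain non-terminals that a query $F$ could intersect. The cleanest resolution is to weaken the definition of the representative sets in \Cref{def:left-right-graphs} to use only the $T_i$-terminals on each side (i.e.\ $U_{R_q} := T_i \cap R_q$ when $|T_i \cap R_q| < f+1$, and symmetrically for $U_{L_q}$). One then has to re-verify the cut-respecting properties: \Cref{lem:cut-in-left-right-is-cut-in-G} (soundness) is unaffected since shrinking $U_{R_q}$ only removes edges from $G_L$; \Cref{lem:cut-in-G-to-left-right-graphs,lem:stepchild-lemma} (completeness) continue to hold for queries with $F \cap T_i = \emptyset$ because the terminals in $T_i \cap R_q$ are guaranteed to avoid $F$ and still provide the surviving ``witness terminals'' that these proofs locate inside $U_{R_q}$.

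Finally, correctness of the resulting $(f,T_i,S_i^*)$-cut detector $\D_i$ follows by the same induction as in \Cref{lem:corretness-induction}, using the (re-verified) cut-respecting properties, and the query-time analysis from \Cref{cor:trim-and-total-nodes} carries over verbatim to give $\tilde O(2^{|F|}\cdot f^2/\phi) = \tilde O(2^f)$ per query. The preprocessing time is analyzed exactly as in \Cref{sec:construction-and-query-of-D}, with the $O(2^{2f})$ factor now absent because every $\USOracle$ is built with empty $U$; the same $r = \Theta(1)$ vs.\ $r = \Theta(\log\log n)$ trade-off in \Cref{lem:balanced-or-expander} then produces the $\tilde O(n^{1+\delta})$ and $n^{1+o(1)}$ variants respectively.
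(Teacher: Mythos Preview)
Your proposal is correct and takes essentially the same approach as the paper: shrink $\epsilon$ by a factor of $k$, force $U_{L_q}, U_{R_q} \subseteq T_i$ (the paper does this more aggressively by simplifying each to a \emph{single} terminal, whereas you keep up to $f+1$), and then build every $\USOracle$ with an empty $U$-set. One small imprecision worth fixing: your soundness justification ``shrinking $U_{R_q}$ only removes edges from $G_L$'' is not quite right---it removes vertices as well, and Case~1 of \Cref{lem:cut-in-left-right-is-cut-in-G} no longer goes through verbatim; the actual reason soundness survives is that under $F \cap T_i = \emptyset$ and $U_{R_q} \subseteq T_i$ you are guaranteed $U_{R_q} \not\subseteq F$, so you always land in Case~2 of that proof.
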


Before explaining the slight modifications of the original construction to obtain~\Cref{lem:hit-miss-cut-detectors}, we show how it is used to get~\Cref{thm:cut-detectors}.
First, we define $S^* = S^*_1 \cup \cdots \cup S^*_k$, so $|S^*| \leq \frac{1}{2}|T|$.
Then, the $(f,T,S^*)$-cut detector $\D$ simply consists of $\D_1, \dots, \D_k$.
To answer a query $F \subseteq V$ with $|F| \leq f$, we query each $\D_i$ such that $T_i \cap F = \emptyset$, and return \emph{``cut''} if one of these queries returned \emph{``cut''}, or \emph{``fail''} if all of them returned \emph{``fail''}.
The correctness follows from~\Cref{cor:hit-miss}.

To conclude this section, we provide the detailed changes to the original construction for~\Cref{thm:cut-detectors} to obtain its modified version in~\Cref{lem:hit-miss-cut-detectors}:
\begin{itemize}
    \item First, we need minor alterations in~\Cref{sec:left-right-graphs} concerning left and right graphs.
    In~\Cref{def:left-right-graphs}, we simplify $U_R$ to just contain one arbitrary terminal from $R \cap T$ and change $U_L$ in the same fashion.
    By adding our current assumption that $F \cap T = \emptyset$, all of the proofs in this section easily adapt to hold (note that there is no point in having $f+1$ terminals in $U_L$ or $U_R$, because now $F$ cannot intersect these sets).

    \item Next, we describe the modifications to the parameters in the construction of the LR-tree (\Cref{sec:left-right-tree}) so as to obtain $|S^*| \leq \frac{1}{2k} |T|$.
    We reduce $\epsilon$ in~\Cref{eq:eps-balanced-or-exp-lemma} to $\epsilon :=  (c k \log |T|)^{-1}$ for a large enough constant $c > 1$.
    Since $k = \tilde{O}(1)$, the expansion $\phi$ in~\Cref{eq:expansion} remains $\phi = 1/ \poly \log n$.
    The analysis yielding~\Cref{lem:fLR-Tree} remains true, except we improve the bound for $|S^*|$.
    First, we derive that $|S^*| \leq \frac{1}{3}((1+3\epsilon)^{d+1} - 1)|T|$ exactly as before.
    Observe that $(1+3\epsilon)^{d+1} \leq e^{3\epsilon(d+1)} \leq 1 + 6\epsilon (d+1)$ (the first inequality is $1+x \leq e^x$, and the second is $e^z \leq 1+2z$ for $z \in [0,1]$).
    So we get $|S^*| \leq 2\epsilon (d+1) |T| \leq \frac{1}{2k}|T|$ (the last inequality is since $d = O(\log |T|)$, so we can set the constant $c$ large enough to make $2\epsilon (d+1) \leq \frac{1}{2k}$).

    \item Finally, we modify the construction of the $(f,T,S^*)$-cut detector $\D$ in~\Cref{sec:construction-and-query-of-D}, by changing the U-sets in the US cut detectors to be $\emptyset$ instead of $U_{L_q} \cup U_{R_q}$.
    This eliminates the $2^{2f}$ factors in space and preprocessing time, as they came from using the original U-sets.
    This change does not hinder any use of US cut detectors, since now $U_{L_q} \cup U_{R_q} \subseteq T$ but $F \cap T = \emptyset$, so whenever we use a US cut detector the query is contained in the S-set.
\end{itemize}

\section{Cut Respecting Terminal Expander Decomposition}\label{sec:cut-respecting-family}

In this section, we give a clean graph-theoretical characterization that follows as a corollary from the construction of $f$-LR decomposition trees in~\Cref{sec:left-right-tree}.
Informally, it says that any $n$-vertex graph $G$ can be ``decomposed'' into a collection of terminal expanders (or graphs with very few terminals), such that the terminal cuts in them jointly capture all vertex cuts in $G$ of size $f$ or less.
Further, the total size of all these graphs is rather small, roughly proportional to $fn$.
We call this collection an \emph{$f$-cut respecting terminal-expander decomposition} ($f$-cut respecting TED for short), formally defined, as follows:

\begin{definition}[$f$-Cut Respecting $\phi$-TED]\label{def:f-cut-respecting-TED}
    Let $G = (V,E)$ be an $n$-vertex graph, let $f \geq 1$ be an integer, and let $0 < \phi \leq 1$.
    An \emph{$f$-cut respecting $\phi$-terminal expander decomposition}  for $G$ is a collection $\mathcal{G}=\{(G_1,T_1),\ldots, (G_\ell,T_\ell)\}$ that satisfies the following:
    \begin{itemize}
        \item \emph{(Terminal Expanders or Few Terminals)}
        For every $(G_i,T_i)\in \mathcal{G}$, it holds that $T_i \subseteq V(G_i) \subseteq V$, and $G_i$ is either a $(T_i,\phi)$-terminal expander, or $|T_i|=\tilde{O}(f)$.
    
        \item \emph{(Soundness)}
        For every $G_i$, every vertex cut $F$ in $G_i$ with $|F| \leq f$ is also a cut in $G$.

        \item \emph{(Completeness)}
        For every vertex cut $F$ in $G$ with $|F| \leq f$ or less, there exists some $G_i$ such that $F \cap V(G_i)$ separates the terminals $T_i$ in $G_i$.
                
        \item \emph{(Lightness)}
        $\sum_{i} |V(G_i)|=\widetilde{O}(n)$ and $\sum_{i}|E(G_i)|=\widetilde{O}(fn)$.
    \end{itemize}
\end{definition}

The following theorem essentially follows from the $f$-LR tree construction:
\begin{theorem}
    For any $f \geq 1$, every $n$-vertex graph $G$
    admits an $f$-cut respecting $\phi$-TED with $\phi  = 1 /\poly \log n$ that can be computed deterministically in polynomial time,
    or with smaller $\phi = 1/n^{o(1)}$ but improved running time of $O(m) + fn^{1+o(1)}$.
\end{theorem}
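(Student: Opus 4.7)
The plan is to mimic the iteration scheme used to derive~\Cref{thm:main-cut-theorem} from~\Cref{thm:cut-detectors}, but now collecting all the \emph{leaves} of the resulting $f$-LR trees instead of wrapping them in cut detectors. Concretely, I would set $T_0 := V$ and, for $i \geq 0$, build the $f$-LR tree $\T_i$ for the instance $(G, T_i)$ using the construction of~\Cref{sec:left-right-tree}, letting $T_{i+1} := S^*_i$ be the new terminal set it produces. Since $|T_{i+1}| \leq |T_i|/2$ by~\Cref{lem:fLR-Tree}(\ref{prop:terminal-reduction}), we reach $T_\ell = \emptyset$ after $\ell = O(\log n)$ iterations. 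The TED collection is then defined as $\mathcal{G} := \bigcup_{i=0}^{\ell-1} \mathcal{L}_i$, where $\mathcal{L}_i$ is the multiset of graph/terminal pairs $(G_q, T_q)$ at the leaves of $\T_i$.

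I would then verify the four properties of~\Cref{def:f-cut-respecting-TED} in turn. The terminal-expander-or-few-terminals property is a direct restatement of~\Cref{lem:fLR-Tree}(\ref{prop:expander-or-small-leaves}). Soundness follows by inductively applying~\Cref{lem:cut-in-left-right-is-cut-in-G} along the root-to-leaf path in each $\T_i$: any cut of size $\leq f$ in a leaf graph $G_q$ is also a cut in the parent graph, and hence, by iterating along the path, a cut in $G$. Lightness uses~\Cref{lem:fLR-Tree}(\ref{prop:linear-space}) summed over iterations: each $\T_i$ has $d = O(\log n)$ depth and contributes $O(nd) = \tilde{O}(n)$ vertices and $O(fnd^2) = \tilde{O}(fn)$ edges across its nodes (in particular across its leaves), and there are $O(\log n)$ iterations.

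The substantive content is in completeness. Given a cut $F$ in $G$ with $|F| \leq f$, let $i$ be the largest index such that $F$ separates $T_i$ (well-defined since $F$ separates $T_0 = V$ but not $T_\ell = \emptyset$). Then $F$ separates $T_i$ but does not separate $T_{i+1} = S^*_i$ in $G$. Starting at the root of $\T_i$, I would descend to a leaf while maintaining the invariant that at the current node $q$, the trimmed query $F_q := F \cap V(G_q)$ separates $T_q$ but does not separate $S^*_i \cap V(G_q)$ in $G_q$; the latter half of the invariant propagates via the contrapositive of~\Cref{lem:cut-in-left-right-is-cut-in-G}. At an internal node $q$,~\Cref{lem:cut-in-G-to-left-right-graphs} produces a child (left or right) in which $F$ still separates the appropriate terminals. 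The only subtle case is the right branch in the expander regime, where the right child $q_r$ is restricted to the terminals $T_q \cap R_q$; here~\Cref{lem:stepchild-lemma} was designed precisely to handle the contingency that $F_r$ fails to separate $T_q \cap R_q$, guaranteeing instead that $F_r$ separates the terminal set $U_{L_q} \cup U_{R_q} \cup U_{S_q}$ of the stepchild leaf $q_s$. Thus, we can always descend one level, and eventually reach a leaf $(G_j, T_j) \in \mathcal{L}_i$ where $F \cap V(G_j)$ separates $T_j$, as required.

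For running time, each iteration takes polynomial time (dominated by invocations of~\Cref{lem:balanced-or-expander}), yielding the polynomial bound with $\phi = 1/\poly\log n$. The faster $O(m) + fn^{1+o(1)}$ variant with $\phi = 1/n^{o(1)}$ is obtained exactly as in~\Cref{sec:construction-and-query-of-D}: first sparsify $G$ to $O(fn)$ edges via~\Cref{thm:conn-certificate}, then set $r := \Theta(\log\log n)$ in~\Cref{eq:eps-balanced-or-exp-lemma} so that the running time of each call to~\Cref{lem:balanced-or-expander} becomes $(fn)^{1+o(1)}$ and the expansion becomes $1/n^{o(1)}$; summing over the $O(\log n)$ iterations absorbs into the $n^{o(1)}$ factor. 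The hardest step to execute carefully is the completeness invariant under the stepchild branch, but~\Cref{lem:stepchild-lemma} essentially hands us the needed conclusion once the preconditions $F_r$ not separating $S_q$ (from the invariant) and not separating $T_q \cap R_q$ (from being in the ``else'' sub-case) are checked.
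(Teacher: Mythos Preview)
Your proposal is correct and follows essentially the same approach as the paper's own proof sketch: iterate the $f$-LR tree construction with halving terminal sets $T_0 = V, T_1, \ldots, T_\ell = \emptyset$, collect the leaf instances from all $O(\log n)$ trees, and verify the four TED properties via \Cref{lem:fLR-Tree}, \Cref{lem:cut-in-G-to-left-right-graphs}, \Cref{lem:cut-in-left-right-is-cut-in-G}, and \Cref{lem:stepchild-lemma}. In fact, your treatment of the completeness invariant (including the stepchild branch) is more explicit than the paper's sketch, which simply defers to the proof of \Cref{lem:corretness-induction}.
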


\begin{proof}[Proof sketch.]
    As detailed in~\Cref{sec:left-right-tree}, the computation of the $f$-LR tree for $G$ in fact outputs a pair $(\T, S^*)$ where $\T$ is the tree itself, and $S^*$ is the union of all separators $S_q$ from the vertex cuts in the associated graphs of each node $q$ in $\T$.

    Consider the collection of pairs $(G_q, T_q)$ associated with the leaves of $\T$.
    This collection satisfies the first and last conditions in~\Cref{def:f-cut-respecting-TED}, according to~\Cref{lem:fLR-Tree}, properties~\ref{prop:expander-or-small-leaves} and~\ref{prop:linear-space}.
    (Recall that the expansion $\phi$ is determined by setting the parameter $r$, see~\Cref{eq:eps-balanced-or-exp-lemma,eq:expansion}.
    Letting $r= O(1)$ gives $\phi = 1/\poly \log n$, and $r=\Theta(\log \log n)$ gives $\phi = 1/n^{o(1)}$, which determines the construction time of $(\T, S^*)$ as explained in the preprocessing time analysis in~\Cref{sec:construction-and-query-of-D}.)
    
    Further,
    by inductively using the soundness and completeness properties of $f$-left and $f$-right graphs (\Cref{lem:cut-in-G-to-left-right-graphs} and~\Cref{lem:cut-in-left-right-is-cut-in-G}), one can show that the collection of pairs $(G_q, T_q)$ associated with the leaves of $\T$ fully guarantee the soundness condition of~\Cref{def:f-cut-respecting-TED}, and partially guarantee the completeness condition, i.e., guarantee it only when $F$ separates $T$ but not $S^*$ in $G$.
    (The proof is very similar in essence to the proof of~\Cref{lem:corretness-induction}, only easier as one does not need to do deal with ``trim'' cases.)
    
    This last issue is resolved in a similar manner to the construction of cut oracles from terminal cut detectors 
    explained in~\Cref{sec:overview-oracles}.
    We initialize $T_1 = V$, and continue in iterations $i=1,2,\dots$.
    In iteration $i$, we halt if $T_i \neq \emptyset$, and otherwise apply the $f$-LR tree construction for $(G,T_i)$ which yields a pair $(\T_i, S^*_i)$ with $|S^*_i| \leq \frac{1}{2} |T_i|$ (by~\Cref{lem:fLR-Tree}, Property~\ref{prop:terminal-reduction}), and set $T_{i+1} := S^*$.
    Thus, this process can only continue for $O(\log n)$ iterations.
    We take the collection of all pairs $(G_q, T_q)$ associated in the leaves of \emph{all} of the $O(\log n)$ constructed $f$-LR trees.
    By essentially the same arguments as in~\Cref{sec:overview-oracles} (in the proof of~\Cref{thm:cut-detectors}),
    this collection \emph{fully} gauntness the completeness property of~\Cref{def:f-cut-respecting-TED}.
    The other properties still hold; the lightness and running time only incur an $O(\log n)$ factor.
\end{proof}

\section{Space Lower Bound for Vertex Cut Oracles}\label{sec:spaceLB}

In this section, we show the space lower bound for $f$-vertex cut oracles on $n$-vertex graphs of~\Cref{thm:spaceLB}.
We start with the more interesting case:

\paragraph{The case $2 \leq f \leq n/4$.}
Let $U = \{u_1, \dots, u_{n/2}\}$ and $W = \{w_1, \dots, w_{n/2}\}$.
Consider the following process for generating the edge set $E$ of an $n$-vertex graph with vertex set $V = U \cup W$:
\begin{enumerate}
    \item Add a clique on $W$.
    \item Choose $n/2$ \emph{distinct} subsets $F_1, \dots, F_{n/2} \subseteq W$, each of size $|F_i| = f$.
    Denote the collection of chosen subsets by $\cF = \{F_1, \dots, F_{n/2}\}$
    \item For each $i = 1, \dots, n/2$, add edges between $u_i$ and every vertex in $F_i$.
\end{enumerate}
Note that any graph $G$ generated by this process is $f$-connected: if one deletes less than $f$ vertices from $G$, then every surviving vertex from $U$ must have some surviving neighbor in $W$, and the vertices of $W$ are connected by a clique.

Let $G^0$ and $G^1$ be two graphs generated by the above process, whose corresponding collections $\cF^0 = \{F^0_1, \dots, F^0_{n/2}\}$ and $\cF^1 = \{F^1_1, \dots, F^1_{n/2}\}$ chosen in step 2 are different.
Let $\cO^0$ and $\cO^1$ be $f$-vertex cut oracles built for $G^0$ and $G^1$ respectively.
Because $\cF^0 \neq \cF^1$, there must be some $F \subseteq W$ of size $|F| = f$ which belongs to exactly one of the collections $\cF^0, \cF^1$.
Assume without loss of generality that $F \in \cF^0$ and $F \notin \cF^1$.
We show that on query $F$, the oracle $\cO^0$ returns \emph{``cut''}, while oracle $\cO^1$ returns \emph{``not a cut''}, so these two oracles must be different.
\begin{itemize}
    \item Oracle $\cO^0$: As $F \in \cF^0$, we have $F = F^0_i$ for some $1 \leq i \leq n/2$.
    So in $G^0$, the neighbor set of $u_i$ is $F$, and thus $u_i$ is an isolated vertex in $G^0-F$, meaning $F$ is a cut in $G^0$, so the oracle $\cO^0$ must return \emph{``cut''}.

    \item Oracle $\cO^1$: 
    Consider some $u_i \in U$.
    As $F \notin \cF^1$, we have $F \neq F^1_i$.
    Since $|F| = |F^1_i| = f$, there must be some $w \in F^1_i - F$.
    This $w \in W$ is a neighbor of $u_i$ in $G^1-F$.
    So, we have shown that in $G^1 - F$, every vertex in $U$ has some neighbor in $W$.
    As $G^1[W]$ is a clique, this implies that $G^1 - F$ is connected, so $F$ is not a cut in $G^1$ and the oracle $\cO^1$ must return \emph{``not a cut''}.
\end{itemize}

We are now ready to derive the space lower bound.
Let $\sC$ be the set consisting of all possible choices of $\cF$ in step 2, that is
\[
\sC =
\left\{ \{F_1, \dots, F_{n/2}\} \mid \text{$F_1, \dots, F_{n/2}$ are distinct subsets of $W$, each of size $f$} \right\}
\]
As we just showed, each $\cF \in \sC$ gives rise to a different $f$-vertex cut oracle on some $n$-vertex graph.
Thus, in the worst case, the bit representation of such an oracle must consume $\Omega(\log |\sC| )$ bits.
So, all that remains is to bound $|\sC|$ from below.
We have
\[
|\sC| = \binom{\binom{|W|}{f}}{n/2},
\]
as there are $\binom{|W|}{f}$ different subsets of $W$ of size $f$, and each element in $\sC$ is formed by choosing $n/2$ of those subsets.
Now,
\[
\binom{|W|}{f} \geq \left(\frac{|W|}{f} \right)^f = 2^{f (\log |W| - \log(f))} = 2^{f \log(\frac{n}{2f})}
\]
(the inequality is $\binom{k}{r} \geq \left(\frac{k}{r}\right)^r$ for every $1 \leq r \leq k$).
So, we get
\[
    |\sC| = \binom{\binom{|W|}{f}}{n/2} 
    \geq \left(\frac{\binom{|W|}{f}}{n/2} \right)^{n/2}
    \geq \left(\frac{2^{f \log(\frac{n}{2f})}}{n/2} \right)^{n/2} \\
    = 2^{\frac{n}{2} \left(f \log(\frac{n}{2f}) - \log(\frac{n}{2}) \right)}
\]
(the first inequality is $\binom{k}{r} \geq \left(\frac{k}{r}\right)^r$ once again), and taking logarithms yields
\[
    \log |\sC| \geq \frac{fn}{2} \left(\log\big(\frac{n}{2f}\big) - \frac{1}{f} \log\big(\frac{n}{2}\big) \right).
\]
Recall that $2 \leq f \leq n/4$. We split this range into two intervals and analyze them separately:
\begin{itemize}
    \item If $\big(\frac{n}{2}\big)^{1/10} \leq f \leq \frac{n}{4}$, then $\frac{1}{f} \log\big(\frac{n}{2}\big) = o(1)$ while $\log\big(\frac{n}{2f}\big) \geq 1$, so we get $\log |\sC| = \Omega(fn \log(n/f))$.

    \item If $2 \leq f \leq \big(\frac{n}{2}\big)^{1/10}$, then we have 
    $
    \log\big(\frac{n}{2f}\big) - \frac{1}{f} \log\big(\frac{n}{2}\big)
    \geq
    \left(\frac{9}{10} - \frac{1}{f}\right) \log\big(\frac{n}{2}\big) \geq 0.4 \log\big(\frac{n}{2}\big)
    $,
    so again we get $\log |\sC| = \Omega(fn \log(n/f))$.
\end{itemize}
So anyway, we get a space lower bound of
\[
\Omega( \log |\sC|) = \Omega \Big( fn \cdot \log\big(\frac{n}{f}\big) \Big).
\]

\paragraph{The Case $f=1$.}
This case is much easier.
Consider a path $P = (v_1, \dots, v_n)$.
Suppose we generate a new graph from $P$ by making $\Omega(n)$ binary choices: for every integer $1 \leq k < n/2$ we either add the edge $(v_{2k-1}, v_{2k+1})$ or don't.
Let $\mathcal{G}$ be the family of $2^{\Omega(n)}$ possible graphs generated this way.
Given a $1$-vertex cut oracle constructed for some $G \in \mathcal{G}$, one can recover $G$ itself by querying the oracle with every $v_{2k}$, $1 \leq k < n/2$: the edge $(v_{2k-1},v_{2k+1})$ belongs to $G$ iff the answer for query $v_{2k}$ is \emph{``not a cut''}.
Thus, every graph in the family has a different oracle, so the worst-case space is $\Omega(\log |\mathcal{G}|) = \Omega(n)$.

\section*{Acknowledgments}

We thank Yaowei Long, Seth Pettie and Thatchaphol Saranurak for useful discussions, and specifically for pointing us to~\cite{HenzingerL0W17} and its connection to incremental vertex-sensitivity connectivity oracles (discussed in~\Cref{sec:incremental-sensitivity}).
We thank Elad Tzalik and Moni Naor for helpful discussions on the proof of~\Cref{thm:spaceLB}, and Amir Abboud for discussions regarding the online version of OVH.

\bibliographystyle{alphaurl}
\bibliography{references}

\begin{appendix}
    
\section{Conditional Lower Bounds for Vertex Cut Oracles}\label{sec:condLB}

In this section we discuss conditional lower bounds on $f$-vertex cut oracles; it is largely based on slight adaptations of results and proofs in~\cite{LongS22,LongS22full}, but we give a stand-alone presentation for the sake of organization and completeness.

\smallskip
\noindent \textbf{(i) Logarithmic Number of Faults. } The first lower bound shows that even when $f$ is logarithmic ($f \geq c \log n$ for some large enough constant $c$), query time polynomially better than $O(n)$ would refute the \emph{Strong Exponential Time Hypothesis (SETH)}:

\begin{restatable}[Strong Exponential Time Hypothesis (SETH)]{conjecture}{SETH}\label{con:seth}
        For every $\varepsilon > 0$ there is some $k = k(\varepsilon) \geq 3$ such that $k$-SAT with $N$ variables cannot be solved in $O(2^{(1-\varepsilon)N})$ time.
\end{restatable}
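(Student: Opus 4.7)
The statement above is the Strong Exponential Time Hypothesis (SETH), one of the central axiomatic conjectures of fine-grained complexity; it is not a theorem admitting a proof. SETH has been studied intensively since its formulation by Impagliazzo and Paturi, and its resolution in either direction would be a foundational breakthrough well beyond the scope of this paper (among other things, it would imply a quantitative separation of $k$-SAT algorithms from the brute-force barrier uniformly in $k$, which no existing technique is known to achieve). Consequently, my plan is not to attempt a proof, but to adopt SETH as the black-box assumption that seeds the conditional lower bounds developed in~\Cref{sec:condLB}.

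If pressed to justify adopting SETH rather than prove it, the natural argument is indirect and sociological. First, I would invoke the long history of unsuccessful algorithmic attempts to beat $2^{(1-\varepsilon)N}$ for $k$-SAT uniformly in $k$: standard techniques such as resolution, DPLL-style branching, local search, the polynomial method, and fast matrix multiplication all yield savings that degrade as the clause width grows, and none gives uniform savings across all $k$. Second, I would appeal to the large body of tight hardness results that would collapse upon a refutation of SETH, including matching lower bounds for edit distance, orthogonal vectors, dynamic connectivity updates, and indeed the vertex-cut data-structure bounds that motivate~\Cref{sec:condLB} itself; the coherence of this ecosystem of matching upper and lower bounds offers strong circumstantial support for the conjecture's plausibility.

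The ``main obstacle'' is therefore not a technical hurdle in any envisioned proof, but the fundamental fact that SETH is an open problem: there is no known method for separating $k$-SAT difficulty sharply from $2^N$ in a uniform fashion as $k \to \infty$, nor any candidate algorithm that threatens to refute it. The genuine mathematical content to follow will lie not in the conjecture itself but in the SETH-to-vertex-cut reductions developed in~\Cref{sec:condLB}, which translate this conjectured hardness into concrete query-time lower bounds for $f$-vertex cut oracles in the regime $f = \Omega(\log n)$; that is where I would direct proof-writing effort, rather than at SETH.
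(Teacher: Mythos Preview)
Your assessment is correct and matches the paper's treatment: SETH is stated as a conjecture (via the \texttt{conjecture} environment), not as a theorem, and the paper offers no proof whatsoever, using it purely as a black-box assumption for the conditional lower bounds in~\Cref{sec:condLB,sec:incremental-sensitivity}. There is nothing further to compare.
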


As many SETH-based lower bounds, the lower bound actually relies on the SETH-hardness of the \emph{Orthogonal Vectors} (OV) problem: given two sets of binary $d$-dimensional vectors $A, B \subseteq \{0,1\}^d$ with $|A| = |B| = n$, determine if there exists some $a \in A$ and $b \in B$ such that $\sum_{i=1}^d a_i \cdot b_i = 0$.
Williams~\cite{WILLIAMS2005357} showed that SETH implies the \emph{Orthogonal Vectors Hypothesis (OVH)}:

\begin{conjecture}[Orthogonal Vectors Hypothesis (OVH)]
    For every $\varepsilon > 0$ there is some $c = c(\varepsilon) > 0$ such that OV with dimension $d = c \log n$ cannot be solved in $O(n^{2-\varepsilon})$ time.
\end{conjecture}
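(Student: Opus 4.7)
The statement in question is not a theorem but a conjecture, namely the Orthogonal Vectors Hypothesis. There is no known proof of OVH: it is one of the central open hypotheses of fine-grained complexity, stated precisely because no one presently has a plan of attack for it. Any unconditional proof would constitute a major breakthrough, since it would establish a nearly quadratic lower bound for a concrete polynomial-time-solvable problem — something far beyond the reach of currently available lower-bound techniques (communication complexity, the polynomial method, algebraic or circuit-based arguments all fall short of this bar in the uniform model).

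The sentence immediately preceding the conjecture alludes to Williams' reduction \cite{WILLIAMS2005357} showing SETH $\Rightarrow$ OVH, and that implication does admit a proof (via a split-and-list argument on $k$-SAT, producing two lists of $2^{N/2}$ vectors of dimension $O(\log n)$). However, that implication is a strictly weaker, conditional statement than OVH itself: it merely transfers the unproven status of SETH onto OVH, and so cannot serve as a proof of the conjecture as displayed. Treating the implication as a ``proof'' of OVH would conflate a reduction with an unconditional lower bound.

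Consequently, my plan is the honest observation that there is nothing to prove here in the ordinary sense. The role of the conjecture in this paper is purely hypothetical: it is stated so that the subsequent conditional lower bounds for $f$-vertex cut oracles in the regime $f = n^{\alpha}$ can be predicated on a widely-believed assumption. The ``main obstacle'' is therefore not a technical gap in a proof strategy, but the fundamental open question of whether any super-linear lower bound can be proved for a problem in \textsf{P}; no plausible direct attack is available, and the conjecture is used here in the same spirit as SETH, 3SUM-hardness, or APSP-hardness elsewhere in fine-grained complexity.
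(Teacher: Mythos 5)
You are right: the paper offers no proof of this statement — it is a hardness assumption, introduced (with a citation to Williams' SETH $\Rightarrow$ OVH reduction) solely to support the conditional lower bounds in \Cref{sec:condLB}. Your observation that there is nothing to prove matches the paper's treatment exactly.
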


We now give the conditional lower bound:
\begin{theorem}[Slight Adaptation of~\protect{\cite[Theorem 8.9]{LongS22}}]\label{thm:OVimplieshardness}
    Assuming OVH, for every $\varepsilon > 0$ there is $c = c(\epsilon) > 0$ such that, for $f = c \log n$, there is no $f$-vertex cut oracle for $n$-vertex graphs with preprocessing time $O(n^{2-\varepsilon})$ and query time $O(n^{1-\varepsilon})$.
\end{theorem}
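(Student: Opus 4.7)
The plan is to reduce Orthogonal Vectors (OV) with dimension $d = c \log N$ to the $f$-vertex cut oracle problem on an $n$-vertex graph with $f = c \log n$, where $c = c_{\text{OVH}}(\varepsilon) > 0$ is the constant supplied by OVH for the same $\varepsilon$. Specifically, given sets $A, B \subseteq \{0,1\}^d$ with $|A| = |B| = N$, the reduction constructs one graph $G$ on $n = N + d + 1$ vertices, feeds it to the hypothetical oracle once, and then issues $|B| = N$ cut queries, one per $b \in B$, to detect an orthogonal pair.

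The construction is as follows. The vertex set is $V := \{v_a : a \in A\} \cup \{c_i : i \in [d]\} \cup \{z\}$, where $z$ is a dedicated ``sink''. The edges are: $v_a c_i$ whenever $a_i = 1$, and $c_i z$ for every $i \in [d]$. We may WLOG assume no all-zero vector appears in $A \cup B$, since such a vector yields a trivial orthogonal pair with any other, resolvable in $O(Nd)$ time. Then every $v_a$ has at least one neighbor in $C := \{c_1, \dots, c_d\}$ and $G$ is connected. Feed $G$ to the hypothetical $f$-vertex cut oracle, which preprocesses in $O(n^{2-\varepsilon})$ time.

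For each $b \in B$, query the oracle with $F_b := \{c_i : b_i = 0\} \subseteq C$. Since $|F_b| \leq d = c \log N \leq c \log n = f$ (using $n > N$), this is a valid query. Examining $G - F_b$: the surviving $C$-neighbors of $v_a$ are exactly $\{c_i : a_i = 1 \text{ and } b_i = 1\}$, so $v_a$ becomes isolated in $G - F_b$ iff $a \cdot b = 0$. Conversely, every $c_i \in C - F_b$ is still adjacent to $z$, so as long as every $v_a$ retains at least one $C$-neighbor, the graph $G - F_b$ is connected through $z$. Hence $F_b$ is a cut in $G$ iff there exists $a \in A$ orthogonal to $b$. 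Running $N$ queries at $O(n^{1-\varepsilon})$ each, plus the $O(n^{2-\varepsilon})$ preprocessing, solves OV in total time $O(N^{2-\varepsilon})$ (since $n = \Theta(N)$), contradicting OVH and proving the theorem with $c(\varepsilon) := c_{\text{OVH}}(\varepsilon)$.

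The subtle ingredient deserving care is the role of the sink vertex $z$: without it, deleting $F_b$ would leave the remaining coordinate vertices with no internal edges and could shatter them into multiple components, producing a spurious ``cut'' answer unrelated to the OV instance. The sink ``glues'' $C - F_b$ together so that the only combinatorial mechanism for $G - F_b$ being disconnected is the isolation of some $v_a$, which is exactly the orthogonality condition. Once this equivalence is established, the remaining verification (edge count $|E(G)| = O(Nd)$, the $\log n \geq \log N$ bookkeeping for matching $f$ to $d$, and the handling of all-zero vectors) is routine.
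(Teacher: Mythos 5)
Your proposal is correct and follows essentially the same reduction as the paper: the identical gadget (one vertex per vector of $A$, one vertex per coordinate, a sink $z$ joined to all coordinate vertices, edges $v_a c_i$ for $a_i=1$) with queries $F_b=\{c_i : b_i=0\}$, and the same equivalence between $G-F_b$ being disconnected and the existence of an $a\in A$ orthogonal to $b$. Your extra WLOG on all-zero vectors and the explicit $n$ vs.\ $N$ bookkeeping are harmless refinements (the equivalence in fact holds even with all-zero vectors, since an all-zero $a$ isolates $v_a$ in $G$ and is orthogonal to every $b$), so the argument matches the paper's proof.
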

\begin{proof}
    Fix $\epsilon > 0$, and let $c = c(\varepsilon) > 0$ be the corresponding constant guaranteed by OVH.
    Suppose towards contradiction that there is an $f$-vertex cut oracle for $n$-vertex graphs with preprocessing time $O(n^{2-\varepsilon})$ and query time $O(n^{1-\varepsilon})$.
    We will use this $f$-vertex cut oracle to solve any OV instance $A, B \subseteq \{0,1\}^f$ with $|A| = |B| = n$.
    
    We construct a graph $G = (V,E)$ with $n + f + 1 = O(n)$ vertices, which is defined only by $A$:
    \begin{align*}
        V &= \{x_a \mid a \in A\} \cup \{y_i \mid 1 \leq i \leq f\} \cup \{z\} \\
        E &= \big\{ (x_a, y_i) \mid \text{$a \in A$, $1 \leq i \leq f$ and $a_i = 1$} \big\} \cup \{(y_i, z) \mid 1 \leq i \leq f\}.
    \end{align*}

    For every $b \in B$, we define $F_b = \{y_i \mid b_i = 0\}$.
        We claim that there is
        some $a \in A$ orthogonal to $b$ iff $G-F_b$ is disconnected.
        If $a \in A$ is orthogonal to $b$, then $b_i = 0$ whenever $a_i =1$, meaning that $y_i \in F_b$ whenever $y_i$ is a neighbor of $x_a$, so $x_a$ is isolated in $G-F_b$.
        Conversely, if $G-F_b$ is disconnected, then some $x_a$ must be disconnected from $z$, so each of its neighbors $y_i$ must belong to $F_b$. Thus, $a_i = 1$ could only happen when $b_i = 0$, so $a$ and $b$ are orthogonal.

    In light of the above claim, we can solve the OV instance by constructing an $f$-vertex cut oracle for $G$, and querying it with $F_b$ for every $b \in B$, which takes $O(n^{2-\varepsilon})$ time for oracle construction, and $n \cdot O(n^{1-\varepsilon})$ time for the queries, hence $O(n^{2-\varepsilon})$ time in total --- contradiction to OVH.
    (Note that constructing $G$ itself takes only $O(fn) = O(n \log n)$ time.)
\end{proof}

\smallskip
\noindent \textbf{(ii) Polynomial Number of Faults.} Next, we consider the regime where $f = \Theta(n^{\alpha})$ for some absolute constant $\alpha \in (0,1)$.
For this regime, we state a plausible \emph{online} version of OVH, which can be seen as the natural ``OV analog'' of the popular \emph{OMv conjecture}~\cite{HenzingerKNS15}.
Assuming this online OVH version, in this regime one cannot get query time polynomially smaller than $O(fn)$.

Formally, consider the following \emph{Online Orthogonal Vectors (Online OV)} problem:
Initially, one is given a set $A$ of size $n$, consisting of binary $d$-dimensional vectors, and can preprocess them.
Then, $n$ queries $b^{(1)}, \dots, b^{(n)} \in \{0,1\}^d$ arrive one by one, and for each $b^{(j)}$ it is required to determine if there is some $a \in A$ such that $\sum_{i=1}^d a_i \cdot b^{(j)}_i = 0$ (before the next query arrives).

\begin{conjecture}[Online OVH~\cite{AbboudPersonal}]\label{conj:online-ov}
    Let $\alpha > 0$ be a constant.
    For every constant $\varepsilon > 0$, there is no algorithm that solves Online OV with dimension $d = \Theta(n^{\alpha})$ with $\poly(n)$ preprocessing time and amortized query time $O(n^{1+\alpha - \varepsilon})$.
\end{conjecture}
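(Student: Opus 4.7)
The statement labeled \Cref{conj:online-ov} is a \emph{conjecture} rather than a theorem, and is explicitly flagged in the text as a plausibility assumption (analogous in spirit to the OMv conjecture of Henzinger et al.) that is posited, not proved. Accordingly, no unconditional proof is possible --- any ``proof'' can only be a reduction from a stronger hardness assumption. My plan is therefore to outline what a principled justification of the conjecture would have to look like and to identify where the true obstacle lies.

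The first natural attempt is a trivial batch-to-online argument: given an Online OV algorithm with preprocessing time $P(n)$ and amortized query time $Q(n)$, one can answer the $n$ queries in time $P(n) + n \cdot Q(n)$, so if $Q(n) = O(n^{1+\alpha-\varepsilon})$ and $P(n) = \mathrm{poly}(n)$, then a batch of $n$ OV queries (with $|A|=|B|=n$ and $d = \Theta(n^\alpha)$) would be solved in $O(n^{2+\alpha-\varepsilon}) + \mathrm{poly}(n)$ time. One then hopes to contradict an offline hardness assumption at dimension $d = \Theta(n^\alpha)$. The plan is to import a ``moderate-dimension OV'' hypothesis --- namely, that offline OV with $|A|=|B|=n$ and $d = n^\alpha$ requires $n^{2+\alpha - o(1)}$ time --- which in turn can be based on SETH via a padding reduction: start with SETH-hard OV at $d = c\log N$ on $N$ vectors, then set $N = n$ and ``lift'' the dimension to $\Theta(n^\alpha)$ by padding each vector with $n^\alpha - c\log n$ zeros without changing orthogonality. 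Combined with the batch reduction above, any violation of the online conjecture would yield an $O(n^{2+\alpha - \varepsilon})$ batch OV algorithm, contradicting the lifted offline bound, and hence SETH.

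The main obstacle, and the reason the paper only conjectures rather than proves the statement, is that the padding step is not as clean as it sounds. SETH gives $N^{2-o(1)}$ for OV at dimension $c\log N$; a direct padding of the dimension to $N^\alpha$ preserves the $N^{2-o(1)}$ lower bound but does \emph{not} produce the stronger $N^{2+\alpha - o(1)}$ bound that the batch argument needs. To get the stronger bound, one would need to increase the number of vectors as well (e.g., replace each vector by many ``shifts''), but this tends to break orthogonality structure, and it is precisely the absence of a known reduction establishing $n^{2+\alpha-o(1)}$ hardness for offline OV at polynomial dimension that makes the online bound unprovable from SETH alone. The honest content of any ``proof proposal'' is therefore: (i) record the batch-to-online reduction above, and (ii) isolate the missing ingredient as a fine-grained hypothesis on offline OV at polynomial dimension --- at which point \Cref{conj:online-ov} becomes a direct corollary, but has simply been restated rather than established. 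This structural gap is exactly what justifies presenting the statement as a conjecture in the first place, and is why the paper appeals to \cite{AbboudPersonal} for its plausibility rather than a formal derivation.
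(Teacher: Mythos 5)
You are right that \Cref{conj:online-ov} is a conjecture, not a theorem: the paper offers no proof of it, only a plausibility appeal to~\cite{AbboudPersonal} and the analogy with the OMv conjecture. So your framing --- that the honest content of any ``proof'' is at best a reduction from some other hardness assumption --- matches the paper's stance, and to that extent there is nothing to compare.

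That said, your proposed route for justifying the conjecture contains a genuine error worth flagging. You suggest that the missing ingredient is an \emph{offline} hypothesis asserting that batch OV with $|A|=|B|=n$ and dimension $d=\Theta(n^{\alpha})$ requires $n^{2+\alpha-o(1)}$ time, from which the online bound would follow by the trivial batch-to-online reduction. But that offline hypothesis is not merely hard to derive from SETH --- it is (conjecturally) \emph{false}: batch OV at polynomial dimension is an instance of Boolean rectangular matrix multiplication, and fast (rectangular) matrix multiplication solves it in time polynomially below $n^{2+\alpha}$. The paper says exactly this in the surrounding text: ``in the standard offline setting, the only known way to break this bound is via fast matrix multiplication, which (as conjectured by OMv) does not apply to the online setting.'' This is precisely why the conjecture must be stated in the \emph{online} model and cannot be reduced to any offline statement: the entire content of the conjecture is the belief that matrix-multiplication-style batching is unavailable when queries arrive one at a time, in direct analogy with OMv versus offline $Mv$. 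So your item (ii) does not ``restate'' the conjecture as an offline hypothesis --- it replaces it with a statement that is already refuted, and the batch-to-online reduction in item (i) therefore cannot be the skeleton of any justification. The correct analogy to keep in mind is that OMv is plausible even though offline matrix--vector products (i.e., matrix multiplication) are fast; Online OVH plays the same role for OV.
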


We note that in the standard offline setting, the only known way to break this bound is via fast matrix multiplication, which (as conjectured by OMv) does not apply to the online setting~\cite{AbboudPersonal}.
An identical construction as in the proof of~\Cref{thm:OVimplieshardness} now yields:

\begin{theorem}
    Assuming~\Cref{conj:online-ov}, for every constant $\alpha \in (0,1)$ and constant $\varepsilon > 0$, letting $f = \Theta(n^\alpha)$, there is no $f$-vertex cut oracles for $n$-vertex graphs where with $\poly(n)$ preprocessing time and $O((fn)^{1-\epsilon})$ query time.
\end{theorem}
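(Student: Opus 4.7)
The plan is a direct translation of the SETH-based reduction from~\Cref{thm:OVimplieshardness} to the online setting. Given an Online OV instance with set $A \subseteq \{0,1\}^d$ of size $|A|=n$ and dimension $d = \Theta(n^\alpha)$, I would, during the preprocessing phase, construct the graph $G$ on $N = n + d + 1 = \Theta(n)$ vertices exactly as in the proof of~\Cref{thm:OVimplieshardness}: one vertex $x_a$ per $a \in A$, auxiliary vertices $y_1,\ldots,y_d$, and a sink $z$, with edges $(x_a, y_i)$ whenever $a_i = 1$ and edges $(y_i, z)$ for every $i$. Crucially, this graph depends only on $A$, which is all that is available at preprocessing time in the online setting. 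I would then invoke the hypothesized $f$-vertex cut oracle on $G$ with $f = d = \Theta(N^\alpha)$; its preprocessing cost is $\poly(N) = \poly(n)$, which~\Cref{conj:online-ov} permits.

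To answer each online query $b^{(j)}$, I form the set $F_{b^{(j)}} = \{y_i \mid b^{(j)}_i = 0\}$ in $O(d)$ time and issue it as a single cut query to the oracle. The key correctness statement, established verbatim inside the proof of~\Cref{thm:OVimplieshardness}, is that $F_{b^{(j)}}$ is a cut in $G$ if and only if some $a \in A$ is orthogonal to $b^{(j)}$ (for the ``only if'' direction, disconnection forces some $x_a$ to be isolated from $z$, so all neighbors of $x_a$ lie in $F_{b^{(j)}}$, which translates to orthogonality). Hence a single oracle call suffices per Online OV query, costing $O((fN)^{1-\varepsilon})$ by assumption. Since $fN = \Theta(n^{1+\alpha})$, the amortized query time becomes $O\bigl(n^{(1+\alpha)(1-\varepsilon)}\bigr) = O\bigl(n^{1+\alpha - \varepsilon(1+\alpha)}\bigr)$, and setting $\varepsilon' := \varepsilon(1+\alpha) > 0$ directly contradicts~\Cref{conj:online-ov}.

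There is essentially no obstacle here beyond bookkeeping: the reduction, the correctness equivalence, and the graph construction are reused unchanged from~\Cref{thm:OVimplieshardness}. The only items worth explicitly checking are (i) that constructing $G$ itself takes $O(dn) = \poly(n)$ time and so fits inside the preprocessing budget, (ii) that $f = d = \Theta(N^\alpha)$ matches the parameter regime of the theorem after the $\Theta(1)$ blowup in vertex count (absorbing any constant into $\alpha$'s hidden constants), and (iii) that the polynomial improvement factor $(fN)^{-\varepsilon}$ in query time translates to a polynomial improvement in amortized Online OV query time, as the multiplicative exponent gap $\varepsilon(1+\alpha)$ is strictly positive for every constant $\varepsilon > 0$. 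With these observations in place, I would present the construction in a single short paragraph and delegate all correctness to~\Cref{thm:OVimplieshardness}.
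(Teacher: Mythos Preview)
Your proposal is correct and matches the paper's approach exactly: the paper itself does not write out a separate proof for this theorem but simply states that ``an identical construction as in the proof of~\Cref{thm:OVimplieshardness} now yields'' the result. Your write-up fills in precisely those details (the graph depends only on $A$, each online query becomes one cut query, and the exponent arithmetic gives the contradiction), so there is nothing to change.
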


\smallskip
\noindent \textbf{(iii) Linear Number of Faults.} Finally, in the regime $f = \Omega(n)$, one can actually show the same lower bound as above assuming the more standard OMv conjecture.
The construction is actually based on the \emph{OuMv conjecture}, which was shown to follow from the OMv conjecture in~\cite{HenzingerKNS15}, so we only define the former here.
In the OuMv problem, one is initially given an integer $n$ and an $n\times n$ Boolean matrix $M$. Then, $\poly(n)$ many queries arrive.
Each query consists of two vectors $u,v \in {0,1}^n$, and asks for $u^T M v$; the output must be produced before the next query is revealed.

\begin{conjecture}[OuMv Conjecture~\cite{HenzingerKNS15}]\label{con:omv}
    For any constant $\epsilon>0$, there is no algorithm that solving OuMv correctly with probability at least $2/3$ such that the preprocessing time is polynomial on $n$ and the amortized query time is $O(n^{2-\epsilon})$.
\end{conjecture}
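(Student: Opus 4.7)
The final statement is the OuMv Conjecture, which is a fine-grained complexity hypothesis rather than a theorem. There is no unconditional proof to propose, and none is intended by the paper — the conjecture is merely stated (with its original citation) so that a hardness reduction for vertex-cut oracles can be based on it. The only meaningful ``proof'' one can propose is the conditional derivation of OuMv from the (very closely related but slightly more basic) OMv Conjecture, which is precisely what the cited paper~\cite{HenzingerKNS15} provides. So my proposal is to reproduce the core of that reduction.

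The plan is: assume for contradiction that some algorithm $\mathcal{A}$ solves OuMv with $\poly(n)$ preprocessing and amortized query time $O(n^{2-\epsilon})$ at success probability $\geq 2/3$. First, I would boost $\mathcal{A}$'s success probability to $1 - 1/\poly(n)$ by standard $O(\log n)$-fold independent repetition with majority voting, so that answers to a $\poly(n)$-long query sequence are all correct with high probability. Then I would use $\mathcal{A}$ as a black box to solve OMv on the same matrix $M$: each OMv query $v^{(j)}$ must return the whole $n$-bit vector $Mv^{(j)}$, which I extract by issuing a structured family of OuMv queries $u^T M v^{(j)}$ where $u$ ranges over indicator vectors of dyadic intervals of coordinates, so that $O(\log n)$-depth binary search over the bits of $Mv^{(j)}$ pinpoints every $1$-entry.

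The main obstacle, and the whole reason the reduction is nontrivial, is the $n$-bit-vs-$1$-bit output mismatch between OMv and OuMv. The naive coordinate-by-coordinate simulation (one OuMv call per row, using $u=e_i$) costs $n$ OuMv calls per OMv query and produces total time $\Theta(n \cdot n^{2-\epsilon}) = \Theta(n^{3-\epsilon})$, which is \emph{worse} than the trivial $O(n^2)$ bound per OMv query, and hence proves nothing. The structured dyadic-probing scheme cuts this blow-up down to essentially the sparsity of $Mv^{(j)}$ times $\polylog n$ per nonzero, and must be combined with an averaging / amortization argument across the full query sequence to yield an $O(n^{2-\epsilon'})$ amortized bound for OMv and contradict the OMv Conjecture. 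Care is needed because the reduction must stay in the online model: each OMv response is revealed before the next $v^{(j+1)}$, so the OuMv calls cannot be batched offline, and the reduction must therefore maintain $\mathcal{A}$'s internal state across an adversarially chosen, adaptive query sequence. This adaptive, online bookkeeping is where I expect most of the technical effort to go.
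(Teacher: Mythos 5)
You are correct: this statement is a conjecture (a fine-grained hardness hypothesis), and the paper provides no proof of it --- it is merely stated, with a citation to \cite{HenzingerKNS15} for the fact that it follows from the OMv conjecture, and then used as the assumption in \Cref{thm:nlowerbound}. Your optional sketch of the OMv-to-OuMv reduction goes beyond anything the paper contains (and, as you partly acknowledge, the dyadic-probing scheme alone does not close the argument when $Mv^{(j)}$ is dense), but nothing in the paper depends on it, so there is nothing to check it against.
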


\begin{theorem}\label{thm:nlowerbound}
    Assuming OMv conjecture, for every $\epsilon>0$, letting $f = \Omega(n)$, there is no $f$-vertex cut oracle for $n$-vertex graphs with preprocessing time polynomial on $n$ and query time $O(n^{2-\epsilon})$.
\end{theorem}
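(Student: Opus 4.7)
The plan is to reduce from the OuMv problem (\Cref{con:omv}) via a graph construction analogous in spirit to the OV reduction of~\Cref{thm:OVimplieshardness}, now adapted to encode the full $n \times n$ matrix $M$ through a bipartite-like structure with two ``pole'' vertices.

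Given the preprocessing matrix $M \in \{0,1\}^{n \times n}$, I would build a graph $G$ on $N = 2n+2$ vertices as follows: introduce a vertex $x_i$ for each row $i \in [n]$, a vertex $y_j$ for each column $j \in [n]$, and two additional ``pole'' vertices $s, t$; add all edges $(s, x_i)$, all edges $(t, y_j)$, and an edge $(x_i, y_j)$ whenever $M_{ij} = 1$. For each subsequent OuMv query $(u,v) \in \{0,1\}^n \times \{0,1\}^n$, define the failure set
\[
F_{u,v} \;=\; \{x_i : u_i = 0\} \cup \{y_j : v_j = 0\},
\]
which has size at most $2n < N$. The central observation is that in $G - F_{u,v}$, every surviving $x_i$ (those with $u_i = 1$) retains its edge to $s$ and every surviving $y_j$ (those with $v_j = 1$) retains its edge to $t$, while the only surviving edges between the $X$-side and $Y$-side are pairs $(x_i, y_j)$ with $u_i = v_j = M_{ij} = 1$. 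Since there are no edges internal to $X$ or to $Y$, connectivity of $G - F_{u,v}$ reduces to connectivity between $s$ and $t$, which holds iff there exists some pair $(i,j)$ with $u_i = M_{ij} = v_j = 1$. Equivalently, $F_{u,v}$ is a vertex cut in $G$ if and only if $u^T M v = 0$.

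From here the reduction is routine: preprocess the hypothetical $f$-vertex cut oracle (with $f = \Omega(N)$ chosen large enough to accommodate $|F_{u,v}| \leq 2n$) for $G$ in $\poly(N) = \poly(n)$ time, and answer each OuMv query by spending $O(n)$ time to assemble $F_{u,v}$ and $O(N^{2-\epsilon}) = O(n^{2-\epsilon})$ time for the oracle call. Amortized over the $\poly(n)$ queries, this solves OuMv in $O(n^{2-\epsilon})$ time per query with $\poly(n)$ preprocessing, contradicting~\Cref{con:omv} (the oracle being deterministic trivially meets the $2/3$-success requirement). There is no real technical obstacle here; the only points requiring any care are checking that $G$ has no edges internal to $X$ or $Y$ (so that ``$G-F_{u,v}$ connected'' really coincides with ``$s$ and $t$ connected''), handling the trivial degenerate cases when $u$ or $v$ is the zero vector (where the corresponding pole becomes isolated in $G - F_{u,v}$, matching $u^T M v = 0$), and confirming that the linear dependency $f = \Omega(N)$ in the theorem statement absorbs the factor of $2$ in $|F_{u,v}| \leq 2n$.
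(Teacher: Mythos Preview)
The proposal is correct and follows essentially the same OuMv reduction as the paper: encode $M$ as a bipartite graph, define the failure set from the zero coordinates of $(u,v)$, and observe that $G-F_{u,v}$ is connected iff $u^T M v = 1$. The only difference is cosmetic---the paper makes each side internally connected by placing cliques on $\{a_i\}$ and $\{b_j\}$, whereas you attach pole vertices $s,t$ with star edges to the two sides; both devices serve the same purpose and yield the same bounds.
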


\begin{proof}
    The proof is based on~\cite[Theorem 8.6]{LongS22}. 
    Given the $n \times n$ input matrix $M$,
    we construct a graph $G=(V,E)$ with
    \begin{align*}
        V &= \{a_1, \dots, a_n, b_1, \dots b_n\} \\
        E &= \{(a_i,b_j)\in A\times B\mid M_{i,j}=1\}\cup \{(a_i,a_j)\mid 1 \leq i < j \leq n\}\cup\{(b_i,b_j)\mid 1 \leq i < j \leq n\}
    \end{align*}
    Given query vectors $u,v \in \{0,1\}^n$, we define $F=\{a_i \in A\mid u_i=0\} \cup \{b_j\in B\mid v_j=0\}$ and query the oracle with $F$.
    If the output is ``$F$ is a cut'', then we have $u^T Mv=0$; otherwise we have $u^T Mv=1$. 
    To see this, notice that $u^T Mv=1$ iff there is some $i$ and $j$ such that $M_{i,j} = u_i = v_j = 1$, which is equivalent to saying that there is an edge in $G$ between $\{a_i\in A\mid u_i=1\}$ and $\{b_j \in B\mid v_j=1\}$.
    Since $A$ and $B$ are cliques, this is equivalent to saying that the subgraph induced on $\{a_i\in A\mid u_i=1\}\cup \{b_j \in B\mid v_j=1\}$ is connected, but this subgraph is $G-F$.

    If the oracle has preprocessing time $\poly(n)$ and query time $O(n^{2-\epsilon})$, this algorithm clearly violates the OuMv conjecture.
\end{proof}

\section{Conditional Lower Bounds for Incremental Sensitivity Oracles}\label{sec:incremental-sensitivity}

This section discusses the ``incremental analog'' of $f$-vertex cut oracles.
Intuitively, one is first given a graph $G$ with some ``switched off'' vertices, and should preprocess it into an oracle that upon a query of $f$ vertices that are asked to be ``turned on'' reports if the resulting turned-on graph is connected.
Formally, we define the problem of \emph{$f$-incremental (or decremental) vertex-sensitivity (global) connectivity oracle}  as follows.
\newcommand{\On}{\mathrm{on}}
\newcommand{\Off}{\mathrm{off}}
\begin{definition}\label{def:sensitivity}
    An \emph{$f$-incremental (or decremental) vertex-sensitivity (global) connectivity oracle} is initially given an undirected simple graph $G=(V,E)$, a set of \emph{switched-off} vertices $F\subseteq V$, and after some preprocessing, answers the following queries.
    \begin{itemize}
        \item (Incremental) Given a set of vertices $F_{\On}\subseteq F$ with $|F_{\On}|\le f$, the algorithm answers whether $G[(V-F)\cup F_{\On}]$ is connected or not.
        \item (Decremental) Given a set of vertices $F_{\Off}\subseteq V-F$ with $|F_{\Off}|\le f$, the algorithm answers whether $G[V-F-F_{\Off}]$ is connected or not.
    \end{itemize}
\end{definition}

Notice that an $f$-decremental vertex-sensitivity (global) connectivity oracle on an undirected graph $G$ and switched-off vertices $F$ is equivalent to an $f$-vertex cut oracle on $G[V-F]$.
We use the term \emph{sensitivity} here to better align with the incremental setting (this terminology is taken from~\cite{LongW24}).

Notice that according to \Cref{thm:main-cut-theorem}, we can achieve an almost linear preprocessing time and $\tilde{O}(2^f)$ query time for the decremental setting.
The purpose of this section is to show a strong separation of query time between incremental and decremental settings: we will prove that, under the \emph{Strong Exponential Time Hypothesis (SETH)}, we cannot hope for a $f$-incremental oracle with polynomial preprocessing time and $n^{1-\epsilon}$ query time for any constant $\epsilon$, even when $f$ is a constant.

We first give the exact definition of $k$-SAT for clarity.

\newcommand{\cC}{\mathcal{C}}
\begin{definition}[$k$-SAT]\label{def:ksat}
A Boolean formula is said to be in \emph{$k$-CNF form} if it is expressed as a conjunction of clauses $\bigvee_{C\in\cC}C$, where each clause $C$ is a disjunction of exactly $k$ literals (a literal being a variable or its negation). The \emph{$k$-SAT problem} is the decision problem of determining whether there exists a truth assignment to the variables that makes the entire formula true.
\end{definition}

We restate the definition of SETH \Cref{con:seth} as follows.

\SETH*

We are now ready to prove the lower bound for the incremental setting.
Similar ideas can be found in Section 4 of \cite{HenzingerL0W17}.

\newcommand{\cU}{\mathcal{U}}
\newcommand{\bbN}{\mathbb{N}}
\newcommand{\eps}{\epsilon}
\newcommand{\epsu}{\delta}
\newcommand{\epss}{\eps_{s}}
\begin{theorem}\label{lem:increamentalhardness}
    Under SETH, for every constants $t\in\bbN^+,\eps\in(0,1)$, there is a sufficiently large constant $f$ depending on $t,\eps$ such that no $f$-incremental vertex-sensitivity (global) connectivity oracle on an $n$ vertex undirected graph with $O(n^t)$ preprocessing time and $O(n^{1-\epsilon})$ query time exists.
\end{theorem}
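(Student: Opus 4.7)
The plan is to prove the theorem by reducing $k$-SAT to queries of the incremental vertex-sensitivity connectivity oracle, following the SETH-based hardness framework for dynamic graph problems pioneered in~\cite{HenzingerL0W17}. Given constants $t \in \mathbb{N}^+$ and $\epsilon \in (0,1)$, I choose $f$ to be any sufficiently large constant, specifically $f \geq t$. Assuming for contradiction that an $f$-incremental vertex-sensitivity connectivity oracle with $O(n^t)$ preprocessing and $O(n^{1-\epsilon})$ query time exists, I will use it to decide $k$-SAT on $N$ variables in time $O(2^{(1-\delta)N})$ for $\delta := \epsilon/(t+1) > 0$, contradicting SETH.

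Given a $k$-SAT instance $\phi$ on $N$ variables, partition the variables into two groups: $X$ of size $|X| = tN/(t+1)$ and $Y$ of size $|Y| = N/(t+1)$. I construct a graph $G_\phi$ on $n = \Theta(2^{|Y|}) = \Theta(2^{N/(t+1)})$ vertices, together with an initial switched-off set $F \subseteq V(G_\phi)$. The graph contains a vertex $b_\beta$ for each $Y$-assignment $\beta$ (initially on), clause-satisfaction gadgets, a sink, and auxiliary ``selector'' vertices (initially off) that form $F$. The key property of the construction: for every $X$-assignment $\alpha \in \{0,1\}^{|X|}$, there is a specific subset $F_\alpha \subseteq F$ with $|F_\alpha| \leq f$, computable in $\poly(N)$ time, such that $G_\phi[(V(G_\phi) \setminus F) \cup F_\alpha]$ is connected if and only if there exists a $Y$-assignment $\beta$ with $(\alpha, \beta) \models \phi$.

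Given the construction, the SAT algorithm proceeds as follows: build $G_\phi$ and $F$ in $\poly(N)$ time; preprocess the oracle on $(G_\phi, F)$ in $O(n^t) = O(2^{tN/(t+1)})$ time; for each of the $2^{|X|}$ assignments $\alpha$, issue the query $F_\alpha$ in $O(n^{1-\epsilon})$ time, reporting SAT iff some query returns ``connected''. The total time is
\[
O(n^t) + 2^{|X|}\cdot O(n^{1-\epsilon}) \;=\; O(2^{tN/(t+1)}) + O(2^{N(t+1-\epsilon)/(t+1)}) \;=\; O(2^{N(1-\epsilon/(t+1))}),
\]
contradicting SETH with $\delta = \epsilon/(t+1)$.

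The main obstacle is realizing the graph $G_\phi$ together with the encoding $\alpha \mapsto F_\alpha$ so that a single $f$-incremental connectivity query correctly decides the extension-SAT question ``$\exists \beta:(\alpha,\beta) \models \phi$''. Combinatorially, the encoding is feasible because $\binom{n}{f} \geq n^f \geq 2^{|X|}$ whenever $f \geq t$, so there is enough room to assign a distinct $f$-subset to every $X$-assignment. The hard part is designing the graph topology: we employ clause-satisfaction gadgets (one per clause) whose connectivity to the main component is controlled jointly by which $b_\beta$ vertices exist in the main component (capturing candidate $Y$-assignments) and by which selector vertices are activated by $F_\alpha$ (capturing the $X$-assignment). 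The encoding is obtained by concatenating $f$ identifiers of $O(\log n)$ bits each into an $|X|$-bit string that indexes $\alpha$, with matching selector gadgets built during preprocessing. Making all this compatible with a constant per-query vertex budget is the technical crux, and is handled by adapting the gadgetry in~\cite{HenzingerL0W17}.
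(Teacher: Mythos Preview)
Your framework and running-time arithmetic are correct, but the construction of $G_\phi$ and the encoding $\alpha \mapsto F_\alpha$ are the entire substance of the proof, and you have not actually given them; ``adapting the gadgetry in~\cite{HenzingerL0W17}'' is a citation, not an argument. The paper fills this gap with two ingredients absent from your sketch. First, it applies the Sparsification Lemma to reduce to a formula with $O(N)$ clauses; this is what allows partitioning the \emph{clauses} (not the $X$-variables) into a constant number $f$ of groups of size $\delta N$ each, and is why $f$ depends on the sparsification constant rather than merely satisfying $f \geq t$. Second, the turned-on vertices carry semantic content: for each clause group $\cC_i$ there is one switched-off vertex per \emph{subset} of $\cC_i$, and the outer assignment $U'$ turns on, in each group, the vertex representing exactly the clauses of $\cC_i$ \emph{not satisfied} by $U'$. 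An inner-assignment vertex $U$ is then adjacent to a turned-on vertex iff $U$ fails some remaining clause, so $U$ becomes isolated precisely when $U \cup U'$ is a satisfying assignment (note: disconnected $\Leftrightarrow$ satisfiable, opposite to your stated direction).

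Your hints do not substitute for this mechanism. ``Concatenating $f$ identifiers of $O(\log n)$ bits'' yields an arbitrary bijection between $X$-assignments and $f$-subsets that carries no information about which clauses $\alpha$ satisfies; a graph fixed at preprocessing time has no way to decode such an $F_\alpha$ into clause-satisfaction data. And with ``one gadget per clause,'' a single global-connectivity bit cannot express ``$\exists \beta$ satisfying \emph{all} clauses simultaneously,'' since different clauses may be satisfied by different $\beta$'s and connectivity cannot enforce a common witness. The counting observation that there are at least $2^{|X|}$ many $f$-subsets of $F$ is beside the point: the question is not whether enough $f$-subsets exist, but whether the graph's edge structure can interpret them.
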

\begin{proof}
    Suppose $\cO$ is such an oracle stated in \Cref{lem:increamentalhardness}. We let
    \[\eps_t:=\frac{\eps}{4t}\qquad\qquad k=k\left(\eps_t\right)\]
    where $k$ is the function stated in \Cref{con:seth}. It will be clear from the analysis why we set $k$ in this way. 
    
    We will derive an algorithm for $k$-SAT with $N$ variables using $\cO$ in $O(2^{(1-\eps_t)N})$ time, which violates \Cref{con:seth}. 

    We will use the following sparsification lemma. It is a useful tool for solving SAT as it shows that any formula can be reduced to a small amount of formula with linear (on $N$) number of clauses. 

    \begin{lemma}[Sparsification Lemma, \cite{ImpagliazzoPZ01}]\label{lem:sparsification}
        For any $\epss > 0$ and $k \in \mathbb{N}$, there exists a constant  $c = c(\epss, k)$ 
        such that any $k$-SAT formula $F$ with $N$ variables can be expressed as $\Phi = \bigvee_{i=1}^{\ell} \Phi_i$, 
        where  
        $\ell = O\left(2^{\epss N}\right)$,
        and each $\Phi_i$ is a $k$-SAT formula with at most $cN$ clauses. Moreover, this disjunction can be computed in time 
        $O\left(2^{\epss N} \operatorname{poly}(N)\right)$.
    \end{lemma}

    \paragraph{Algorithm.} 
    We use the sparsification lemma \Cref{lem:sparsification} with 
    $\epss=\frac{\eps}{6t}$
    to get 
    $\ell=O\left(2^{\epss N}\right)$
    formulas, each has $c(\epss,k)\cdot N$ many clauses. It suffices to check if there exists an assignment such that one of these formulas can be satisfied.
    Now, we focus on one of these formulas.
    
    Suppose the variable set is $\cU$ and the formula has the clause set $\cC$ where $|\cC|= c(\epss,k)\cdot N$.
    Let $\cU'\subseteq\cU$ be an arbitrary subset of variables of size $\epsu N$ where
    $\delta:=\frac{1}{2t}$.
    Split the clause set $\cC$ to
    \[f:=\frac{c(\epss,k)}{\delta}\]
    many clause sets, each of size $\delta N$. Denote them as $\cC=\cC_1\cup \cC_2...\cup \cC_f$.
    
    Construct a graph $G$ with vertex set
    \[V=2^{\cU'}\cup\left(\bigcup_{i\in[f]}2^{\cC_i}\right) .\]
    Each vertex in $U \in 2^{\cU'}$ represents a truth assignment to the variables in $\cU'$ (variables in $U$ are set to `true', and the remaining variables $\cU' - U$ are set to `false').
    Each vertex in $2^{\cC_i}$ represents a clause subset of $\cC_i$. 

    The edge set $E$ of $G$ is defined as follows. For every assignment $U\in 2^{\cU'}$ and for every $i\in[f],C\in 2^{\cC_i}$, connect an edge from $U$ to $C$ iff at least one clause in $C$ is \emph{not} satisfied by $U$. Moreover, the edge set $E$ contains edges forming a clique on $\cup_{i\in[f]}2^{\cC_i}$.

    We initialize the oracle $\cO$ on $G$ with the switched off vertices being
    \[F:=\bigcup_{i\in[f]}2^{\cC_i} .\]
    
    For every assignment $U'\subseteq \cU-\cU'$ to the variables outside $\cU'$, let $\cC_{i,U'}$ be the subset of clauses in $\cC_i$ that are \emph{not} satisfied by $U'$.
    Notice that $\cC_{i,U'}$ is a vertex in $2^{\cC_i}$.
    We make a query using $\cO$ with 
    \[F_{\On}:=\{\cC_{i,U'}\mid i\in[f]\}.\]
    
    If one of these queries corresponding to some  $U'\subseteq \cU-\cU'$ returns `not connected', then we return `can satisfy'. Otherwise, we return `cannot satisfy'.

    \paragraph{Correctness.} 
    For the first direction, suppose there exists an assignment $U^*\subseteq \cU$ satisfying one of these sparsified $k$-SATs. Let $U^*_1\subseteq \cU'$ be the assignment restricted to $\cU'$ and $U^*_2\subseteq \cU-\cU'$ be the assignment restricted to $\cU-\cU'$. Consider the query corresponding to $U^*_2$. Remember that $\cC_{i,U^*_2}$ are the clauses in $\cC_i$ that are not satisfied by $U^*_2$. Since $U^*=U^*_1\cup U^*_2$ satisfies all clauses, it must be that $U^*_1$ satisfies all clauses in $\cC_{i,U^*_2}$. According to the definition of the edge set, there is no edge from $U^*_1$ to $\cC_{i,U^*_2}$ for every $i\in[f]$. Thus, $U^*_1$ becomes a singleton vertex, so the graph $G[(V-F)\cup F_{\On}]$ is not connected on the query $F_{\On}=\{\cC_{i,U^*_2}\mid i\in[f]\}$.

    In the other direction, suppose the algorithm returns `can satisfy'.
    It means that for a query that corresponds to some assignment $U'\subseteq \cU-\cU'$, the graph is not connected. Remember that $\bigcup_{i\in[f]}2^{\cC_i}$ is a clique, hence so are the `turned-on' vertices $F_{\On}=\{\cC_{i,U'}\mid i\in[f]\}$. Thus, if every vertex in $2^{\cU'}$ is connected to $F_{\On}$, the graph $G[(V-F)\cup F_{\On}]$ is connected. So there must be a vertex $U\in 2^{\cU'}$ such that $U$ has no edge to $F_{\On}$. According to the definition of edge set, this means that $U$ satisfies all clauses in $\cC_{i,U'}$ for every $i\in[f]$. Moreover, $U'$ satisfies all clauses in $\cC_i-\cC_{i,U'}$ according to the definition. Thus, $U\cup U'$ satisfies all clauses in $\cC_i$ for every $i\in[f]$, so $\cC$ can be satisfied.

    \paragraph{Running time.} For convenience, we use $O^*(\cdot)$ to hide polynomial factors on $N$. 
    The sparsification lemma \Cref{lem:sparsification} takes time $O^*(2^{\epss N})$. Then we get $\ell$ formulas, and we solve each formula independently; the final running time should be multiplied by $\ell=O(2^{\epss N})$.
    For each formula, we construct a graph with number of vertices
    \[n=2^{\epsu N}+f\cdot 2^{\epsu N}=O(2^{\epsu N})\]
    and at most
    and number of edges at most
    \[m\le n^2=O(2^{2\epsu N}).\]
    Notice that each edge can be checked in at most polynomial time on $N$. 
    Then, we initialize $\cO$ on the graph $G$, which takes a preprocessing time of 
    \[n^t\le O(2^{t\epsu N})\le O(2^{N/2}).\]
    Then, for each subset of $\cU-\cU'$ we do a query. The number of queries is $2^{(1-\epsu)N}$, each query takes time $O(n^{1-\eps})=O(2^{\epsu(1-\eps) N})$.
    We thus bound the total running time by
    \[O^*\left(2^{\epss N}+2^{\epss N}\cdot \left(2^{2\epsu N}+\underbrace{2^{N/2}+2^{(1-\epsu)N}\cdot 2^{\epsu(1-\eps) N} }_{\text{data structure }}\right)\right).\]
    Plugging in $\epss=\eps/6t,\epsu=1/2t,\eps_t=\eps/4t$, the running time is 
    \[O^*\left(2^{\left(1-\frac{\eps}{3t}\right)N}\right)\le o(2^{(1-\eps_t)N})\]
    which violates SETH (\Cref{con:seth}).

    Notice that $f$ is defined as 
    \[f=\frac{c(\epss,k)}{\epsu}=\frac{c(\eps/6t,k(\eps/4t))}{1/2t},\]
    which is a constant depending on $\eps,t$ (recall that $c$ and $k$ are the fixed functions in \Cref{con:seth,lem:sparsification}).
\end{proof}
\end{appendix}

\end{document}